\def\final{1}
\definecolor{DarkGreen}{rgb}{0.1,0.5,0.1}
\definecolor{DarkRed}{rgb}{0.5,0.1,0.1}
\definecolor{DarkBlue}{rgb}{0.1,0.1,0.5}
\newcommand{\mynote}[1]{\marginpar{\tiny \sf #1}}
\newcommand{\mynote}[1]{}
\newcommand{\mnote}[1]{\mynote{Mark: {#1}}}
\newcommand{\knote}[1]{\mynote{Kobbi: {#1}}}
\newcommand{\remove}[1]{}
\newcommand{\Lap}{\operatorname{\rm Lap}}
\newcommand{\AAA}{\mathcal A}
\newcommand{\PPP}{\mathcal P}
\newcommand\N{\mathbb{N}}
\newcommand{\cA}{\mathcal{A}}
\newcommand{\cC}{\mathcal{C}}
\newcommand{\cD}{\mathcal{D}}
\newcommand{\cF}{\mathcal{F}}
\newcommand{\cP}{\mathcal{P}}
\newcommand{\cS}{\mathcal{S}}
\newcommand{\eps}{\epsilon}
\newcommand{\poly}{\mathrm{poly}}
\newcommand{\bits}{\{0,1\}}
\newcommand{\getsr}{\gets_{\mbox{\tiny R}}}
\newcommand{\prob}[1]{\mathrm{Pr}\left[#1\right]}
\newtheorem{theorem}{Theorem}[section]
\newtheorem{lemma}[theorem]{Lemma}
\newtheorem{remark}[theorem]{Remark}
\newtheorem{corollary}[theorem]{Corollary}
\newtheorem{proposition}[theorem]{Proposition}
\newtheorem{observation}[theorem]{Observation}
\newtheorem{example}[theorem]{Example}
\theoremstyle{definition}
\newtheorem{definition}[theorem]{Definition}
\newcommand{\vcdim}{\operatorname{VC}}
\newcommand{\error}{{\rm error}}
\newcommand{\VC}{\operatorname{\rm VC}}
\newcommand{\RepDim}{\operatorname{\rm RepDim}}
\newcommand{\point}{\operatorname{\tt POINT}}
\newcommand{\thresh}{\operatorname*{\tt THRESH}}
\newcommand{\parity}{\operatorname{\tt PAR}}
\newcommand{\OPT}{\operatorname{\rm OPT}}
\def\poly{\mathop{\rm{poly}}\nolimits}
\newcommand{\adist}{\mathcal{A}_{\rm dist}}
\newcommand{\gap}{\operatorname{gap}}
\newcommand{\gen}{\operatorname{Gen}}
\newcommand{\trace}{\operatorname{Trace}}
\newcommand{\fpadv}{\mathcal{A}_{FP}}
\renewcommand{\sec}{\xi}
\newcommand{\codebook}{W}
\newcommand{\codeword}{w}
\newcommand{\pirateword}{w'}
\title{Simultaneous Private Learning of Multiple Concepts}
\author{Mark Bun\thanks{John A. Paulson School of Engineering \& Applied Sciences, Harvard University. Supported by an NDSEG fellowship and NSF grant CNS-1237235. Work done in part while the author was visiting Yale University. \texttt{mbun@seas.harvard.edu}}
\quad 
Kobbi Nissim\thanks{Dept.\ of Computer Science, Ben-Gurion University {\em and} Center for Research on Computation \& Society (CRCS), Harvard University. Supported by NSF grant CNS-1237235, a gift from Google, Inc., a Simons Investigator grant, and ISF grant 276/12. \texttt{kobbi@cs.bgu.ac.il}, \texttt{kobbi@seas.harvard.edu}} 
\quad  
Uri Stemmer\thanks{Dept.\ of Computer Science, Ben-Gurion University. Supported by the Ministry of Science and Technology (Israel), by the IBM PhD Fellowship Awards Program, and by the Frankel Center for Computer Science. \texttt{stemmer@cs.bgu.ac.il}}}
\begin{document}

\maketitle

\begin{abstract}

We investigate the {\em direct-sum} problem in the context of differentially private PAC learning: What is the sample complexity of solving $k$ learning tasks {\em simultaneously} under differential privacy, and how does this cost compare to that of solving $k$ learning tasks without privacy? In our setting, an individual example consists of a domain element $x$ labeled by $k$ unknown concepts $(c_1,\ldots,c_k)$. The goal of a {\em multi}-learner is to output $k$ hypotheses $(h_1,\ldots,h_k)$ that generalize the input examples.

Without concern for privacy, the sample complexity needed to simultaneously learn $k$ concepts is essentially the same as needed for learning a single concept. Under differential privacy, the basic strategy of learning each hypothesis independently yields sample complexity that grows polynomially with $k$. For some concept classes, we give multi-learners that require fewer samples than the basic strategy. Unfortunately, however, we also give lower bounds showing that even for very simple concept classes, the sample cost of private multi-learning must grow polynomially in $k$.

\end{abstract}

\thispagestyle{empty}

\vfill

\noindent \textbf{Keywords}: differential privacy, PAC learning, agnostic learning, direct-sum

\newpage

\setcounter{page}{1}

\section{Introduction}

The work on {\em differential privacy}~\cite{DMNS06} is aimed at providing useful analyses on privacy-sensitive data while providing strong individual-level privacy protection. One family of such analyses that has received a lot of attention is PAC learning~\cite{Valiant84}. These tasks abstract many of the computations performed over sensitive information~\cite{KLNRS08}.

We address the {\em direct-sum} problem -- what is the cost of solving multiple instances of a computational task simultaneously as compared to solving each of them separately? -- in the context of differentially private PAC learning. In our setting, individual examples are drawn from domain $X$ and labeled by $k$ unknown concepts $(c_1,\ldots,c_k)$ taken from a concept class $C=\left\{c:X\rightarrow \{0,1\}\right\}$, i.e., each example is of the form
$ (x,y_1,\ldots,y_k)$, where $x\in X$ and $y_i=c_i(x)$.  
The goal of a multi-learner is to output $k$ hypotheses $(h_1,\ldots,h_k)$ that generalize the input examples while preserving the privacy of individuals.

The direct-sum problem has its roots in complexity theory, and is a basic problem for many algorithmic tasks. It also has implications for the practical use of differential privacy. Consider, for instance, a hospital that collects information about its patients and wishes to use this information for medical research. The hospital records for each patient a collection of attributes such as age, sex, and the results of various diagnostic tests (for each patient, these attributes make up a point $x$ in some domain $X$) and, for each of $k$ diseases, whether the patient suffers from the disease (the $k$ labels $(y_1,\ldots,y_k)$). Based on this collection of data, the hospital researchers wish to learn good predictors for the $k$ diseases. One option for the researchers is to perform each of the learning tasks on a fresh sample of patients, hence enlarging the number of patient examples needed (i.e.\ the {\em sample complexity}) by a factor of $k$, which can be very costly. 

Without concern for privacy, the sample complexity that is necessary and sufficient for performing the $k$ learning tasks is actually fully characterized by the VC dimension of the concept class $C$  -- it is independent of the number of learning tasks $k$. In this work, we set out to examine if the situation is similar when the learning is performed with differential privacy. Interestingly, we see that with differential privacy the picture is quite different, and in particular, the required number of examples can grow polynomially in $k$.

\paragraph{Private learning.}

A {\em private learner} is an algorithm that is given an sample of labeled examples $(x,c(x))$ (each representing the information and label pertaining to an individual) and outputs a generalizing hypothesis $h$ that guarantees differential privacy with respect to its examples. The first differentially private learning algorithms were given by Blum et al.~\cite{BDMN05} and the notion of {\em private learning} was put forward and formally researched by Kasiviswanathan et al.~\cite{KLNRS08}. Among other results, the latter work presented a generic construction of differentially private learners with sample complexity $O(\log|C|)$.

In contrast, the sample complexity of (non-private) PAC learning is $\Theta(\VC(C))$, which can be much lower than $\log|C|$ for specific concept classes. This gap led to a line of work examining the sample complexity of private learning, which has revealed a significantly more complex picture than there is for non-private learning. In particular, for {\em pure} differentially private learners, it is known that the sample complexity of proper learning (where the learner returns a hypothesis $h$ taken from $C$) is sometimes higher than the sample complexity of improper learners (where $h$ comes from an arbitrary hypothesis class $H$). The latter is characterized by the {\em representation dimension} of the concept class $C$, which is generally higher than the VC dimension~\cite{BKN10,BBKN14,CH11,BNS13a,FX14}. By contrast, a sample complexity gap between proper and improper learners does not exist for non-private learning. In the case of {\em approximate} differential privacy no such combinatorial characterization is currently known. It is however known that the sample complexity of such learners can be significantly lower than that of pure-differentially private learners and yet higher than the VC dimension of $C$~\cite{BNS13a,FX14,BNSV15}. Furthermore, there exist (infinite) PAC-learnable concept classes for which no differentially private learner (pure or approximate) exists.

\paragraph{Private multi-learning.} In this work we examine the sample complexity of private multi-learning. Our work is motivated by the recurring research theme of the direct-sum, as well as by the need to understand whether multi-learning remains feasible under differential privacy, as it is without privacy constraints.

At first glance, private multi-learning appears to be similar to the query release problem, the goal of which is to approximate the average values of a large collection of predicates on a dataset. One surprising result in differential privacy is that it is possible to answer an exponential number of such queries on a dataset~\cite{BLR08,RR10,HR10}. For example, Blum, Ligett, and Roth~\cite{BLR08} showed  that given a dataset $D$ and a concept class $C$, it is possible to generate with differential privacy a dataset $\hat{D}$ such that the average value of $c$ on $D$ approximates the average of $c$ on $\hat{D}$ for every $c \in C$ {\em simultaneously}.
%$$\frac{|\{i:x_i\in D: c(x_i)=1\}|}{|D|} \approx \frac{|\{i:\hat x_i\in \hat D: c(\hat x_i)=1\}|}{|\hat D|}\quad \mbox{for all}~c\in C.$$ 
The sample complexity required, i.e., the size of the database $D$, to perform this sanitization is only logarithmic in $|C|$. Results of this flavor suggest that we can also learn exponentially many concepts simultaneously. However, we give negative results showing that this is not the case, and that multi-learning can have significantly higher sample complexity than query release.

\subsection{Our results}

Prior work on privately learning the simple concept classes $\point_{X}$ (of functions that evaluate to $1$ on exactly one point of their domain $X$ and to $0$ otherwise) and $\thresh_{X}$ (of functions that evaluate to $1$ on a prefix of the domain $X$ and to $0$ otherwise) has demonstrated a rather complex picture, depending on whether learners are proper or improper, and whether learning is performed with pure or approximate differential privacy~\cite{BKN10,BBKN14,BNS13a,BNS13b,BNSV15}. We analyze the sample complexity of multi-learning of these simple concept classes, as well as general concept classes. We also consider the class $\parity_d$ of parity functions, but in this case we restrict our attention to uniformly selected examples. We examine both proper and improper PAC and agnostic learning under pure and approximate differential privacy. For ease of reference, we include tables with our results in Section~\ref{sec:ourresults}, where we omit the dependency on the privacy and accuracy parameters. 

\paragraph{Techniques for private $k$-learning.}
Composition theorems for differential privacy show that the sample complexity of learning $k$ concepts simultaneously is at most a factor of $k$ larger than the sample complexity of learning one concept (and may be reduced to $\sqrt{k}$ for approximate differential privacy). Unfortunately, privately learning one concept from a concept class $C$ can sometimes be quite costly, requiring much higher sample complexity than $\VC(C)$ which is needed to learn non-privately. Building on techniques of Beimel, Nissim, and Stemmer \cite{BNS15}, we show that the multiplicative dependence on $k$ can always be reduced to the VC-dimension of $C$, at the expense of producing a one-time sanitization of the dataset.

\begin{theorem}[Informal] \label{thm:generic-informal}
Let $C$ be a concept class for which there is pure differentially private sanitizer for $C^\oplus = \{f \oplus g : f, g \in C\}$ with sample complexity $m$. Then there is an pure differentially private agnostic $k$-learner for $C$ with sample complexity $O(m + k \cdot \VC(C))$.

Similarly, if $C^\oplus$ has an approximate differentially private sanitizer with sample complexity $m$, then there is an approximate differentially private agnostic $k$-learner for $C$ with sample complexity $O(m + \sqrt{k} \cdot \VC(C))$.
\end{theorem}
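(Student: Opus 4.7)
The plan is to follow the two-phase scheme of Beimel, Nissim, and Stemmer~\cite{BNS15}: first run the assumed sanitizer \emph{once} on the unlabeled marginal of the dataset to reduce the effective hypothesis space to a small finite subset $H \subseteq C$, and then, for each of the $k$ labeling tasks, use the exponential mechanism to pick a good hypothesis from $H$, composing privacy loss across the $k$ selections. Concretely, in the first phase I feed $(x_1,\ldots,x_n)$ into the $C^\oplus$-sanitizer to obtain synthetic data $\hat D$ satisfying $\bigl|\Pr_{\hat D}[f(x) \neq g(x)] - \Pr_D[f(x) \neq g(x)]\bigr| \leq \alpha$ for every $f,g \in C$. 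Using $\hat D$ as a guide (post-processing, so no further privacy cost), I construct an $\alpha$-cover $H \subseteq C$ under the pseudo-metric $d_{\hat D}(f,g) = \Pr_{\hat D}[f(x) \neq g(x)]$. Haussler's packing bound applied to a class of VC dimension $\VC(C)$ gives $|H| \leq (1/\alpha)^{O(\VC(C))}$, hence $\log|H| = O(\VC(C)\log(1/\alpha))$. Transferring back through the sanitizer guarantee and VC uniform convergence yields that $H$ is also an $O(\alpha)$-cover of $C$ with respect to the true marginal on $x$.

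For the second phase, for each task $i \in [k]$ run the exponential mechanism over $H$ scored by negated empirical error $-\frac{1}{n}\bigl|\{j : h(x_j) \neq y_{i,j}\}\bigr|$. The score has sensitivity $1/n$ with respect to changing a single full row $(x_j, y_{1,j}, \ldots, y_{k,j})$, so each selection is $(\eps',0)$-DP. Setting $\eps' = \eps/k$ and invoking basic composition delivers pure $\eps$-DP; setting $\eps' = \eps/\sqrt{k\log(1/\delta)}$ and invoking advanced composition delivers approximate $(\eps,\delta)$-DP. The exponential mechanism returns $h_i$ with empirical error within $O(\log|H|/(n\eps'))$ of the best-in-$H$ empirical error, so requiring this slack to be at most $\alpha$ and substituting $\log|H| = \tilde O(\VC(C))$ gives the claimed budgets $n = m + \tilde O(k \cdot \VC(C))$ in the pure case and $n = m + \tilde O(\sqrt{k} \cdot \VC(C))$ in the approximate case.

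To certify agnostic accuracy for task $i$, let $f_i^* \in C$ be the concept minimizing the true error $\OPT_i$ and let $h_i^\dagger \in H$ be its cover element; by the triangle inequality $\{h_i^\dagger \neq y_i\} \subseteq \{h_i^\dagger \neq f_i^*\} \cup \{f_i^* \neq y_i\}$, so $\error_i(h_i^\dagger) \leq \OPT_i + O(\alpha)$. Combining this with the exponential-mechanism near-optimality and uniform convergence over the finite class $H$ (which requires only $\tilde O(\log|H|/\alpha^2) = \tilde O(\VC(C)/\alpha^2)$ samples, absorbed in the Phase-2 budget) shows that the returned $h_i$ satisfies $\error_i(h_i) \leq \OPT_i + O(\alpha)$, as required for agnostic learning. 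The main delicate step is passing the sanitizer's unlabeled accuracy all the way to labeled agnostic accuracy along the chain \emph{sanitizer error $\to$ cover accuracy on $\hat D$ $\to$ cover accuracy on the true marginal $\to$ agnostic error via triangle inequality}, and choosing a single setting of $\alpha$, $\eps'$, and $n$ that simultaneously delivers cover quality, generalization over both $C$ and $H$, and selection accuracy for all $k$ tasks; no new technical ideas beyond those in~\cite{BNS15} appear to be needed.
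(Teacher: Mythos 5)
Your proposal is correct and follows essentially the same route as the paper's proof (Lemma~\ref{lem:GenericLearner} and Algorithm~\ref{alg:genericPrivate}): sanitize the unlabeled data once with respect to $C^\oplus$, collapse $C$ to a finite class $H$ with $\log|H| = \tilde{O}(\VC(C))$, and then select one hypothesis per task via the exponential mechanism under basic or advanced composition, with the same sanitizer--cover--triangle-inequality--generalization chain for agnostic accuracy. The only cosmetic difference is that you build $H$ as an $\alpha$-cover via Haussler's packing bound, whereas the paper takes one representative of each dichotomy of $C$ on the support of the sanitized database; both yield the same sample complexity up to factors in $\alpha$.
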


The best known general-purpose sanitizers require sample complexity $m = O(\VC(C) \log |X|)$ for pure differential privacy \cite{BLR08} and $m = O(\log |C| \sqrt{\log |X|})$ for approximate differential privacy \cite{HR10}. However, for specific concept classes (such as $\point_{X}$ and $\thresh_{X}$), the sample complexity of sanitization can be much lower.

In the case of approximate differential privacy, the sample complexity of $k$-learning can be even lower than what is achievable with our generic learner. Using stability-based arguments, we show that point functions and parities under the uniform distribution can be PAC $k$-learned with sample complexity $O(\VC(C))$ -- independent of the number of concepts $k$ (see Theorems \ref{thm:pointsUpper} and \ref{thm:paritiesUniformUpper}).

\paragraph{Lower bounds.}
In light of the above results, one might hope to be able to reduce the dependence on $k$ further, or to eliminate it entirely (as is possible in the case of non-private learning). We show that this is not possible, even for the simplest of concept classes. In the case of pure differential privacy, a packing argument \cite{FFKN09, HT10, BKN10} shows that any non-trivial concept class requires sample complexity $\Omega(k)$ to privately $k$-learn (Theorem \ref{thm:OmegaK}). For approximate differential privacy, we use fingerprinting codes \cite{BS98, BUV14} to show that unlike points and parities, threshold functions require sample complexity $\tilde{\Omega}(k^{1/3})$ to PAC learn privately (Corollary \ref{cor:thresholds}). Moreover, any non-trivial concept class requires sample complexity $\tilde{\Omega}(\sqrt{k})$ to privately learn in the agnostic model (Theorem \ref{thm:fpclb-agnostic}). In the case of point functions, this matches the upper bound achievable by our generic learner.

\medskip
\noindent We highlight a few of the main takeaways from our results:
\smallskip

\paragraph{A complex answer to the direct sum question.} Our upper bounds show that solving $k$ learning problems simultaneously  can require substantially lower sample complexity than solving the problems individually. On the other hand, our lower bounds show that a significant dependence on $k$ is generally necessary. 
\paragraph{Separation between private PAC and private agnostic learning.} Non-privately, the sample complexities of PAC and agnostic learning are of the same order (differing only in the dependency in the accuracy parameters). 
Beimel et al.~\cite{BNS15} showed that this is also the case with differentially private learning (of one concept).
Our results on learning point functions show that private PAC and agnostic {\em multi-learning} can be substantially different (even for learning up to constant error). In the case of approximate differential privacy, $O(1)$ sample suffice to PAC-learn multiple point functions. However, $\tilde{\Omega}(\sqrt{k})$ samples are needed to learn $k$ points agnostically. 
\paragraph{Separation between improper learning with approximate differential privacy and non-private learning.} Recently, Bun et al. \cite{BNSV15} showed that the sample complexity of learning one threshold function with approximate differential privacy exceeds the VC dimension, but only in the case of \emph{proper} learning. Thus it remains possible that improper learning with approximate differential privacy can match the sample complexity of non-private learning. While we do not address this question directly, we exhibit a separation for multi-learning. In particular, learning $k$ thresholds with approximate differential privacy requires $\tilde{\Omega}(k^{1/3})$ samples, even improperly, while $O(1)$ samples suffices non-privately.

%\unote{Maybe emphasize the fact that currently no separation is known between non-private learning and improper learning with $(\eps,\delta)$-privacy. In the $k$-learning model we can prove a separation.}

%\mnote{General $C$ lower bounds are witnessed by specific lower bounds elsewhere.}

\subsection{Related work}

Differential privacy was defined in~\cite{DMNS06} and the relaxation to approximate differential privacy is from~\cite{DKMMN06}. Most related to our work is the work on private learning and its sample complexity~\cite{BDMN05,KLNRS08,CH11,DRV10,BBKN14,BNS13a,BNS13b,FX14,BNS15,BNSV15} and the early work on sanitization~\cite{BLR08}. 
That many ``natural'' learning tasks can be performed privately was shown in the early work of Blum et al.~\cite{BDMN05} and Kasiviswanathan et al.~\cite{KLNRS08}. A characterization for the sample complexity of {\em pure-private} learners was given in~\cite{BNS13a}, in terms of a new combinatorial measure -- the {\em Representation Dimension}, that is, given a class $C$, the number of samples needed and sufficient for privately learning $C$ is $\Theta(\RepDim(C))$. Building on~\cite{BNS13a}, Feldman and Xiao~\cite{FX14} showed an equivalence between the representation dimension of a concept $C$ and the randomized one-way communication complexity of the evaluation problem for concepts from $C$. Using this equivalence they separated the sample complexity of pure-private learners from that of non-private ones.

\remove{
Dwork et al.~\cite{DRV10} showed how to boost the accuracy of private learning algorithms. That is, given a {\em private} learning algorithm that
has a big classification error, they produced a {\em private} learning algorithm with small error.
Other tools for private learning include, e.g., private logistic regression~\cite{CM08}, and private empirical risk minimization~\cite{CMS11}. \knote{not sure we need this paragraph}
}

The problem of learning multiple concepts simultaneously (without privacy) has been considered before. Motivated by the problem of bridging computational learning and reasoning, Valiant~\cite{Valiant06} also observed that (without privacy) multiple concepts can be learned from a common dataset in a data efficient manner.

%\knote{Valiant: http://people.seas.harvard.edu/~valiant/AAAI06.pdf}

\subsection{Tables of results}
\label{sec:ourresults}

The following tables summarize the results of this work. In the tables below $C$ is a class of concepts (i.e., predicates) defined over domain $X$. 
Sample complexity upper and lower bounds is given in terms of $|C|$ and $|X|$. Note that for $\point_{X}$, $\thresh_{X}$, and $\parity_d$ we have $|C|=\Theta(|X|)$.

Where not explicitly noted, upper bounds hold for the setting of agnostic learning and lower bounds are for the (potentially easier) setting of PAC learning. Similarly, where not explicitly noted, upper bounds are for proper learning and lower bounds are for the (less restrictive) setting of improper learning. For simplicity, these tables hide constant and logarithmic factors, as well as dependencies on the learning and privacy parameters.

\paragraph{Multi-learning with pure differential privacy.}\ \\

\smallskip

\noindent Upper bounds:
\begin{center}
\begin{tabular}{|c | c | c | c | c | c |}
\hline
& \multicolumn{2}{|c|}{PAC learning} & \multicolumn{2}{|c|}{Agnostic learning} & \\
$C$ & proper & improper & proper & improper & References\\ 
\hline
$\point_{X}$ & $k + \log |C|$  & $k$ & $k + \log |C|$ &  $k$ & Thm. \ref{thm:GeneralUpper}, Cor. \ref{cor:pureImproperPoint}\\ 
\hline
$\thresh_{X}$ & \multicolumn{4}{|c|}{$k + \log |C|$}  & Thm. \ref{thm:GeneralUpper}\\ 
\hline
General & \multicolumn{4}{|c|}{$\min\{k \log|C|, k \VC(C) + \log|X| \VC(C)\}$} & Thm. \ref{thm:GeneralUpper}\\ 
\hline\hline
$\parity_{d}$ (uniform)	& \multicolumn{4}{|c|}{$k \log |C|$} & Thm. \ref{thm:GeneralUpper}\\ 
\hline
\end{tabular}
\end{center}
\noindent
Lower bounds:

\begin{center}
\begin{tabular}{|c | c | c | c | c | c |}
\hline
& \multicolumn{2}{|c|}{PAC learning} & \multicolumn{2}{|c|}{Agnostic learning} & \\
$C$ & proper & improper & proper & improper & References \\ 
\hline
$\point_{X}$ & $k + \log |C|$ & $k$ & $k + \log |C|$ &  $k$ & Thm. \ref{thm:OmegaK}, \cite{BKN10}\\ 
\hline
$\thresh_{X}$ & \multicolumn{4}{|c|}{$k + \log |C|$} & Thm. \ref{thm:OmegaK}, \cite{BKN10, FX14}  \\ 
\hline\hline
$\parity_{d}$ (uniform)	& \multicolumn{4}{|c|}{$k \log |C|$} & Thm. \ref{thm:paritiesLowerBound} \\ 
\hline
\end{tabular}
\end{center}

\paragraph{Multi-learning with approximate differential privacy.} \ \\

\smallskip

\noindent 
Upper bounds:
\begin{center}
\begin{tabular}{|c | c | c |}
\hline
 & PAC learning & Agnostic  learning \\
$C$ & (proper and improper) & (proper and improper) \\
\hline
$\point_{X}$ & $1$ \; (Thm. \ref{thm:pointsUpper}) & $\sqrt{k}$ \; (Cor. \ref{cor:pointsAgnosticUpper})\\ 
\hline
$\thresh_{X}$ & \multicolumn{2}{|c|}{$2^{\log^*|X|} + \sqrt{k}$  \; (Cor. \ref{cor:thresholdsAgnosticUpper})} \\
\hline
General $C$ & \multicolumn{2}{|c|}{$\min\{\sqrt{k} \log|C|,  \sqrt{k} \VC(C) + \log|X| \VC(C), \sqrt{k} \VC(C) + \sqrt{\log|X|} \log|C|\}$ \; (Thm. \ref{thm:GeneralUpper})} \\ 
\hline\hline
$\parity_{d}$ (uniform)	& $\log |C|$ \; (Thm. \ref{thm:paritiesUniformUpper}) & $\sqrt{k}\log |C|$ \; (Thm. \ref{thm:GeneralUpper})\\ 
\hline
\end{tabular}
\end{center}

\newpage

\noindent
Lower bounds:
\begin{center}
\begin{tabular}{|c | c | c | c | c | c |}
\hline
& \multicolumn{2}{|c|}{PAC learning} & \multicolumn{2}{|c|}{Agnostic learning} &  \\
$C$ & proper & improper & proper & improper & References \\ 
\hline
$\point_{X}$ &\multicolumn{2}{|c|}{1} & \multicolumn{2}{|c|}{$\sqrt{k}$} & Cor. \ref{cor:OmegaSqrtK}\\ 
\hline
$\thresh_{X}$ & $\log^*|X| + k^{1/3}$ & $k^{1/3}$ &  $\log^*|X| + \sqrt{k}$ & $\sqrt{k}$ & Cor. \ref{cor:thresholds}, Cor. \ref{cor:OmegaSqrtK}, \cite{BNSV15} \\ 
\hline\hline
$\parity_{d}$ (uniform)	&\multicolumn{2}{|c|}{$\log|C|$}  & \multicolumn{2}{|c|}{$\sqrt{k} + \log|C|$} & Cor. \ref{cor:OmegaSqrtK}) \\ 
\hline
\end{tabular}
\end{center}

\section{Preliminaries}
\label{sec:preliminaries}

We recall and extend standard definitions from learning theory and differential privacy.

\subsection{Multi-learners}

In the following $X$ is some arbitrary domain. A concept (similarly, hypothesis) over domain $X$ is a predicate defined over $X$. A concept class (similarly, hypothesis class) is a set of concepts. 

\begin{definition}[Generalization Error]
Let $\cP\in \Delta(X\times\{0,1\})$ be a probability distribution over $X\times\{0,1\}$.
The {\em generalization error} of a hypothesis $h:X\rightarrow\{0,1\}$ w.r.t.\ $\cP$ is defined as 
$\error_{\cP}(h)=\Pr_{(x,y) \sim \cP}[h(x)\neq y]$.

Let $\cD\in \Delta(X)$ be a probability distribution over $X$ and let $c:x\rightarrow\{0,1\}$ be a concept. 
The {\em generalization error} of hypothesis $h:X\rightarrow\{0,1\}$ w.r.t.\ $c$ and $\cD$ is  
defined as $\error_{\cD}(c,h)=\Pr_{x \sim \cD}[h(x)\neq c(x)]$.
If $\error_{\cD}(c,h)\leq\alpha$ we say that $h$ is {\em $\alpha$-good} for $c$ and $\cD$.
\end{definition}

\begin{definition}[Multi-labeled database] A {\em $k$-labeled} database over a domain $X$ is a database $S\in(X\times\{0,1\}^k)^*$. That is, $S$ contains $|S|$ elements from $X$, each concatenated with $k$ binary labels.
\end{definition}

%\begin{definition}[Empirical error]
%Given an unlabeled database $D=(x_i)_{i=1}^n$ and two concepts $c,h$, we define the {\em empirical error} of $h$ w.r.t. $c$ and $D$ as $\error_{D}(h,c)=\frac{1}{n}|\{i : h(x_i)\neq c(x_i)\}|$.
%
%Given a $k$-labeled database $S=(x_i,y_{i,1},\dots,y_{i,k})_{i=1}^n$ and a hypothesis $h$, we define the {\em empirical error} of $h$ w.r.t.\ the $j^{\text{th}}$ label column in $S$ as
%$\error_{S|_j}(h)=\frac{1}{n}|\{i : h(x_i)\neq y_{i,j}\}|$.
%\end{definition}

Let $\cA:\left(X\times\{0,1\}^k\right)^n\rightarrow \left(2^X\right)^k$ be an algorithm that operates on a $k$-labeled database and returns $k$ hypotheses.
Let $C$ be a concept class over a domain $X$ and let $H$ be a hypothesis class over $X$. 
We now give a generalization of the notion of PAC learning~\cite{Valiant84} to multi-labeled databases (the standard PAC definition is obtained by setting $k=1$):

%\paragraph{Definition~\ref{def:pacmultilearner}} (PAC Multi-Learner). \;
\begin{definition}[PAC Multi-Learner]\label{def:pacmultilearner}
Algorithm $\cA$ is an $(\alpha,\beta)$-PAC $k$-learner for concept class $C$ using hypothesis class $H$ with sample complexity $n$ if for every distribution $\cD$ over $X$ and for every fixture of $(c_1,\dots,c_k)$ from $C$, given a $k$-labeled database as an input $S=\left(\left(x_i,c_1(x_i),\dots,c_k(x_i)\right)\right)_{i=1}^n$ where each $x_i$ is drawn i.i.d.\ from $\cD$, algorithm $\cA$ outputs $k$ hypotheses $(h_1,\dots,h_k)$ from $H$ satisfying
$$\Pr\left[\max_{1\leq j \leq k} \left(\error_{\cD}(c_j,h_j)\right)  > \alpha\right] \leq \beta.$$
The probability is taken over the random choice of
the examples in $S$ according to $\cD$ and the coin tosses of the learner $\cA$.
If $H \subseteq C$ then $A$ is called a {\em proper} learner; otherwise, it is called an {\em improper} learner.
\end{definition}

\begin{definition}[Agnostic PAC Multi-Learner] \label{def:agnosticmultilearner}
Algorithm $\cA$ is an $(\alpha,\beta)$-PAC agnostic $k$-learner for $C$ using hypothesis class $H$ and sample complexity $n$ if for every distribution $\cP$ over $X\times\{0,1\}^k$, given a $k$-labeled database $S=\left(\left(x_i,y_{1,i},\dots,y_{k,i}\right)\right)_{i=1}^n$ where each $k$-labeled sample $(x_i,y_{1,i}\ldots,y_{k,i})$ is drawn i.i.d.\ from $\cP$, algorithm $\cA$ outputs $k$ hypotheses $(h_1,\dots,h_k)$ from $H$ satisfying
$$\Pr\left[\max_{1\leq j\leq k}\left(\error_{\cP_j}(h_j)  -  \min_{c \in C}\left(\error_{\cP_j}(c)\right) \right)> \alpha \right] \leq \beta,$$
where $\cP_j$ is the marginal distribution of $\cP$ on the examples and the $j^{\text{th}}$ label. The probability is taken over the random choice of
the examples in $S$ according to $\cP$ and the coin tosses of the learner $\cA$.
If $H\subseteq C$ then $A$ is called a {\em proper} learner; otherwise, it is called an {\em improper} learner.
\end{definition}

%\knote{define VC dimension. Theorem about sample complexity of multi-learners.}

\subsection{The Sample Complexity of Multi-Learning}

Without privacy considerations, the sample complexities of PAC and agnostic learning are essentially characterized by a combinatorial quantity called the \emph{Vapnik-Chervonenkis (VC) dimension}. We state these characterizations in the context of multi-learning.

\subsubsection{The Vapnik-Chervonenkis Dimension}\label{sec:VC}

\begin{definition}
Fix a concept class $C$ over domain $X$. A set $\{x_1, \dots, x_d\} \in X$ is \emph{shattered} by $C$ if for every labeling $b \in \{0, 1\}^d$, there exists $c \in C$ such that $b_1 = c(x_1), \dots, b_d = c(x_d)$. The \emph{Vapnik-Chervonenkis (VC) dimension} of $C$, denoted $\vcdim(C)$, is the size of the largest set which is shattered by $C$.
\end{definition}

The Vapnik-Chervonenkis (VC) dimension is an important combinatorial measure of a concept class.
Classical results in statistical learning theory show that the generalization error of a hypothesis $h$ and its empirical error (observed on a large enough sample) are similar.

\begin{definition}[Empirical Error] 
Let $S = ((x_i, y_i))_{i = 1}^n \in (X\times\{0,1\})^n$ be a labeled sample from $X$.
The {\em empirical error} of a hypothesis $h:X\rightarrow\{0,1\}$ w.r.t.\ $S$ is defined as 
$\error_{S}(h)= \frac{1}{n} |\{i : h(x_i)\neq y_i\}|$.

Let $D \in X^n$ be a (unlabeled) sample from $X$ and let $c:x\rightarrow\{0,1\}$ be a concept. 
The {\em empirical error} of hypothesis $h:X\rightarrow\{0,1\}$ w.r.t.\ $c$ and $D$ is  
defined as $\error_{D}(c,h)= \frac{1}{n} |\{i : h(x_i)\neq c(x_i)]$.
\end{definition}

\begin{theorem}[VC-Dimension Generalization Bound, e.g. \cite{BlumerEhHaWa89}]\label{thm:Generalization}
Let $\cD$ and $C$ be, respectively, a distribution and a concept class over a domain $X$, and let $c \in C$. For a sample $S=((x_i,c(x_i)))_{i=1}^n$ where $n\geq\frac{64}{\alpha}(\VC(C)\ln(\frac{64}{\alpha})+\ln(\frac{8}{\beta}))$
and the $x_i$ are drawn i.i.d. from $\cD$, it holds that
\[\Pr\Big[\exists h\in C \text{ s.t.\ } \error_\cD(h,c)> \alpha \ \land \ \error_S(h)\leq\frac{\alpha}{2} \Big] \le \beta.\]
%$$\Pr\Big[\forall \; h\in H:\;\; \big|\error_\cD(h,c)-\error_S(h)\big|\leq\alpha\Big]\geq1-\beta.$$  
\end{theorem}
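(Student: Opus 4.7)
The plan is to follow the classical double-sampling (``ghost sample'') argument of Vapnik--Chervonenkis combined with Sauer's Lemma, which is how this generalization bound is traditionally established. I expect the conceptual steps to be routine; the main obstacle will be constant-chasing to match the precise threshold $n \geq \frac{64}{\alpha}(\VC(C)\ln(\tfrac{64}{\alpha})+\ln(\tfrac{8}{\beta}))$.

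First, I would introduce an auxiliary ``ghost sample'' $S' = ((x_i', c(x_i')))_{i=1}^n$, drawn i.i.d.\ from $\cD$ independently of $S$. Let $B$ denote the bad event in the theorem statement, and define
\[
B' = \bigl\{\exists h \in C : \error_\cD(h,c) > \alpha,\; \error_S(h) \le \tfrac{\alpha}{2},\; \error_{S'}(h) > \tfrac{3\alpha}{4}\bigr\}.
\]
Conditioning on any realization of $S$ in $B$ and fixing a witness $h^*$, a standard Chernoff/Chebyshev bound shows that the empirical error of $h^*$ on the fresh sample $S'$ exceeds $3\alpha/4$ with probability at least $1/2$, provided $n = \Omega(1/\alpha)$. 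Consequently $\Pr[B] \le 2 \Pr[B']$.

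Next, I would perform symmetrization. Because $S$ and $S'$ are i.i.d., the joint distribution of $(S,S')$ is invariant under independently swapping $x_i \leftrightarrow x_i'$ for each $i \in [n]$. I would therefore view the experiment as: first draw the multiset $T = (z_1,\ldots,z_{2n})$ of $2n$ i.i.d.\ points, then assign them to $S$ versus $S'$ using $n$ independent fair coins $\sigma_1,\ldots,\sigma_n$. Condition on $T$. By Sauer's Lemma, the number of distinct labelings $(h(z_1),\ldots,h(z_{2n}))$ induced by concepts $h\in C$ is at most $\Phi(2n) := \sum_{i=0}^{d}\binom{2n}{i} \le (2en/d)^d$, where $d = \VC(C)$. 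For each fixed dichotomy, whether it witnesses $B'$ depends only on how the $k$ positions where $h$ disagrees with $c$ are split between $S$ and $S'$ by the random swaps; a hypergeometric / Chernoff tail bound shows that for any fixed labeling this probability is at most $2^{-\alpha n / c_0}$ for an absolute constant $c_0$ (since the event requires $\ge 3\alpha n/4$ disagreements to fall in $S'$ while $\le \alpha n/2$ fall in $S$).

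Combining via a union bound over the at most $(2en/d)^d$ effective hypotheses yields
\[
\Pr[B' \mid T] \;\le\; \Phi(2n) \cdot 2^{-\alpha n / c_0},
\]
which is uniform in $T$ and hence bounds $\Pr[B']$. I would then solve for the smallest $n$ making the right-hand side at most $\beta/2$; taking logarithms gives the asymptotic requirement $n = \Omega\bigl(\tfrac{1}{\alpha}(d\log\tfrac{1}{\alpha}+\log\tfrac{1}{\beta})\bigr)$, and carefully tracking the constants $c_0$ and the $\log(2en/d)$ factor (bounding $\log(2en/d) \le \log(64/\alpha)$ once $n$ is large enough) yields exactly the stated bound with the constants $64$ and $8$. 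Finally $\Pr[B] \le 2\Pr[B'] \le \beta$, completing the proof.
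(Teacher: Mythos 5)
The paper does not prove this theorem; it imports it as a classical result from \cite{BlumerEhHaWa89}, and your outline --- ghost sample, symmetrization over coordinate swaps, Sauer's Lemma, and a union bound over the induced dichotomies --- is exactly the standard argument underlying that reference. The sketch is correct in all its steps, with only the (acknowledged and routine) constant-tracking left to verify against the specific thresholds $\frac{64}{\alpha}$, $\ln(\frac{64}{\alpha})$, and $\ln(\frac{8}{\beta})$.
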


This generalization argument extends to the setting of \emph{agnostic learning}, where a hypothesis with small empirical error might not exist.

\begin{theorem}[VC-Dimension Agnostic Generalization Bound, e.g. \cite{AB09,AnthonySh93}]\label{thm:AgnosticGeneralization}
Let $H$ be a concept class over a domain $X$, and let $\PPP$ be a distribution over $X\times\{0,1\}$. For a sample $S=((x_i,y_i))_{i=1}^n$ containing $n\geq\frac{64}{\alpha^2}(\VC(H)\ln(\frac{6}{\alpha})+\ln(\frac{8}{\beta}))$ i.i.d.\ elements from $\PPP$, it holds that
$$\Pr\Big[\exists h\in H \text{ s.t.\ } \big|\error_\PPP(h)-\error_S(h)\big|>\alpha\Big]\leq\beta.$$  
\end{theorem}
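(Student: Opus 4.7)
The plan is to establish the uniform convergence statement by the classical symmetrization argument combined with the Sauer--Shelah bound on the growth function, since this is exactly the route by which an agnostic generalization bound scaling with $\VC(H)$ (and with $1/\alpha^2$ rather than $1/\alpha$) is traditionally obtained.

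First, I would set up the bad event $B(S) = \{\exists h \in H : |\error_\PPP(h) - \error_S(h)| > \alpha\}$ and reduce it to a two-sample event via symmetrization. Concretely, draw a ghost sample $S'$ of size $n$ i.i.d.\ from $\PPP$; a standard calculation (using that the empirical error on $S'$ concentrates around $\error_\PPP(h)$ by Chebyshev, provided $n \gtrsim 1/\alpha^2$) shows
\[
\Pr[B(S)] \le 2 \Pr\bigl[\exists h \in H : |\error_S(h) - \error_{S'}(h)| > \alpha/2\bigr].
\]
Conditioning on the multiset $T = S \cup S'$ of $2n$ points, the event on the right depends only on the behavior of $h$ on $T$. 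I would then invoke the Sauer--Shelah lemma: the number of distinct labelings of $T$ by hypotheses in $H$ is at most $(2en/\VC(H))^{\VC(H)}$. This reduces the supremum over $H$ to a union bound over at most this many effectively distinct hypotheses.

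Next, for each fixed labeling, I would exploit the symmetry between $S$ and $S'$: conditioned on $T$, a random swap of each pair $(x_i, x_i')$ has the same distribution as the original split, so $\error_S(h) - \error_{S'}(h)$ becomes an average of $n$ independent, bounded, symmetric random variables. Hoeffding's inequality then bounds the probability of exceeding $\alpha/2$ by $2\exp(-n\alpha^2/8)$ (up to constants). Combining with the Sauer--Shelah union bound gives
\[
\Pr[B(S)] \le 4 \Bigl(\tfrac{2en}{\VC(H)}\Bigr)^{\VC(H)} \exp(-n\alpha^2/C),
\]
and solving for $n$ yields the claimed threshold $n \ge \tfrac{64}{\alpha^2}(\VC(H)\ln(6/\alpha) + \ln(8/\beta))$.

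The main obstacle is purely bookkeeping: matching the exact constants $64$, $6/\alpha$, and $8/\beta$ in the statement requires a careful choice of the cutoff in symmetrization (to absorb the Chebyshev step into a factor of $2$) and a tight application of Hoeffding that handles the $\alpha/2$ slack from symmetrization while still leaving room to logarithmically dominate the $(2en/\VC(H))^{\VC(H)}$ factor. Since this is a textbook theorem (the references \cite{AB09, AnthonySh93} carry out this very argument), I would simply cite the form of Sauer--Shelah and Hoeffding used and then verify that the stated sample size suffices by plugging into the displayed inequality; no new ideas are required beyond being careful with the logarithmic terms.
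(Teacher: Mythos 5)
Your outline is the standard symmetrization--Sauer--Shelah--Hoeffding argument, which is exactly the proof given in the references the paper cites for this theorem (the paper itself states it as a known result and provides no proof). The approach is correct and matches the canonical one, modulo the constant-tracking you already flag.
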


Using theorems~\ref{thm:generalization} and~\ref{thm:AgnosticGeneralization}, an upper bound of $O(\vcdim(C))$ on the sample complexity of learning a concept class $C$ follows by reduction to the \emph{empirical learning} problem. The goal of empirical learning is similar to that of PAC learning, except accuracy is measured only with respect to a fixed input database. Theorems~\ref{thm:generalization} and~\ref{thm:AgnosticGeneralization} state that when an empirical learner is run on sufficiently many samples, it is also accurate with respect to a distribution on inputs.

\begin{definition}[Empirical Learner]
Algorithm $\cA$ is an {\em $(\alpha,\beta)$-accurate empirical $k$-learner} for a concept class $C$ using hypothesis class $H$ with sample complexity $n$ if for every collection of concepts $(c_1, \dots, c_k)$ from $C$ and database $S =((x_i, c_1(x_i), \dots, c_k(x_i)))_{i=1}^n\in(X\times\{0,1\}^k)^n$, algorithm $\cA$ outputs $k$ hypotheses $(h_1, \dots, h_k)$ from $H$ satisfying
$$\Pr\left[\max_{1\leq j \leq k} \left(\error_{S|_j}(h_j)\right)  > \alpha\right] \leq \beta,$$
where $S|_j=((x_i, c_j(x_i)))_{i = 1}^n$. The probability is taken over the coin tosses of $\cA$.
\end{definition}

\begin{definition}[Agnostic Empirical Learner]
Algorithm $\cA$ is an {\em agnostic $(\alpha,\beta)$-accurate empirical $k$-learner} for a concept class $C$ using hypothesis class $H$ with sample complexity $n$ if for every database $S =((x_i, y_{1, i}, \dots, y_{k, i}))_{i=1}^n\in(X\times\{0,1\}^k)^n$, algorithm $\cA$ outputs $k$ hypotheses $(h_1, \dots, h_k)$ from $H$ satisfying
$$\Pr\left[\max_{1\leq j\leq k}\left(\error_{S|_j}(h_j)  -  \min_{c \in C}\left(\error_{S|_j}(c)\right) \right)> \alpha \right] \leq \beta,$$
where $S|_j=((x_i, y_{j, i}))_{i = 1}^n$. The probability is taken over the coin tosses of $\cA$.
\end{definition}

\begin{theorem} \label{thm:generalization}
Let $\cA$ be an $(\alpha,\beta)$-accurate empirical $k$-learner for a concept class $C$ (resp. agnostic empirical $k$-learner) using hypothesis class $H$. Then $\cA$ is also a $(2\alpha, \beta + \beta')$-accurate PAC learner for $C$ when given at least $\max \{n,  \frac{32}{\alpha}(\VC(H \oplus C) \log(32/\alpha) + \log(8/\beta'))\}$ samples (resp. $\max \{n, \frac{64}{\alpha^2}(\VC(H) \log(6/\alpha) + \log(8k/\beta'))$ samples). Here, $H \oplus C = \{h \oplus c : h \in H, c \in C\}$.
\end{theorem}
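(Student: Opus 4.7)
The plan is to reduce each case to one of the VC uniform-convergence bounds already stated in the excerpt: Theorem~\ref{thm:Generalization} for the realizable case and Theorem~\ref{thm:AgnosticGeneralization} for the agnostic case. No new combinatorial input is needed; the whole game is to set up the right hypothesis class on which to apply uniform convergence, and to account for the differences between the single-label statements and the $k$-label statements we want.

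\textbf{Realizable case.} For each $j$, the key observation is that $\error_{S|_j}(h_j)$ and $\error_\cD(c_j,h_j)$ are exactly the empirical measure on $(x_i)_{i=1}^n$ and the true measure under $\cD$ of the error-indicator function $g_j:=h_j\oplus c_j\in H\oplus C$. I would therefore apply Theorem~\ref{thm:Generalization} to the class $H\oplus C$ with accuracy parameter $2\alpha$ and failure probability $\beta'$; substituting $\alpha\mapsto 2\alpha$ in the stated sample-size formula yields exactly $n\geq\tfrac{32}{\alpha}(\VC(H\oplus C)\log(32/\alpha)+\log(8/\beta'))$. This gives that with probability at least $1-\beta'$, every $g\in H\oplus C$ with empirical measure at most $\alpha$ has true measure at most $2\alpha$. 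Crucially, since all of $g_1,\dots,g_k$ belong to the single class $H\oplus C$, this one uniform-convergence event handles all $j$ at once, so no extra union bound over $j$ is incurred. Combining with the empirical learner's guarantee (failure $\leq\beta$) that $\error_{S|_j}(h_j)\leq\alpha$ for every $j$, a single union bound delivers $(2\alpha,\beta+\beta')$-accurate PAC $k$-learning.

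\textbf{Agnostic case.} For each $j$, let $c^*_j\in\arg\min_{c\in C}\error_{\cP_j}(c)$. I would apply Theorem~\ref{thm:AgnosticGeneralization} to the hypothesis class $H$ under each of the $k$ marginal distributions $\cP_j$ (union-bounding over $j$, which is what absorbs the factor of $k$ into the $\log(8k/\beta')$ term), and separately apply a Chernoff/Hoeffding bound to each of the $k$ fixed concepts $c^*_j$. Together these ensure that with probability at least $1-\beta'$, for every $j$, both $\sup_{h\in H}|\error_{\cP_j}(h)-\error_{S|_j}(h)|$ and $|\error_{\cP_j}(c^*_j)-\error_{S|_j}(c^*_j)|$ are small. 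Chaining with the empirical agnostic guarantee $\error_{S|_j}(h_j)\leq\min_{c\in C}\error_{S|_j}(c)+\alpha\leq\error_{S|_j}(c^*_j)+\alpha$ yields the desired $\error_{\cP_j}(h_j)\leq\error_{\cP_j}(c^*_j)+2\alpha$.

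\textbf{Where the subtlety lies.} The main conceptual point in the realizable case is recognizing that uniform convergence must be invoked on the \emph{composite} class $H\oplus C$, rather than on $H$ or $C$ alone, precisely because the learner may be improper and $h_j$, $c_j$ can live in different classes; this is also what dictates the $\VC(H\oplus C)$ factor appearing in the sample bound. The main point in the agnostic case is the complementary one: the $\VC(H\cup C)$ penalty one might naively expect is \emph{avoidable}, because each comparator $c^*_j$ is a concept fixed independently of $S$, so its empirical error can be controlled by a one-concept Chernoff bound rather than by uniform convergence over $C$. This is exactly why the sample bound depends on $\VC(H)$ only, with the $k$ comparators contributing merely an additive $\log k$ inside the log term.
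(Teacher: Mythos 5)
Your proposal is correct and follows essentially the same route as the paper: the realizable case invokes Theorem~\ref{thm:Generalization} on the composite class $H \oplus C$ (a single uniform-convergence event covering all $k$ indices), and the agnostic case invokes Theorem~\ref{thm:AgnosticGeneralization} on $H$ for each marginal $\cP_j$ with a union bound over $j$ supplying the $\log(8k/\beta')$ term. Your explicit Chernoff-bound treatment of the fixed comparators $c^*_j$ is a careful touch that the paper leaves implicit, but it does not change the argument.
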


\begin{proof}
We begin with the non-agnostic case. Let $\cA$ be an $(\alpha,\beta)$-accurate empirical $k$-learner for $C$. Let $\cD$ be a distribution over the example space $X$. Let $S$ be a random i.i.d. sample of size $m$ from $\cD$. The generalization bound for PAC learning (Theorem \ref{thm:Generalization}) states that if $m \ge  \frac{32}{\alpha}(d \log(32/\alpha) + \log(8/\beta'))\}$, then
\[\Pr[\exists c \in C, h\in H: \error_S(c, h) \le \alpha \land \error_{\cD}(c, h) > 2\alpha] \le \beta',\]
where $d = \VC(H \oplus C)$. The result follows by a union bound over the failure probability of $\cA$ and the failure of generalization.

Now we turn to the agnostic case. Let $\cA$ be an agnostic $(\alpha,\beta)$-accurate empirical $k$-learner for $C$. Fix an index $j \in [k]$, and let $\cP_j$ be a distribution over $X \times \{0, 1\}$. Let $S$ be a random i.i.d. sample of size $m$ from $\cP_j$. Then generalization for agnostic learning (Theorem \ref{thm:AgnosticGeneralization}) yields
\[\Pr[\exists h \in H : |\error_{S}(h) - \error_{\cP_j}(h)| > \alpha] \le \frac{\beta'}{k}\]
for $m \ge \frac{64}{\alpha^2}(\VC(H) \log(6/\alpha) + \log(8k/\beta'))\}$.The result follows by a union bound over the failure probability of $\cA$ and the failure of generalization for each of the indices $j = 1, \dots, k$.
\end{proof}

Applying the above theorem in the special case where $\cA$ finds the concept $c \in C$ that minimizes the empirical error on its given sample, we obtain the following sample complexity upper bound for proper multi-learning.

\begin{corollary} \label{coro:multi-learningVC}
%\paragraph{Corollary~\ref{coro:multi-learningVC}.} {\em 
Let $C$ be a concept class with VC dimension $d$. There exists an $(\alpha, \beta)$-accurate proper PAC $k$-learner for $C$ using $O(\frac{1}{\alpha}(d \log (1/\alpha) + \log(1/\beta))$ samples. Moreover, there exists an $(\alpha, \beta)$-accurate proper agnostic PAC $k$-learner for $C$ using $O(\frac{1}{\alpha^2}(d \log (1/\alpha) + \log(k/\beta))$ samples.
%}
\end{corollary}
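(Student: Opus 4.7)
I will instantiate the multi-learner $\cA$ as the empirical risk minimizer: given a $k$-labeled database $S$, for each $j\in[k]$ output
\[
h_j \in \arg\min_{c\in C}\ \error_{S|_j}(c).
\]
Since each $h_j \in C$, algorithm $\cA$ is a proper learner. The corollary will follow by showing $\cA$ is a (perfect) empirical $k$-learner in both the realizable and agnostic settings, and then directly invoking Theorem~\ref{thm:generalization}.

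\textbf{Realizable PAC case.} Here the labels in $S|_j$ are of the form $c_j(x_i)$ for some unknown target $c_j\in C$. Hence $c_j$ itself attains zero empirical error on $S|_j$, so the ERM hypothesis $h_j$ must also satisfy $\error_{S|_j}(h_j)=0$. In particular, for every $\alpha'>0$, algorithm $\cA$ is an $(\alpha',0)$-accurate empirical $k$-learner. Applying Theorem~\ref{thm:generalization} with empirical accuracy parameter $\alpha'=\alpha/2$, failure probabilities $\beta=0$ and $\beta'=\beta$, and with $H=C$ (so $H\oplus C=C\oplus C$), together with the standard combinatorial bound $\VC(C\oplus C)=O(\VC(C))=O(d)$, yields sample complexity $O\!\left(\tfrac{1}{\alpha}\bigl(d\log(1/\alpha)+\log(1/\beta)\bigr)\right)$, matching the claim. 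No union bound over $j$ is needed because the generalization event inside Theorem~\ref{thm:generalization} is already uniform over all pairs $(c,h)\in C\times H$.

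\textbf{Agnostic case.} By definition of ERM, $\error_{S|_j}(h_j)=\min_{c\in C}\error_{S|_j}(c)$ holds deterministically, so $\cA$ is an $(\alpha',0)$-accurate agnostic empirical $k$-learner for every $\alpha'>0$. Applying the agnostic part of Theorem~\ref{thm:generalization} with $\alpha'=\alpha/2$ and $\beta'=\beta$, and using $\VC(H)=\VC(C)=d$, gives sample complexity $O\!\left(\tfrac{1}{\alpha^2}\bigl(d\log(1/\alpha)+\log(k/\beta)\bigr)\right)$. The $\log k$ factor in this bound comes from the union bound over the $k$ marginal distributions $\cP_1,\dots,\cP_k$ that appears inside the proof of Theorem~\ref{thm:generalization}.

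\textbf{Main obstacle.} There is no real obstacle---this is essentially a bookkeeping exercise that specializes Theorem~\ref{thm:generalization} to the ERM algorithm. The only subtleties to flag are: (i) Theorem~\ref{thm:generalization} carries a $1/\alpha$ factor in its sample complexity, so one cannot literally set its empirical-accuracy parameter to $0$; instead, one observes that zero empirical error trivially implies $(\alpha',0)$-accuracy for any $\alpha'>0$ and chooses $\alpha'=\alpha/2$; and (ii) in the realizable bound one needs to invoke the fact that $\VC(C\oplus C)=O(\VC(C))$ in order to express the final sample complexity in terms of $d=\VC(C)$ alone, rather than of the composite class.
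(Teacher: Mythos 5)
Your proposal is correct and follows essentially the same route as the paper: instantiate the empirical $k$-learner as ERM (equivalently, any consistent hypothesis in the realizable case), observe it is a perfect empirical learner, and invoke Theorem~\ref{thm:generalization} together with $\VC(C\oplus C)=O(\VC(C))$. Your write-up is, if anything, slightly more careful than the paper's about why the $\log k$ term appears only in the agnostic bound and about not setting the empirical accuracy parameter literally to zero.
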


\begin{proof}
For the non-agnostic case, we simply let $\cA$ be the $(0, 0)$-accurate empirical learner that outputs any vector of hypotheses that is consistent with its given examples (one is guaranteed to exist, since the target concept satisfies this condition). The claim follows from Theorem \ref{thm:generalization} noting that $\VC(C \oplus C) = O(\VC(C))$.

For the agnostic case, consider the algorithm $\cA$ that on input $S$ outputs hypotheses $(h_1, \dots, h_k)$ that minimize the quantities $\error_{S_j}(h_j)$. Applying the agnostic generalization bound \cite{AB09}, this is an $(\alpha / 2, \beta / 2)$-accurate agnostic empirical learner given $O(\frac{1}{\alpha^2}(d \log (1/\alpha) + \log(k/\beta))$ samples. The claim then follows from Theorem \ref{thm:generalization}.
\end{proof}

It is known that even for $k = 1$, the sample complexities of PAC and agnostic learning are at least $\Omega(\VC(C)/\alpha)$ and $\Omega(\VC(C)/\alpha^2)$, respectively. Therefore, the above sample complexity upper bound is tight up to logarithmic factors.

%\knote{define concept classes POINT, THRESH, PAR}

We define a few specific concept classes which will play an important role in this work.

\begin{description}

\item[$\point_X$:] Let $X$ be any domain. The class of \emph{point functions} is the set of all concepts that evaluate to $1$ on exactly one element of $X$, i.e. $\point_X = \{c_x : x \in X\}$ where $c_x(y) = 1$ iff $y = x$. The VC-dimension of $\point_X$ is $1$ for any $X$.

\item[$\thresh_X$:] Let $X$ be any totally ordered domain. The class of \emph{threshold functions} takes the form $\thresh_X = \{c_x : x \in X\}$ where $c_x(y) = 1 \text{ iff } y \le x$. The VC-dimension of $\thresh_X$ is $1$ for any $X$.

\item[$\parity_d$:] Let $X = \bits^d$. The class of \emph{parity functions} on $X$ is given by $\parity_d = \{c_x : x \in X\}$ where $c_x(y) = \langle x, y \rangle \pmod 2$. The VC-dimension of $\parity_d$ is $d$.

\end{description}

In this work, we focus our study of the concept class $\parity_d$ on the problem of learning parities under the uniform distribution. The PAC and agnostic learning problems are defined as before, except we only require a learner to be accurate when the marginal distribution on examples is the uniform distribution $U_d$ over $\{0, 1\}^d$. 

\begin{definition}[PAC Learning $\parity_d$ under Uniform]
Algorithm $\cA$ is an $(\alpha,\beta)$-PAC $k$-learner for $\parity_d$ using hypothesis class $H$ and sample complexity $n$ if for every fixed $(c_1,\dots,c_k)$ from $C$, given a $k$-labeled database as an input $S=\left(\left(x_i,c_1(x_i),\dots,c_k(x_i)\right)\right)_{i=1}^n$ where each $x_i$ is drawn i.i.d.\ from $U_d$, algorithm $\cA$ outputs $k$ hypotheses $(h_1,\dots,h_k)$ from $H$ satisfying
$$\Pr\left[\max_{1\leq j \leq k} \left(\error_{U_d}(c_j,h_j)\right)  > \alpha\right] \leq \beta.$$
\end{definition}

\begin{definition}[Agnostically Learning $\parity_d$ under Uniform]
Algorithm $\cA$ is an $(\alpha,\beta)$-PAC agnostic $k$-learner for $\parity_d$ using hypothesis class $H$ and sample complexity $n$ if for every distribution $\cP$ over $\bits^d \times\{0,1\}^k$, with marginal distribution $U_d$ over the data universe $\bits^d$, given a $k$-labeled database $S=\left(\left(x_i,y_{1,i},\dots,y_{k,i}\right)\right)_{i=1}^n$ where each $k$-labeled sample $(x_i,y_{1,i}\ldots,y_{k,i})$ is drawn i.i.d.\ from $\cP$, algorithm $\cA$ outputs $k$ hypotheses $(h_1,\dots,h_k)$ from $H$ satisfying
$$\Pr\left[\max_{1\leq j\leq k}\left(\error_{\cP_j}(h_j)  -  \min_{c \in C}\left(\error_{\cP_j}(c)\right) \right)> \alpha \right] \leq \beta,$$
where $\cP_j$ is the marginal distribution of $\cP$ on the examples and the $j^{\text{th}}$ label.
\end{definition}

\subsection{Differential privacy}

Two  {\em $k$-labeled} databases $S,S' \in(X\times\{0,1\}^k)^n$ are called {\em neighboring} if they differ on a single (multi-labeled) entry, i.e., $|\{i:(x_i,y_{1,i},\ldots,y_{k,i}) \not= (x'_i, y'_{1,i},\ldots,y'_{k,i})\}| = 1$.

%\paragraph{Definition~\ref{def:differentialprivacy}} (Differential Privacy~\cite{DMNS06}).\;
\begin{definition}[Differential Privacy~\cite{DMNS06}] 
Let $\cA: \left(X\times\{0,1\}^k\right)^n\rightarrow \left(2^X\right)^k$ be an algorithm that operates on a $k$-labeled database and returns $k$ hypotheses. Let $\epsilon,\delta \geq 0$. Algorithm $\cA$ is $(\epsilon,\delta)$-differentially private if for all neighboring $S,S'$ and for all $T\subseteq \left(2^X\right)^k$,
$$\Pr[\cA(S) \in T] \leq e^\epsilon \cdot \Pr[\cA(S') \in T] + \delta,$$ 
where the probability is taken over the coin tosses of the algorithm $\cA$. When $\delta=0$ we say that $\cA$ satisfies {\em pure} differential privacy, otherwise (i.e., if $\delta >0$) we say that $\cA$ satisfies {\em approximate} differential privacy.
\end{definition}

Our learning algorithms are designed via repeated applications of differentially private algorithms on a database. Composition theorems for differential privacy show that the price of privacy for multiple (adaptively chosen) interactions degrades gracefully.

\begin{theorem}[Composition of Differential Privacy \cite{DKMMN06, DL09, DRV10}]\label{thm:composition}
Let $0 < \eps, \delta' < 1$ and $\delta \in [0, 1]$. Suppose an algorithm $\cA$ accesses its input database $S$ only through $m$ adaptively chosen executions of $(\eps, \delta)$-differentially private algorithms. Then $\cA$ is 
\begin{enumerate}
\item $(m \eps, m\delta)$-differentially private, and
\item $(\eps', m \delta + \delta')$-differentially private for $\eps = \sqrt{2m\ln(1/\delta')} \cdot \eps + 2m\eps^2$.
\end{enumerate}
\end{theorem}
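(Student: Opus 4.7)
The plan is to prove the two parts separately, since they rely on quite different techniques. For part 1 (basic composition), I would proceed by induction on $m$, reducing to the two-step case: if $\cM_1$ is $(\eps_1,\delta_1)$-DP and $\cM_2$ (possibly depending on the output of $\cM_1$) is $(\eps_2,\delta_2)$-DP for every fixed value of that output, then their composition is $(\eps_1+\eps_2,\delta_1+\delta_2)$-DP. The cleanest way to see this is via the equivalent ``pointwise'' characterization of $(\eps,\delta)$-DP: for any neighboring $S,S'$ there is an event $E$ of mass at most $\delta$ (under both $\cM(S)$ and $\cM(S')$) outside which the likelihood ratio $\Pr[\cM(S)=y]/\Pr[\cM(S')=y]$ lies in $[e^{-\eps},e^{\eps}]$. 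Multiplying ratios along the adaptively chosen sequence and union-bounding over the exceptional events gives both factors in the claimed bound.

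Part 2 (advanced composition) is the main technical content, and I would follow the standard approach via the \emph{privacy-loss random variable}. First reduce to $\delta=0$ by the pointwise reduction above, absorbing the total $m\delta$ mass of exceptional events into the final failure parameter. Given neighbors $S,S'$ and a trace $y_{1:m}$, define
\[ Z_i \;=\; \ln\frac{\Pr[\cA_i(S;y_{<i})=y_i]}{\Pr[\cA_i(S';y_{<i})=y_i]}, \qquad Z=\sum_{i=1}^{m} Z_i,\]
and view $Z$ as a random variable when $y_{1:m}$ is drawn from the joint distribution induced by running the adaptive composition on $S$. The key per-step lemma is that $|Z_i|\le \eps$ almost surely, and $\E[Z_i \mid y_{<i}]\le \eps(e^\eps-1)/(e^\eps+1)\le 2\eps^2$ for $\eps\le 1$. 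This expectation bound is a short convexity/KL-divergence calculation using the constraint $e^{-\eps}\le \Pr[\cA_i(S;\cdot)=\cdot]/\Pr[\cA_i(S';\cdot)=\cdot]\le e^\eps$; the worst case is the symmetric two-point distribution on $\{e^\eps,e^{-\eps}\}$.

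With these per-step bounds in hand, $\{\sum_{j\le i}(Z_j-\E[Z_j\mid y_{<j}])\}_i$ is a bounded-difference martingale with increments in $[-2\eps,2\eps]$ (after centering), so Azuma--Hoeffding gives
\[ \Pr\!\left[\, Z \;>\; 2m\eps^2 + \sqrt{2m\ln(1/\delta')}\cdot \eps \,\right] \;\le\; \delta'.\]
Finally, a standard lemma converts this tail bound on the privacy loss into $(\eps',\delta')$-DP with $\eps'=\sqrt{2m\ln(1/\delta')}\eps+2m\eps^2$: for any output event $T$, split $\Pr[\cA(S)\in T]$ into the contribution from traces with $Z\le \eps'$ (bounded by $e^{\eps'}\Pr[\cA(S')\in T]$) and the contribution from traces with $Z>\eps'$ (bounded by $\delta'$). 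Combining with the $m\delta$ slack from the $\delta>0$ reduction yields the claimed $(\eps',m\delta+\delta')$-DP.

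I expect the main obstacle to be the expected-privacy-loss bound $\E[Z_i\mid y_{<i}]\le 2\eps^2$: it is the place where one truly beats the trivial $m\eps$ bound, and it must be proved carefully enough that it remains valid \emph{conditional on the past trace}, so that the martingale structure is genuine despite the adaptive choice of subsequent mechanisms.
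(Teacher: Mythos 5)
The paper states Theorem~\ref{thm:composition} as a known result, citing \cite{DKMMN06, DL09, DRV10}, and gives no proof, so there is nothing internal to compare against; your task was effectively to reconstruct the Dwork--Rothblum--Vadhan argument, and your outline does exactly that. Part 1 via induction on the two-step case is fine, and Part 2 via the privacy-loss random variable, the per-step bounds $|Z_i|\le\eps$ and $\E[Z_i\mid y_{<i}]\le 2\eps^2$, and a martingale concentration step is the correct and standard route; your observation that the conditional expectation bound must be proved \emph{given the past trace} is indeed the crux that makes the adaptive case go through.

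Two points deserve more care. First, the concentration step as you state it loses a factor of $2$: centering $Z_i$ and then invoking Azuma with increments in $[-2\eps,2\eps]$ (an interval of length $4\eps$) yields a tail of $\exp(-t^2/(8m\eps^2))$ and hence $t=2\eps\sqrt{2m\ln(1/\delta')}$, not $\eps\sqrt{2m\ln(1/\delta')}$. The fix is to apply Hoeffding's lemma to $Z_i$ itself, which lives in the interval $[-\eps,\eps]$ of length $2\eps$; the sub-Gaussian parameter depends only on the length of that interval, not on where the conditional mean sits, giving $\E[\exp(\lambda(Z_i-\E[Z_i\mid y_{<i}]))\mid y_{<i}]\le \exp(\lambda^2\eps^2/2)$ and hence the stated constant. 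Second, the ``pointwise'' characterization of $(\eps,\delta)$-differential privacy that you use both for Part 1 and for the reduction to $\delta=0$ is not quite as you state it: it is not true in general that there is a single event of mass at most $\delta$ under both $\cA(S)$ and $\cA(S')$ outside which the likelihood ratio lies in $[e^{-\eps},e^\eps]$. The correct lemma replaces each of the two output distributions by a nearby distribution (within statistical distance $O(\delta)$) for which the pointwise ratio bound holds everywhere, and carrying this replacement through $m$ adaptive rounds is precisely where the $m\delta$ term and some nontrivial bookkeeping come from. Both issues are repairable and are handled in the cited references; neither reflects a wrong approach.
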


\subsection{Differentially Private Sanitization}

A fundamental task in differential privacy is the \emph{data sanitization} problem. Given a database $D = (x_1, \dots, x_n) \in X^n$, the goal of a sanitizer is to privately produce a synthetic database $\hat{D} \in X^m$ that captures the statistical properties of $D$. We are primarily interested in sanitization for boolean-valued functions (equivalently referred to as \emph{counting queries}). Given a function $c: X \to \{0, 1\}$ and a database $D = (x_1, \dots, x_n)$, we write $c(D) = \frac{1}{n} \sum_{i = 1}^n c(x_i)$.

\begin{definition}[Sanitization]
An algorithm $\cA : X^n \to X^m$ is an $(\alpha, \beta)$-accurate sanitizer for a concept class $C$ if for every $D \in X^n$, the algorithm $\cA$ produces a database $\hat{D} \in X^m$ such that
\[\Pr[\exists c \in C : |c(D) - c(\hat{D}) | > \alpha] \le \beta.\]
Here, the probability is taken over the coins of $\cA$.
\end{definition}

In an influential result, Blum, Ligett, and Roth \cite{BLR08} showed that any concept class $C$ admits a differentially private sanitizer with sample complexity $O(\VC(C) \log |X|)$:

\begin{theorem}[\cite{BLR08}]\label{thm:BLR}
For any concept class $C$ over a domain $X$, there exists an $(\alpha, \beta)$-accurate and $(\eps, 0)$-differentially private sanitizer $\cA: X^n \to X^m$ for $C$ when
\[n = O\left(\frac{\VC(C) \cdot \log|X| \cdot \log(1/\alpha)}{\alpha^3 \eps} + \frac{\log(1/\beta)}{\alpha \eps}\right),\]
and $m = O(\VC(C) \log(1/\alpha) / \alpha^2)$.
\end{theorem}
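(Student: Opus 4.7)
The plan is to construct the sanitizer by running the exponential mechanism over the (finite) range $X^m$, where $m$ is chosen just large enough that a size-$m$ synthetic database can in principle approximate $D$ on every query in $C$ simultaneously. Define the quality function $q : X^n \times X^m \to \R$ by
\[
q(D, \hat{D}) = -\max_{c \in C}\bigl|c(D) - c(\hat{D})\bigr|.
\]
Replacing a single entry of $D$ changes $c(D)$ by at most $1/n$ for every $c$, so the sensitivity of $q$ in its first argument is $\Delta = 1/n$. The algorithm $\cA$ then samples $\hat{D} \in X^m$ from the exponential mechanism with score $q$ and privacy parameter $\eps$; pure $(\eps,0)$-differential privacy follows immediately from the standard analysis of the exponential mechanism.

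For utility, the exponential mechanism's standard guarantee gives that, with probability at least $1-\beta$,
\[
q(D, \cA(D)) \;\ge\; \OPT(D) \;-\; \frac{2}{n\eps}\bigl(m\log|X| + \log(1/\beta)\bigr),
\]
where $\OPT(D) = \max_{\hat{D}^\ast \in X^m} q(D,\hat{D}^\ast)$. The remaining work splits into two parts: (a) exhibit some $\hat{D}^\ast \in X^m$ with $q(D, \hat{D}^\ast) \ge -\alpha/2$, so that $\OPT(D)$ is not too negative, and (b) choose $n$ so that the additive loss term above is at most $\alpha/2$. Combining these gives $q(D,\cA(D)) \ge -\alpha$, which is exactly the accuracy requirement for a sanitizer.

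The key step is (a), and it is where the VC dimension enters. Consider drawing $\hat{D}^\ast$ by sampling $m$ points i.i.d.\ from the empirical distribution on $D$ (viewed as a distribution on $X$). Applying the VC uniform-convergence bound of Theorem~\ref{thm:AgnosticGeneralization} with $H = C$, failure probability $1/2$, and accuracy $\alpha/2$ shows that for $m = O(\VC(C)\log(1/\alpha)/\alpha^2)$, the resulting $\hat{D}^\ast$ satisfies $\max_{c \in C}|c(D) - c(\hat{D}^\ast)| \le \alpha/2$ with probability at least $1/2$. In particular such a $\hat{D}^\ast \in X^m$ exists, establishing $\OPT(D) \ge -\alpha/2$.

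Step (b) is then arithmetic. Requiring $\tfrac{2}{n\eps}\bigl(m\log|X| + \log(1/\beta)\bigr) \le \alpha/2$ and substituting the chosen value of $m$ yields
\[
n = O\!\left(\frac{\VC(C)\,\log|X|\,\log(1/\alpha)}{\alpha^3 \eps} + \frac{\log(1/\beta)}{\alpha\eps}\right),
\]
which is exactly the bound claimed in the theorem. The only substantive ingredient is the VC-based existence argument in step (a); privacy and the error expansion are immediate from standard properties of the exponential mechanism, and so that existence argument is the natural place where a careful reader should focus.
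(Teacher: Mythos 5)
Your proposal is correct and reconstructs the standard Blum--Ligett--Roth argument: the paper states Theorem~\ref{thm:BLR} as a citation to \cite{BLR08} without reproving it, and your argument (exponential mechanism over $X^m$ with the max-error quality function, plus a VC uniform-convergence argument to show a good size-$m$ synthetic database exists) is exactly the proof in that cited work. The only cosmetic gap is that Proposition~\ref{prop:exp_mech} as stated in the paper assumes a sensitivity-$1$ integer-valued quality function, so one should rescale your $q$ by $n$ before invoking it, but this changes nothing substantive.
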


When relaxing to $(\eps, \delta)$-differential privacy, the private multiplicative weights algorithm of Hardt and Rothblum \cite{HR10} can sometimes achieve lower sample complexity (roughly $O(\log|C| \sqrt{\log |X|})$).

\begin{theorem}[\cite{HR10}]\label{thm:HR}
For any concept class $C$ over a domain $X$, there exists an $(\alpha, \beta)$-accurate and $(\eps, \delta)$-differentially private sanitizer $\cA: X^n \to X^m$ for $C$ when
\[n = O\left(\frac{(\log|C| + \log(1/\beta)) \cdot \sqrt{\log|X| \cdot \log(1/\delta)}}{\alpha^2 \eps}\right),\]
and $m = O(\VC(C) \log(1/\alpha) / \alpha^2)$.
\end{theorem}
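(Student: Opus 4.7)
The plan is to invoke the private multiplicative weights (PMW) mechanism of Hardt and Rothblum. At a high level, the algorithm maintains a synthetic distribution $D_t$ over the data universe $X$ (initialized to the uniform distribution) and updates it in rounds. In each round, it considers queries $c \in C$ one at a time. If the synthetic answer $c(D_t)$ is already within $\alpha/2$ of the true answer $c(D)$, no update is made; otherwise, $D_{t+1}$ is obtained from $D_t$ by a multiplicative weights reweighting based on $c$. The final synthetic dataset $\hat{D}$ is obtained by sampling $m = O(\VC(C)\log(1/\alpha)/\alpha^2)$ points from the last distribution, so that by a standard VC argument $\hat{D}$ simulates the distribution to within $\alpha$ on every query in $C$.

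First I would establish the utility analysis noninteractively. Using the standard KL-divergence potential $\Phi_t = D_{\mathrm{KL}}(D \,\|\, D_t)$, each successful update decreases the potential by $\Omega(\alpha^2)$, while $\Phi_0 \le \log|X|$. This bounds the total number of updates by $T = O(\log|X|/\alpha^2)$, independently of $|C|$. Next, to make the mechanism private, I would use the Laplace mechanism to estimate $c(D) - c(D_t)$ for the queries we consider, and use the sparse-vector / above-threshold technique so that privacy cost is paid only for the $T$ queries that actually trigger an update, not for all of the $|C|$ queries examined. Standard sparse-vector accuracy requires noise scale $O(\log|C|/(n\eps_0))$ per inspection, with $\eps_0$ being the per-update privacy budget.

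The privacy analysis then combines the sparse-vector guarantee with advanced composition (Theorem~\ref{thm:composition}) over the $T$ updates. To achieve overall $(\eps,\delta)$-differential privacy, I set $\eps_0 \approx \eps/\sqrt{T \log(1/\delta)}$. The accuracy requirement is that the total noise (of size $O(\log|C|/(n\eps_0))$) is at most $\alpha$, which forces
\[
n = O\!\left(\frac{\log|C|}{\alpha \,\eps_0}\right) = O\!\left(\frac{\log|C| \cdot \sqrt{T \log(1/\delta)}}{\alpha \,\eps}\right) = O\!\left(\frac{\log|C| \cdot \sqrt{\log|X| \cdot \log(1/\delta)}}{\alpha^2 \,\eps}\right),
\]
matching the claimed bound (the $\log(1/\beta)$ term arises from union-bounding over at most $|C|$ noise draws via a standard Laplace tail bound).

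The main obstacle is the bookkeeping for the sparse-vector step: one must carefully ensure that the number of ``above threshold'' answers is $T$ and not $|C|$, and that the threshold is tuned so that missed updates can only occur when $c(D_t)$ is already $\alpha$-accurate. Once this is in place, the utility analysis for multiplicative weights and the advanced composition theorem do the rest. The final sampling step to convert $D_T$ into a synthetic database $\hat{D}$ of size $m$ is a post-processing step and incurs no additional privacy loss.
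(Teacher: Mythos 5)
This theorem appears in the paper only as a citation to Hardt--Rothblum, with no proof given; your sketch correctly reconstructs the standard private multiplicative weights argument (the KL-potential bound $T = O(\log|X|/\alpha^2)$ on the number of updates, sparse vector so that privacy is charged only for the $T$ updates, advanced composition giving $\eps_0 \approx \eps/\sqrt{T\log(1/\delta)}$, and a VC-based subsampling step to produce the size-$m$ synthetic database). The arithmetic chains together to exactly the stated sample complexity, so this is the same approach as the cited source.
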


However, for specific concept classes, sanitizers are known to exist with much lower sample complexity. We first give a sanitizer for point functions with essentially optimal sample complexity, which improves and simplifies a result of \cite{BNS13b}.

\begin{proposition}\label{prop:sanPoint}
There exists an $(\alpha, \beta)$-accurate and $(\eps, \delta)$-differentially private sanitizer for $\point_X$ with sample complexity
\[n = O\left( \frac{\log(1/\alpha\beta\delta)}{\alpha\eps}\right).\]
\end{proposition}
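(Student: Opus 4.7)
The plan is to use the standard stability-based histogram mechanism. The key reduction is that for each $c_x \in \point_X$, the statistic $c_x(D) = f_D(x) := |\{i : x_i = x\}|/n$ is just the empirical frequency of $x$ in $D$. Hence sanitizing with respect to $\point_X$ is equivalent to producing a synthetic database $\hat{D}$ whose empirical frequency vector lies within $\ell_\infty$-distance $\alpha$ of that of $D$. The mechanism does exactly this: for each $x$ appearing in $D$, compute a noisy count $\tilde{f}(x) = f_D(x) + \Lap(2/(\eps n))$, and suppress all $x$ not appearing in $D$; release only those with $\tilde{f}(x) \ge T$ for a threshold $T = \Theta(\log(1/\delta)/(\eps n))$; then build $\hat{D}$ by including, for each released $x$, a number of copies proportional to $\tilde{f}(x)$ rounded to a multiple of $1/m$ for some $m = \Theta(1/\alpha)$, padding with an arbitrary fixed symbol.

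Privacy is the standard stability-based analysis. For neighboring $D, D'$ differing in a single entry, only two coordinates of the histogram can change. For a coordinate $x \in D \cap D'$, the per-coordinate sensitivity is $1/n$, so Laplace noise at scale $2/(\eps n)$ gives $(\eps/2,0)$-DP at that coordinate, and basic composition over the (at most two) affected in-common coordinates gives $(\eps, 0)$-DP. For a coordinate $x$ in the symmetric difference, the threshold $T$ is set precisely so that a singleton count exceeds $T$ with probability at most $\delta/2$; this accounts for the $\delta$ term. For accuracy, I would pick $n = \Omega(\log(1/(\alpha\beta\delta))/(\alpha\eps))$, so that $T \le \alpha/4$ and, on a good event of probability $\ge 1-\beta$, every ``heavy hitter'' with $f_D(x) \ge \alpha/2$ has noisy count within $\alpha/4$ of truth. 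On this event every $x$ with $f_D(x) \ge \alpha$ is released with an accurate count, every released $x$ has $f_D(x) \ge T - \alpha/4$, and every suppressed $x$ has $f_D(x) \le T + \alpha/4 \le \alpha$; combined with the $1/m$ rounding, this gives $|c_x(D) - c_x(\hat{D})| \le \alpha$ for every $x \in X$ simultaneously.

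The main obstacle is getting the precise $\log(1/\alpha)$ factor in the sample complexity rather than $\log|X|$ or $\log n$. A priori the database has up to $n$ distinct elements, and a naive union bound over the Laplace noise at each of their coordinates would cost $\log n$, making the bound implicit in $n$. The trick is that tight noise control is only needed at the $O(1/\alpha)$ heavy hitters (those with $f_D(x) \ge \alpha/2$): non-heavy-hitters either fall below threshold (in which case their contribution is absorbed into the $\alpha$ additive error) or happen to be released with noisy count in $[T, T+\alpha/4]$ (in which case the true count is still $\le \alpha$), and in either case large Laplace noise at their coordinate cannot cause the $\ell_\infty$-error to exceed $\alpha$. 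Restricting the union bound to the $O(1/\alpha)$ heavy hitters therefore yields the claimed $\log(1/(\alpha\beta\delta))$ dependence.
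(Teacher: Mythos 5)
Your mechanism is the standard stability-based histogram, and its privacy analysis is fine, but the accuracy argument has a genuine gap exactly at the point you flag as ``the main obstacle.'' Your claim that a released non-heavy-hitter must have noisy count in $[T, T+\alpha/4]$ is unsupported: in your mechanism every one of the up to $n$ distinct elements of $D$ receives Laplace noise, and an element with true frequency $1/n$ whose Laplace draw is large and positive is released \emph{at its inflated noisy value} $\tilde f(x)$, so $|c_x(D)-c_x(\hat D)|$ can be far larger than $\alpha$. Nothing in your good event (which controls noise only at the $O(1/\alpha)$ heavy hitters) rules this out, and controlling it for all bins forces a union bound over $n$ coordinates, i.e.\ a requirement of the form $n\cdot e^{-\eps n\alpha/8}\le\beta$. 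Because $n$ carries a $1/\eps$ factor, this genuinely fails for small $\eps$ (e.g.\ constant $\alpha,\beta,\delta$ and tiny $\eps$), so the claimed bound $O(\log(1/\alpha\beta\delta)/\alpha\eps)$ does not follow.

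The paper's fix is a two-threshold design that makes your restricted union bound actually valid: \emph{before} adding any noise, deterministically answer $0$ for every $x$ with true count $c_x(D)\le\alpha/4$; only the at most $4/\alpha$ remaining bins are ever noised, and their noisy answers are additionally zeroed if below $\alpha/2$. This pre-truncation depends on the true count, so it is not post-processing and must be analyzed for privacy directly: the paper does a case analysis on the two affected bins of neighboring databases, paying $\delta$ only in the boundary case where one database has $c_x(D)>\alpha/4$ and its neighbor has $c_x(D')\le\alpha/4$ (there the noisy answer falls below the $\alpha/2$ cutoff except with probability $\delta/2$). With that modification your accuracy argument goes through as you intended --- the union bound really is over only $O(1/\alpha)$ bins --- and the reconstruction of $\hat D$ from $\alpha$-accurate answers is as you describe.
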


\begin{proof}

To give a $(2\alpha, \beta)$-accurate sanitizer, it suffices to produces, for each point function $c_x$, an approximate answer $a_x \in [0, 1]$ with $|a_x - c_x| \le \alpha$. This is because given these approximate answers, one can reconstruct a database $\hat{D}$ of size $O(1/\alpha)$ with $|c_x(\hat{D}) - a_x| \le \alpha$ for every $x \in X$.

The algorithm for producing the answers $a_x$ is as follows.
\begin{algorithm}[H]
\caption{Query release for $\point_X$}\label{alg:point}
\textbf{Input:} Privacy parameters $(\eps, \delta)$, database $D \in X^n$ \\
For each $x \in X$, do the following:
\begin{enumerate}
\item If $c_x(D) \le \frac{\alpha}{4}$, release $a_x = 0$
\item Let $\hat{a}_x = c_x(D) + \Lap(2/\eps n)$
\item If $\hat{a}_x \le \frac{\alpha}{2}$, release $a_x = 0$
\item Otherwise, release $a_x = \hat{a}_x$
\end{enumerate}
\end{algorithm}

\newcommand{\relpoint}{a}

First, we argue that Algorithm \ref{alg:point} is $(\eps, \delta)$-differentially private. Below, we write $X \approx_{(\eps, \delta)} Y$ to denote the fact that for every measurable set $S$ in the union of the supports of $X$ and $Y$, we have $\Pr[X \in S] \le e^\eps \Pr[Y \in S] + \delta$.

Let $D \sim D'$ be adjacent databases of size $n$, with $x \in D$ replaced by $x' \in D'$. Then the output distribution of the mechanism differs only on its answers to the queries $c_x$ and $c_{x'}$. Let us focus on $c_x$. If both $c_x(D) \le \alpha / 4$ and $c_x(D') \le \alpha/4$, then the mechanism always releases $0$ for both queries. If both $c_x(D) > \alpha/4$ and $c_x(D') > \alpha/4$, then $a_x(D) \approx_{(\eps/2, 0)} a_x(D')$ by properties of the Laplace mechanism. Finally, if $c_x(D) > \alpha / 4$ but $c_x(D') \le \alpha/4$, then $c_x(D') = 0$ with probability $1$. Moreover, we must have $\point_x(D) \le \alpha/4 + 1/n$, so
\[\Pr[a_x(D) = 0] \ge \Pr[Lap(2/\eps n) < \alpha/4 -1/n] = 1 - \frac{1}{2} \exp(-\eps n \alpha / 8 + \eps /2) \ge 1 - \delta/2.\]
So in this case, $a_x(D) \approx_{(0, \delta/2)} a_x(D')$. Therefore, we conclude that overall $a_x(D) \approx_{(\eps/2, \delta/2)} a_x(D')$. An identical argument holds for $a_{x'}$, so the mechanism is $(\eps, \delta)$-differentially private.

Now we argue that the answers $a_x$ are accurate. First, the answers are trivially $\alpha$-accurate for all queries $c_x$ on which $c_x(D) \le \alpha / 4$. For each of the remaining queries, it is $\alpha$-accurate with probability at least
\[\Pr[|Lap(2/\eps n)| < \alpha/2] = 1 - \exp(-\eps n \alpha / 4) \ge 1 - \frac{\alpha\beta}{4}.\]
Taking a union bound over the at most $4/\alpha$ queries with $\point_x(D) > \alpha/4$, we conclude that the mechanism is $\alpha$-accurate for all queries with probability at least $1- \beta$.

\end{proof}

Bun et al. \cite{BNSV15}, improving on work of Beimel et al. \cite{BNS13b}, gave a sanitizer for threshold functions with sample complexity roughly $2^{\log^*|X|}$.

\begin{proposition}[\cite{BNSV15}] \label{prop:sanThreshold}
There exists an $(\alpha, \beta)$-accurate and $(\eps, \delta)$-differentially private sanitizer for $\thresh_X$ with sample complexity
\[n = O\left( \frac{1}{\alpha \eps} \cdot  2^{\log^*|X|} \cdot \log^*|X| \cdot \log \left(\frac{\log^*|X|}{\eps\delta}\right) \cdot \log(1/\beta) \cdot \log^{2.5}(1/\alpha)\right).\]
\end{proposition}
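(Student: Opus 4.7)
The plan is to reduce threshold sanitization to a private \emph{interior point} problem and then solve the latter recursively. Call $y \in X$ an interior point for a multiset $D \subseteq X$ if $\min(D) \le y \le \max(D)$. The main technical engine of \cite{BNSV15} is an $(\eps,\delta)$-differentially private interior point algorithm whose sample complexity is $2^{O(\log^*|X|)} \cdot \poly(\log(1/\delta)/\eps, \log(1/\beta))$, nearly independent of the domain size.

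First I would build the interior point algorithm by recursion on the universe size. At each level, the algorithm partitions the current universe of size $N$ into consecutive blocks of size roughly $\log N$, uses a stability-based mechanism (a variant of \emph{propose-test-release} or the \emph{choosing mechanism}) to privately identify a single block that contains a non-trivial fraction of the input points, and then recurses within that block. Each level consumes $\poly(\log(1/\delta)/\eps, \log(1/\beta))$ samples, and since the universe size drops from $N$ to $\log N$ per level, the recursion terminates after $\log^*|X|$ levels, at which point an exponential mechanism on a constant-sized candidate set suffices. Composing privacy across the $\log^*|X|$ levels via advanced composition (Theorem~\ref{thm:composition}) telescopes the sample complexity to the claimed $2^{O(\log^*|X|)}$ bound, with the $\log(\log^*|X|/(\eps\delta))$ factor arising from the advanced-composition $\delta$-term and the additional power of $\log^*|X|$ from the per-level overhead. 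The main obstacle here is verifying that the ``find a heavy block'' subroutine can genuinely be carried out with sample complexity independent of $N$: a naive exponential-mechanism selection over the $N/\log N$ blocks would cost $\log N$ samples per level and destroy the telescoping, so a true stability argument is required.

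Second I would lift the interior point algorithm to a full sanitizer by estimating $O(1/\alpha)$ equally spaced quantiles of $D$ along a balanced binary tree of depth $\log(1/\alpha)$. At the root, the interior point algorithm is applied to $D$ to obtain a private approximate median, which conceptually splits $D$ into two halves, on each of which one recurses. The $O(1/\alpha)$ private quantile estimates obtained this way then determine a synthetic database $\hat D$ of size $O(1/\alpha)$ that matches $D$ on every threshold query up to error $\alpha$. Two additional factors enter the final bound: the $1/\alpha$ factor in sample complexity comes from ensuring that every leaf of the recursion still has at least the interior point sample budget (the smallest sub-pieces of $D$ have size roughly $\alpha n$), and the $\log^{2.5}(1/\alpha)$ factor comes from advanced composition across the $\log(1/\alpha)$ tree levels combined with a union bound over the $O(1/\alpha)$ quantile estimates to drive each per-query failure probability below $\beta$. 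Accuracy then reduces to the observation that a rank-$\alpha/4$ approximate median at each internal node propagates to rank-$\alpha$ approximate quantiles globally, which is exactly what is needed for the synthetic $\hat D$ to be an $\alpha$-accurate sanitization of $D$ with respect to $\thresh_X$.
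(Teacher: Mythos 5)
First, note that the paper does not prove Proposition~\ref{prop:sanThreshold}; it imports it from \cite{BNSV15}, so your attempt has to be measured against that paper's argument. Your overall two-stage architecture is the right one: reduce sanitization of $\thresh_X$ to a private interior-point primitive, then lift the primitive to $O(1/\alpha)$ approximate quantiles via a balanced recursion tree, paying the extra $1/\alpha$ and $\polylog(1/\alpha)$ factors there. That second stage is essentially what \cite{BNSV15} does.

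The genuine gap is in your first stage, which is the heart of the result. The recursion ``partition $[N]$ into consecutive blocks of size $\log N$, privately find a block containing a non-trivial fraction of the points, recurse inside it'' fails for a reason more basic than the selection cost you flag: a heavy block need not exist. If the $n$ input points land in $n$ distinct blocks, the best block has score $1$, and no stability-based selector (propose-test-release, the choosing mechanism, $\adist$) can output a block without violating $(\eps,\delta)$-differential privacy --- the obstruction is information-theoretic, not an artifact of using the exponential mechanism. Relaxing to ``a block with many points on each side'' just reproduces the interior-point problem on a domain of size $N/\log N$, so the recursion does not make progress. The actual construction of \cite{BNSV15} avoids fixing a single scale: it pairs up the input points, replaces each pair by the \emph{length} of the longest common prefix of their binary representations (an element of a domain of size $\log N+1$), recursively finds an interior point $z^*$ of that derived database, and only then uses the choosing mechanism to select a popular prefix of length $z^*$ --- at which point a large score is guaranteed by the interior-point property of $z^*$, so bounded-growth selection applies. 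This also explains the $2^{\log^*|X|}$ factor, which your accounting does not: the pairing makes the recursion multiplicative, $n(N) \ge 2\,n(\log N) + O(\cdot)$, whereas an additive per-level cost as you describe would yield only $\log^*|X|\cdot\poly(\cdot)$.
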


%\knote{cite composition}

%\knote{define sanitization, exponential mechanism, cite BLR08}

%\mnote{cite what is known about private learning for $k = 1$, e.g. transformation from BNS SODA paper}

\subsection{Private learners and multi-learners}

Generalizing on the concept of private learners~\cite{KLNRS08}, we say that an algorithm $\cA$ is $(\alpha,\beta,\epsilon,\delta)$-private PAC $k$-learner for $C$ using $H$ if $\cA$ is $(\alpha,\beta)$-PAC $k$-learner for $C$ using $H$, and  $\cA$ is $(\epsilon,\delta)$-differentially private (similarly with agnostic private PAC $k$-learners). We omit the parameter $k$ when $k=1$ and the parameter $\delta$ when $\delta=0$.

%\knote{recall known results for case $k=1$: general construction, POINT, THRESH, PAR (proper/improper and pure/approx upper/lower bounds)}

For the case $k=1$, we have a generic construction with sample complexity proportional to $\log|C|$:
%\paragraph{Theorem~\ref{thm:klnrs}}(\cite{KLNRS08}).\;
\begin{theorem}[\cite{KLNRS08}]\label{thm:klnrs} 
Let $C$ be a concept class, and $\alpha,\beta,\epsilon > 0$. There exists an $(\alpha,\beta,\epsilon)$-private agnostic proper learner for $C$ with sample complexity $O\left((\log |C| + \log 1/\beta)(1/(\epsilon\alpha) + 1/\alpha^2)\right)$.
\end{theorem}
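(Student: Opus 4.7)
The plan is to instantiate the exponential mechanism over the finite concept class $C$ with score function equal to (minus) the empirical error on the sample, and then combine the exponential mechanism's utility guarantee with uniform convergence of empirical errors.

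\medskip

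\noindent\textbf{Step 1 (Mechanism).} Given a sample $S=((x_i,y_i))_{i=1}^n\in(X\times\{0,1\})^n$, define the score $q(S,c)=-|\{i:c(x_i)\neq y_i\}|=-n\cdot\error_S(c)$. Swapping one labeled example changes $q(S,c)$ by at most $1$, so $q$ has sensitivity $1$. Run the exponential mechanism, outputting each $c\in C$ with probability proportional to $\exp(\epsilon q(S,c)/2)$. This is $(\epsilon,0)$-differentially private by the standard analysis of the exponential mechanism, and since it only ever outputs elements of $C$, it is a proper learner.

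\medskip

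\noindent\textbf{Step 2 (Utility of the mechanism).} The standard utility bound for the exponential mechanism asserts that, except with probability $\beta/2$, the mechanism returns some $\hat c\in C$ with
\[
\error_S(\hat c)\;\le\;\min_{c\in C}\error_S(c)\;+\;\frac{2}{\epsilon n}\bigl(\log|C|+\log(2/\beta)\bigr).
\]
Choosing $n=\Omega\!\bigl((\log|C|+\log(1/\beta))/(\epsilon\alpha)\bigr)$ makes this additive slack at most $\alpha/3$.

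\medskip

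\noindent\textbf{Step 3 (Uniform convergence).} For the finite class $C$, a Chernoff bound plus a union bound over $c\in C$ gives that, with probability at least $1-\beta/2$ over $S\sim\cP^n$,
\[
\bigl|\error_S(c)-\error_\cP(c)\bigr|\;\le\;\alpha/3\qquad\text{for every }c\in C,
\]
provided $n=\Omega\!\bigl((\log|C|+\log(1/\beta))/\alpha^2\bigr)$. (One could alternatively appeal to Theorem~\ref{thm:AgnosticGeneralization} using $\VC(C)\le\log|C|$, yielding the same bound up to constants.)

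\medskip

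\noindent\textbf{Step 4 (Putting it together).} Condition on both good events, which together fail with probability at most $\beta$. Let $c^*\in\arg\min_{c\in C}\error_\cP(c)$. Then
\[
\error_\cP(\hat c)\;\le\;\error_S(\hat c)+\tfrac{\alpha}{3}\;\le\;\error_S(c^*)+\tfrac{2\alpha}{3}\;\le\;\error_\cP(c^*)+\alpha,
\]
so $\cA$ is an $(\alpha,\beta)$-accurate agnostic PAC learner. Taking $n$ to be the maximum of the two bounds from Steps 2 and 3 gives sample complexity $O\!\bigl((\log|C|+\log(1/\beta))(1/(\epsilon\alpha)+1/\alpha^2)\bigr)$, as claimed.

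\medskip

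\noindent\textbf{Expected obstacle.} There is no serious obstacle: the two ingredients (exponential mechanism utility and uniform convergence for finite classes) are classical. The only care needed is to balance the two error budgets ($\alpha/3$ each) so that the final bound is $\alpha$, and to verify that sensitivity of the score is indeed $1$ under the neighboring relation of Definition of differential privacy (which it is, since a single changed example flips at most one indicator in the sum).
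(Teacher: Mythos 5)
Your proof is correct and is essentially the canonical argument from the cited reference \cite{KLNRS08}: the paper itself states this theorem without proof, and the standard proof is exactly your combination of the exponential mechanism with score equal to minus the empirical error (sensitivity $1$, yielding the $1/(\epsilon\alpha)$ term) and a Chernoff-plus-union-bound uniform convergence argument over the finite class (yielding the $1/\alpha^2$ term). The accounting of the error budget and the final sample complexity both match the claimed bound.
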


Beimel, Nissim, and Stemmer \cite{BNS15} gave a generic transformation from data sanitization to private learning, which generally gives improved sample complexity upper bounds.

\begin{theorem}[\cite{BNS15}]\label{thm:BNS15}
Suppose there exists an $(\alpha, \beta)$-accurate and $(\eps, \delta)$-differentially private sanitizer for $C^\oplus$ with sample complexity $m$. Then there exists a proper $(2\alpha, 2\beta)$-PAC and $(\eps + \eps', \delta)$-differentially private learner for $C$ with sample complexity
\[O\left(m + \frac{\VC(C)}{\alpha^3 \eps'} \log \left(\frac{1}{\alpha}\right) + \frac{1}{\alpha \eps'} \log \left(\frac{1}{\beta}\right)\right).\]
\end{theorem}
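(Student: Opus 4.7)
The plan is to use a two-step ``sanitize-then-select'' approach: first exploit the assumed sanitizer for $C^\oplus$ to reduce learning $C$ to learning a much smaller finite hypothesis class, then feed the small class to the generic private learner of Theorem~\ref{thm:klnrs}. Concretely, I would split the input $S$ into disjoint portions $S_1$ of size $m$ and $S_2$ of size roughly the second term of the claimed bound, use $S_1$ only inside the sanitizer and $S_2$ only inside the generic learner. Because the two portions are disjoint and each is touched by a single DP mechanism, the overall guarantee is $(\eps + \eps', \delta)$-differentially private by standard (sequential) composition.

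For the sanitization step, let $D_1$ denote the unlabeled projection of $S_1$ and run the assumed $(\alpha, \beta)$-accurate, $(\eps, \delta)$-DP sanitizer for $C^\oplus$ on $D_1$ to obtain a synthetic database $\hat D$. The key structural property inherited is that, with probability at least $1-\beta$, for every $h, h' \in C$,
\[ |(h \oplus h')(D_1) - (h \oplus h')(\hat D)| \le \alpha. \]
Now extract a candidate set $H \subseteq C$ by picking one representative per distinct labeling induced by $C$ on $\hat D$. By Sauer's lemma, $|H| \le (e|\hat D|/\VC(C))^{\VC(C)}$, so $\log|H| = O(\VC(C)\log(1/\alpha))$ (using that a standard sanitizer may be taken to output $|\hat D| = O(\VC(C^\oplus)\log(1/\alpha)/\alpha^2)$ and $\VC(C^\oplus) = O(\VC(C))$).

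The crucial observation is this: for the unknown target $c^* \in C$, there is some $h^* \in H$ that matches $c^*$ on every point of $\hat D$, so $(h^* \oplus c^*)(\hat D) = 0$. Combined with the sanitizer guarantee, this gives $\error_{S_1}(h^*) = (h^* \oplus c^*)(D_1) \le \alpha$. Since $m$ is at least the sanitizer's sample complexity (which dominates the VC threshold of Theorem~\ref{thm:Generalization}), standard VC generalization then yields $\error_{\mathcal D}(h^*, c^*) \le 2\alpha$ with high probability. Thus $H$ contains a hypothesis that is nearly optimal on the distribution.

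For the selection step, apply the generic private agnostic proper learner of Theorem~\ref{thm:klnrs} to $S_2$ with hypothesis class $H$. Substituting $\log|H| = O(\VC(C) \log(1/\alpha))$ into the sample complexity of Theorem~\ref{thm:klnrs} matches (up to constants) the second and third terms in the claimed bound; the resulting $\hat h \in H \subseteq C$ satisfies $\error_{\mathcal D}(\hat h, c^*) \le \min_{h' \in H} \error_{\mathcal D}(h', c^*) + O(\alpha) \le 2\alpha + O(\alpha)$, which after rescaling constants delivers the $(2\alpha, 2\beta)$-PAC guarantee as stated. The main obstacle is recognizing that a sanitizer for $C^\oplus$ is exactly the right primitive for certifying empirical error against an \emph{unknown} target: any $h^* \in H$ that matches $c^*$ on $\hat D$ automatically inherits low empirical error on $S_1$ via the $(h^* \oplus c^*)$ query, and Sauer's lemma keeps $|H|$ small enough that the generic learner closes the loop without inflating the sample complexity beyond the stated bound. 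Everything else is routine accuracy and privacy bookkeeping.
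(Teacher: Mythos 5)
Your proposal is correct and matches the paper's approach (given in the proof of Lemma~\ref{lem:GenericLearner}, the $k$-concept generalization of this theorem): sanitize the unlabeled data with respect to $C^\oplus$, take one representative of each dichotomy of $C$ on the synthetic database to form a small class $H$ that provably contains a hypothesis of low empirical error against the unknown target, and privately select from $H$. The only cosmetic differences are that the paper runs the sanitizer and an exponential-mechanism selection on the \emph{same} sample and invokes sequential composition rather than splitting the data into disjoint halves, and it bounds $\log|H|$ by the cruder $|\hat D| = O(\VC(C)\log(1/\alpha)/\alpha^2)$ rather than via Sauer's lemma --- which is exactly why the stated bound carries $\alpha^{-3}$ where your accounting would give a slightly better $\alpha^{-1}$ in the selection term.
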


A number of works \cite{BKN10, BNS13a, BNS13b, FX14, BNSV15} have established sharper upper and lower bounds for learning the specific concept classes $\point_X$ and $\thresh_X$. In the case of pure differential privacy, $\point_X$ requires $\Theta(\log |X|)$ samples to learn properly \cite{BKN10}, but can be learned improperly with $O(1)$ samples. On the other hand, the class of threshold functions $\thresh_X$ require $\Omega(\log |X|)$ samples to learn, even improperly \cite{FX14}. In the case of approximate differential privacy, $\point_x$ and $\thresh_x$ can be learned properly with sample complexities $O(1)$ \cite{BNS13b} and $\tilde{O}(2^{\log^*|X|})$ \cite{BNSV15}, respectively. Moreover, properly learning threshold functions requires sample complexity $\Omega(\log^*|X|)$.

\subsection{Private PAC learning vs. Empirical Learning}

We saw by Theorem \ref{thm:generalization} that when an empirical $k$-learner $\cA$ for a concept class $C$ is run on a random sample of size $\Omega(\VC(C))$, it is also a (agnostic) PAC $k$-learner. In particular, if an empirical $k$-learner $\cA$ is differentially private, then it also serves as a differentially private (agnostic) PAC $k$-learner.

Generalizing a result of \cite{BNSV15}, the next theorem shows that the converse is true as well: a differentially private (agnostic) PAC $k$-learner yields a private empirical $k$-learner with only a constant factor increase in the sample complexity.

%\paragraph{Theorem~\ref{thm:pac-to-empirical}.} 
\begin{theorem}\label{thm:pac-to-empirical}
Let $\eps \le 1$. Suppose $\cA$ is an $(\epsilon,\delta)$-differentially private $(\alpha, \beta)$-accurate (agnostic) PAC $k$-learner for a concept class $\cC$ with sample complexity $n$. Then there is an $(\epsilon,\delta)$-differentially private $(\alpha, \beta)$-accurate (agnostic) empirical $k$-learner $\tilde{\cA}$ for $\cC$ with sample complexity $m = 9n$. Moreover, if $\cA$ is proper, then so is the resulting empirical learner $\tilde{\cA}$.
\end{theorem}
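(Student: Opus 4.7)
The plan is to define $\tilde{\cA}$ by random subsampling: on input $S$ of size $m=9n$, draw a sample $T$ of size $n$ from $S$ and return $\cA(T)$. Properness is preserved by construction, so the main content is proving accuracy and privacy simultaneously.

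\textbf{Accuracy.} I would view $S$ as inducing the empirical distribution $\cP = U_S$ over $X \times \{0,1\}^k$. If $T$ is drawn i.i.d.\ with replacement from $S$, then conditional on $S$, it is an i.i.d.\ sample of size $n$ from $\cP$. In the PAC (non-agnostic) case, consistency of labels with concepts $(c_1,\dots,c_k) \in C^k$ is preserved by subsampling, so $\cA$'s PAC guarantee implies that with probability at least $1-\beta$ the output satisfies $\error_{\cP_j}(c_j, h_j) \le \alpha$ for all $j$. In the agnostic case, $\cA$'s agnostic guarantee gives $\error_{\cP_j}(h_j) - \min_{c \in C} \error_{\cP_j}(c) \le \alpha$ for all $j$. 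Since $\cP_j = U_{S|_j}$, the generalization error with respect to $\cP_j$ is exactly the empirical error on $S|_j$, yielding the empirical learning guarantee.

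\textbf{Privacy.} Privacy follows from amplification by subsampling. For adjacent inputs $S \sim S'$ differing in one entry at position $j^\ast$, couple the samplings of $T$ and $T'$ via shared random indices $I_1, \dots, I_n \in [m]$. Then $T$ and $T'$ are $N$-neighbors where $N = |\{i : I_i = j^\ast\}|$; with with-replacement sampling, $N \sim \mathrm{Binomial}(n, 1/m)$. Applying $N$-group privacy of $\cA$ and integrating over the sampling randomness yields a bound of the form
\[\Pr[\tilde{\cA}(S) \in E] \le \mathbb{E}\bigl[e^{N\epsilon} \Pr[\cA(T') \in E \mid I]\bigr] + \delta \cdot \mathbb{E}\bigl[N e^{(N-1)\epsilon}\bigr].\]
For $m = 9n$ and $\epsilon \le 1$, the Binomial has expectation $1/9$ and rapidly decaying tails, so these moments can be bounded tightly enough to conclude $(\epsilon, \delta)$-DP for $\tilde{\cA}$.

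\textbf{Main obstacle.} The critical step is calibrating the privacy amplification analysis to achieve exactly $(\epsilon, \delta)$-DP rather than some degraded parameters. A naive application of group privacy would blow up the privacy parameters by a factor of $\mathbb{E}[N]$ or worse; controlling the $N \ge 2$ tail and leveraging $\epsilon \le 1$ is essential. An alternative design subsamples without replacement, which gives the cleanest privacy bound via the standard subsampling lemma (yielding $\epsilon' = \ln(1 + (n/m)(e^\epsilon - 1)) \le \epsilon$ and $\delta' = (n/m)\delta \le \delta$); under this design, the accuracy argument instead requires noting that $\cA$'s PAC guarantee transfers to without-replacement samples from $U_S$, e.g.\ via uniform-convergence-style arguments that extend to sampling without replacement. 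Either way, the factor $9$ in $m = 9n$ is what makes both the privacy accounting and the accuracy transfer go through.
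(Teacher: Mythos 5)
Your proposal matches the paper's proof essentially exactly: the paper also defines $\tilde{\cA}$ by sampling $n$ rows with replacement from the size-$9n$ input, argues accuracy by observing that $\cA$'s (agnostic) PAC guarantee applies to the uniform distribution over the rows of $S$ (whose generalization error is precisely the empirical error on $S$), and establishes privacy via the standard ``secrecy-of-the-sample'' subsampling lemma of \cite{KLNRS08, BNSV15}. The only difference is that the paper cites that lemma as a black box, whereas you sketch the group-privacy/coupling argument behind it; the substance is the same.
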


\begin{proof}
We give the proof for the agnostic case; the non-agnostic case is argued identically, and is immediate from \cite{BNSV15}. To construct the empirical learner $\tilde{\cA}$, we use the fact that the given learner $\cA$ performs well on any distribution over labeled examples -- in particular, it performs well on the uniform distribution over rows of the input database to $\tilde{\cA}$. Consider a database $S = ((x_i, y_{1, i}, \dots, y_{k, i}))_{i = 1}^m \in (X \times \{0, 1\}^k)^m$. On input $S$, define $\tilde{\cA}$ by sampling $n$ rows from $S$ (with replacement), and outputting the result of running $\cA$ on the sample. Let $\cS$ denote the uniform distribution over the rows of $S$, and let $\cS_j$ be its marginal distribution which is uniform over $S|_j = ((x_i, y_{j, i}))_{i = 1}^m$. Then sampling $n$ rows from $S$ is equivalent to sampling $n$ rows i.i.d. from $\cS$. Hence, if $(h_1, \dots, h_k)$ is the output of $\cA$ on the subsample, we have
$$\Pr\left[\max_{1\leq j\leq k}\left(\error_{S|_j}(h_j)  -  \min_{c \in C}\left(\error_{S_j}(c)\right) \right)> \alpha \right]  = \Pr\left[\max_{1\leq j\leq k}\left(\error_{\cS_j}(h_j)  -  \min_{c \in C}\left(\error_{\cS_j}(c)\right) \right)> \alpha \right] \leq \beta.$$

To show that $\tilde{\cA}$ remains $(\eps, \delta)$-differentially private, we apply the following ``secrecy-of-the-sample'' lemma \cite{KLNRS08, BNSV15}, which shows that the sampling procedure does not hurt privacy.

\begin{lemma}
Fix $\eps \le 1$ and let $\cA$ be an $(\eps, \delta)$-differentially private algorithm with sample complexity $n$. For $m \ge 2n$, the algorithm $\tilde{\cA}$ described above is $(\tilde{\eps}, \tilde{\delta})$ for
\[\tilde{\eps} = \frac{6\eps m}{n} \quad \text{and} \quad \tilde{\delta} = 4\exp\left(\frac{6\eps m}{n}\right) \cdot \frac{m}{n} \cdot \delta.\]
\end{lemma}
\end{proof}

\section{Upper Bounds on the Sample Complexity of Private Multi-Learners}

\subsection{Generic Construction}

In this section we present the following general upper bounds on the sample complexity of private $k$-learners. 

\begin{theorem}\label{thm:GeneralUpper}
Let $C$ be a finite concept class, and let $k\geq 1$. There exists a proper agnostic $(\alpha,\beta,\epsilon)$-private PAC $k$-learner for $C$ with sample complexity 
\vspace{-7px}
$$O_{\alpha,\beta,\epsilon}\Big(  k\cdot\log k + \min\big\{ k\cdot\log|C| \;,\; (k+ \log|X| )\cdot\VC(C) \big\} \Big),\vspace{-7px}$$
and there exists a proper agnostic $(\alpha,\beta,\epsilon,\delta)$-private PAC $k$-learner for $C$ with sample complexity 
\vspace{-7px}
$$O_{\alpha,\beta,\epsilon,\delta}\left(\sqrt{k}\cdot\log k + \min\left\{ \sqrt{k}\cdot\log|C| \;,\; (\sqrt{k}+\log|X|)\cdot\VC(C) \;,\; \sqrt{k}\cdot\VC(C) + \sqrt{\log|X|}\cdot\log|C| \right\} \right).$$
\end{theorem}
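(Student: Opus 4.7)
The plan is to prove Theorem~\ref{thm:GeneralUpper} by exhibiting three candidate private agnostic multi-learners, one for each expression inside the minima; running the one with smallest sample complexity gives the claimed bound. First, to obtain the $k\log|C|$ (pure) and $\sqrt{k}\log|C|$ (approximate) terms, I would apply the generic private proper agnostic learner of Kasiviswanathan et al.\ (Theorem~\ref{thm:klnrs}) independently to each of the $k$ label coordinates. For pure DP, allocate privacy $\epsilon/k$ per subcall so that basic composition (Theorem~\ref{thm:composition}(1)) yields overall $(\epsilon,0)$-DP; each subcall needs $O((\log|C|+\log(k/\beta))/(\alpha\epsilon/k))$ samples, which accounts for both the $k\log|C|$ and the additive $k\log k$ terms after a union bound over the $k$ failure events. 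For approximate DP, allocate privacy $\tilde\Theta(\epsilon/\sqrt{k\ln(1/\delta)})$ per subcall and invoke advanced composition (Theorem~\ref{thm:composition}(2)), shaving the $k$ to $\sqrt{k}$.

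Next, to obtain the $(k+\log|X|)\VC(C)$ and $(\sqrt{k}+\log|X|)\VC(C)$ terms, I would extend the sanitizer-to-learner reduction of Beimel, Nissim, and Stemmer (Theorem~\ref{thm:BNS15}) to the agnostic multi-learning setting. Split the input into disjoint parts $S_1,S_2$. On $S_1$, run the BLR sanitizer (Theorem~\ref{thm:BLR}) for the class $C^\oplus=\{f\oplus g:f,g\in C\}$ over $X$, at cost $O(\VC(C)\log|X|/(\alpha^3\epsilon))$; this produces a synthetic database $\hat D$ on which every XOR of two concepts in $C$ has approximately the same frequency as on $S_1$. Let $H\subseteq C$ contain one representative from each equivalence class of $C$ under ``agreement on $\hat D$''; by Sauer's lemma, $\log|H|=O(\VC(C)\log(1/\alpha))$. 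Crucially, the sanitizer guarantee on $C^\oplus$ implies that for every $c\in C$ there is an $h\in H$ with $\Pr_{x\sim\cD}[h(x)\neq c(x)]=O(\alpha)$, so $H$ simultaneously contains an $O(\alpha)$-approximate empirical risk minimizer for each of the $k$ label coordinates (and $H$ is a single label-independent class). On $S_2$, invoke $k$ parallel instances of the exponential mechanism, one per coordinate, each selecting from $H$ a hypothesis minimizing empirical error against that coordinate's labels. Each instance runs with privacy $\epsilon/k$ (basic composition, pure DP) or $\tilde\Theta(\epsilon/\sqrt{k})$ (advanced composition, approximate DP), needing $O(\log|H|/(\alpha\epsilon_0))$ samples; summing the sanitizer and selection costs yields the claimed bound. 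Agnostic generalization (Theorem~\ref{thm:AgnosticGeneralization}), combined with Theorem~\ref{thm:generalization} and a union bound over the $k$ coordinates, lifts empirical accuracy to distributional accuracy.

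The third term $\sqrt{k}\,\VC(C)+\sqrt{\log|X|}\log|C|$ (approximate DP only) follows from exactly the same construction, but with the BLR sanitizer replaced by the Hardt--Rothblum private multiplicative weights sanitizer (Theorem~\ref{thm:HR}); since $|C^\oplus|\le|C|^2$, the sanitizer cost becomes $O(\log|C|\sqrt{\log|X|\log(1/\delta)}/(\alpha^2\epsilon))$, which combined with the $\tilde O(\sqrt{k}\,\VC(C))$ selection cost gives the bound. I expect the principal obstacle to be in the second construction: verifying that a single, \emph{label-oblivious} sanitization of $S_1$ with respect to $C^\oplus$ produces one small hypothesis class $H$ that simultaneously contains near-empirical-risk-minimizers for all $k$ coordinates, so the sanitizer cost is paid only once rather than $k$ (or $\sqrt{k}$) times. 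A secondary complication is the careful bookkeeping of advanced composition parameters in the approximate-DP case, so that the $\log(1/\delta)$, $\log k$, and $\log(1/\alpha)$ factors are all correctly absorbed into the $O_{\alpha,\beta,\epsilon,\delta}(\cdot)$ notation.
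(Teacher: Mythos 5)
Your proposal is correct and follows essentially the same route as the paper: the $k\log|C|$ and $\sqrt{k}\log|C|$ terms come from composing the generic learner of Theorem~\ref{thm:klnrs} across the $k$ coordinates (Observation~\ref{obs:directSum} and Corollary~\ref{cor:straightforward}), and the VC-based terms come from the paper's $GenericLearner$ (Lemma~\ref{lem:GenericLearner}): a one-time sanitization of the unlabeled data with respect to $C^\oplus$ (via Theorem~\ref{thm:BLR} or Theorem~\ref{thm:HR}), a Sauer-lemma-sized hypothesis class $H\subseteq C$ built from the dichotomies on the sanitized database, and $k$ composed invocations of the exponential mechanism. The only cosmetic difference is that the paper does not split the sample into disjoint parts $S_1,S_2$ but runs both stages on the full input and charges the privacy cost via sequential/advanced composition.
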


%This yields our proper pure privacy learners (agnostic or not) for $\point_X, \thresh_X, \parity_X$, all with sample complexity $k \log|X|$.\\

The straightforward approach for constructing a private $k$-learner for a class $C$ is to separately apply a (standard) private learner for $C$ for each of the $k$ target concepts. Using composition theorem~\ref{thm:composition} to argue the overall privacy guarantee of the resulting learner, we get the following observation.

\begin{observation}\label{obs:directSum}
Let $C$ be a concept class and let $k\geq1$. If there is an $(\alpha,\beta,\epsilon,\delta)$-PAC learner for $C$ with sample complexity $n$, then 
\vspace{-5px}
\begin{itemize}\setlength\itemsep{-5px}
	\item There is an $(\alpha,k\beta,k\epsilon,k\delta)$-PAC $k$-learner for $C$ with sample complexity $n$.
	\item There is an $(\alpha,k\beta,O(\sqrt{k\log(\frac{1}{\delta})}\epsilon+k\epsilon^2),O(k\delta))$-PAC $k$-learner for $C$ with sample complexity $n$.
\end{itemize}
\vspace{-5px}
Moreover, if the initial learner is proper and/or agnostic, then so is the resulting learner.
\end{observation}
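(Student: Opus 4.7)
The plan is to prove Observation~\ref{obs:directSum} by the obvious reduction: given the hypothesized $(\alpha,\beta,\epsilon,\delta)$-private PAC learner $\cA$ for $C$, construct the $k$-learner $\cA^{(k)}$ that, on input a $k$-labeled sample $S = ((x_i, y_{1,i},\dots,y_{k,i}))_{i=1}^n$, forms for each $j\in[k]$ the singly-labeled sample $S|_j = ((x_i, y_{j,i}))_{i=1}^n$, runs $\cA$ on $S|_j$ to produce a hypothesis $h_j$, and outputs $(h_1,\dots,h_k)$. Note that the sample complexity is exactly $n$ (the same $n$ examples are re-used across the $k$ invocations), and that if $\cA$ is proper (resp.\ agnostic), each $h_j$ is automatically drawn from $C$ (resp.\ satisfies the agnostic guarantee on the marginal $\cP_j$), so the properness/agnosticity is inherited.

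For utility, I would fix any distribution $\cP$ on $X\times\{0,1\}^k$ (or in the PAC setting, any $\cD$ on $X$ together with $(c_1,\dots,c_k)\in C^k$). For each $j$, the marginal $\cP_j$ is a distribution on $X\times\{0,1\}$, and the projected sample $S|_j$ consists of $n$ i.i.d.\ draws from $\cP_j$. By the hypothesized accuracy of $\cA$, the output $h_j$ fails to satisfy the per-index error guarantee with probability at most $\beta$. A union bound over $j=1,\dots,k$ then shows that the maximum error over all $k$ indices exceeds $\alpha$ with probability at most $k\beta$, matching the claimed accuracy.

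For privacy, I would observe that two neighboring $k$-labeled databases $S,S'$ which differ in the single row $i^*$ project to databases $S|_j$ and $S'|_j$ that are either identical or differ on the single row $i^*$. Hence, viewed as a function of $S$, each of the $k$ executions of $\cA$ is itself an $(\epsilon,\delta)$-differentially private computation. The overall algorithm $\cA^{(k)}$ accesses $S$ through exactly $k$ such $(\epsilon,\delta)$-private computations (non-adaptively chosen, which is a special case of adaptively chosen), so Theorem~\ref{thm:composition}(1) yields $(k\epsilon,k\delta)$-differential privacy, establishing the first bullet, and Theorem~\ref{thm:composition}(2), applied with $\delta' = \delta$, yields $(\epsilon',k\delta+\delta') = (O(\sqrt{k\log(1/\delta)}\,\epsilon + k\epsilon^2), O(k\delta))$-differential privacy, establishing the second bullet.

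There is no real obstacle here; the only mildly delicate point is verifying that one may indeed re-use the same $n$ domain elements $x_1,\dots,x_n$ across all $k$ projected databases without paying a factor of $k$ in sample complexity. This is legitimate precisely because privacy composition is measured in the number of private accesses to the data, not in the total volume of data consumed, and because accuracy for each index $j$ depends only on the per-index marginal $\cP_j$, for which $S|_j$ is an honest i.i.d.\ sample.
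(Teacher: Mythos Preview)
Your proposal is correct and is precisely the argument the paper has in mind: the paper does not spell out a proof of this observation but simply states that it follows from the composition theorem (Theorem~\ref{thm:composition}) applied to the learner that runs the single-concept learner once per label column. Your write-up fills in exactly those details, including the union bound for accuracy and the verification that each projection preserves the neighboring relation so that composition applies.
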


In cases where sample efficient private PAC learners exist, it might be useful to apply Observation~\ref{obs:directSum} in order to obtain a private $k$-learner. For example, Beimel et al.~\cite{BKN10,BNS13a} gave an improper agnostic $(\alpha,\beta,\epsilon)$-PAC learner for $\point_X$ with sample complexity $O_{\alpha}(\frac{1}{\epsilon}\log\frac{1}{\beta})$. Using Observation~\ref{obs:directSum} yields the following corollary.

\begin{corollary}\label{cor:pureImproperPoint}
There exists an improper agnostic $(\alpha,\beta,\epsilon)$-PAC $k$-learner for $\point_X$ with sample complexity $O_{\alpha,\beta,\epsilon}(k\log k)$.
\end{corollary}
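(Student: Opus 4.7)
The statement is essentially a direct instantiation of Observation~\ref{obs:directSum} applied to the existing improper agnostic single-concept learner for $\point_X$ from \cite{BKN10,BNS13a}, which achieves sample complexity $n_0(\alpha,\beta,\epsilon) = O_\alpha\bigl(\tfrac{1}{\epsilon}\log\tfrac{1}{\beta}\bigr)$.

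First I would take that base learner and rescale its parameters before feeding it into Observation~\ref{obs:directSum}: run the base learner with privacy parameter $\epsilon' = \epsilon/k$ and confidence $\beta' = \beta/k$. The first bullet of Observation~\ref{obs:directSum} (which is just pure-DP basic composition over the $k$ independent invocations, together with a union bound on the per-concept failure events) then produces an $(\alpha, k\beta', k\epsilon') = (\alpha, \beta, \epsilon)$-private PAC $k$-learner. The ``Moreover'' clause of Observation~\ref{obs:directSum} guarantees that because the base learner is improper and agnostic, so is the resulting $k$-learner.

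Next I would substitute $\epsilon' = \epsilon/k$ and $\beta' = \beta/k$ into the base sample complexity to obtain
\[
n_0(\alpha,\beta/k,\epsilon/k) \;=\; O_\alpha\!\Bigl(\tfrac{k}{\epsilon}\log\tfrac{k}{\beta}\Bigr) \;=\; O_{\alpha,\beta,\epsilon}(k\log k),
\]
which is the stated bound.

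There is no real obstacle: the only thing to check beyond the rescaling and the basic composition bookkeeping is that the cited learner from~\cite{BKN10,BNS13a} is indeed improper and agnostic with the claimed sample complexity, and that Observation~\ref{obs:directSum} preserves both qualities. Both are explicitly stated above, so the corollary follows immediately.
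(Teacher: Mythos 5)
Your proposal is correct and follows exactly the paper's own route: instantiate Observation~\ref{obs:directSum} with the improper agnostic learner for $\point_X$ of sample complexity $O_\alpha(\frac{1}{\epsilon}\log\frac{1}{\beta})$ from~\cite{BKN10,BNS13a}, rescaling the privacy and confidence parameters by $k$. The substitution $\epsilon'=\epsilon/k$, $\beta'=\beta/k$ giving $O_{\alpha,\beta,\epsilon}(k\log k)$ is precisely the intended calculation.
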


For a general concept class $C$, we can use Observation~\ref{obs:directSum} with the generic construction of Theorem~\ref{thm:klnrs}, stating that for every concept class $C$ there exists a private agnostic proper learner $\cal A$ that uses $O(\log|C|)$ labeled examples.

\begin{corollary}\label{cor:straightforward}
Let $C$ be a concept class, and $\alpha,\beta,\epsilon > 0$. 
There exists an $(\alpha,\beta,\epsilon)$-private agnostic proper $k$-learner for $C$ with sample complexity 
$O_{\alpha,\beta,\epsilon}(k\cdot\log|C| + k\cdot\log k)$.
%$O\left((\log |C| + \log\frac{k}{\beta})(\frac{k}{\epsilon\alpha} + \frac{1}{\alpha^2})\right)$.
Moreover, there exists an $(\alpha,\beta,\epsilon,\delta)$-private agnostic proper $k$-learner for $C$ with sample complexity 
$O_{\alpha,\beta,\epsilon,\delta}(\sqrt{k}\cdot\log|C| + \sqrt{k}\cdot\log k)$.
%$O\left((\log |C| + \log\frac{k}{\beta})(\frac{\sqrt{k}}{\epsilon\alpha}\sqrt{\log(1/\delta)} + \frac{1}{\alpha^2})\right)$.
\end{corollary}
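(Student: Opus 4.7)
The plan is to combine the generic single-concept learner of Theorem~\ref{thm:klnrs} with the composition-based multi-learning reductions of Observation~\ref{obs:directSum}; the entire argument amounts to parameter bookkeeping.

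\textbf{Pure case.} I would invoke Theorem~\ref{thm:klnrs} with privacy parameter $\epsilon_0 = \epsilon/k$ and confidence $\beta_0 = \beta/k$ to obtain an $(\alpha,\beta/k,\epsilon/k)$-private agnostic proper learner for $C$ with sample complexity
$$n \;=\; O\!\left((\log|C|+\log(k/\beta))\left(\tfrac{k}{\epsilon\alpha} + \tfrac{1}{\alpha^2}\right)\right).$$
Applying the first conclusion of Observation~\ref{obs:directSum} converts this into an $(\alpha,\beta,\epsilon)$-private agnostic proper $k$-learner with the same sample complexity $n$, which simplifies (for fixed $\alpha,\beta,\epsilon$) to $O_{\alpha,\beta,\epsilon}(k\log|C| + k\log k)$ as required.

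\textbf{Approximate case.} I would again apply Theorem~\ref{thm:klnrs}, but this time set $\epsilon_0 = \Theta(\epsilon/\sqrt{k\log(1/\delta)})$ and $\beta_0 = \beta/k$, so that the advanced-composition conclusion of Observation~\ref{obs:directSum} yields overall privacy $(\epsilon,\delta)$. With this choice both $\sqrt{k\log(1/\delta)}\,\epsilon_0$ and $k\epsilon_0^2$ are $O(\epsilon)$; since the base learner is \emph{pure} differentially private ($\delta_0=0$), the $k\delta_0$ term from composition vanishes and is trivially at most $\delta$. The resulting sample complexity is
$$O\!\left((\log|C|+\log(k/\beta))\left(\tfrac{\sqrt{k\log(1/\delta)}}{\epsilon\alpha} + \tfrac{1}{\alpha^2}\right)\right) \;=\; O_{\alpha,\beta,\epsilon,\delta}\!\left(\sqrt{k}\log|C| + \sqrt{k}\log k\right).$$

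There is no real obstacle here. The only mild subtlety worth flagging is that rescaling $\beta$ to $\beta/k$ in the base learner contributes an additive $\log(k/\beta) = \log k + \log(1/\beta)$ inside the $(\log|C|+\log(1/\beta_0))$ factor of Theorem~\ref{thm:klnrs}, and it is precisely this term that produces the $k\log k$ (resp.\ $\sqrt{k}\log k$) summand in the final bound once the dependence on $\alpha,\beta,\epsilon,\delta$ is absorbed into the constant.
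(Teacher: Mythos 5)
Your proposal is correct and is exactly the paper's argument: the paper derives Corollary~\ref{cor:straightforward} by plugging the generic agnostic proper learner of Theorem~\ref{thm:klnrs} (with privacy parameter scaled to $\epsilon/k$, resp.\ $\Theta(\epsilon/\sqrt{k\log(1/\delta)})$, and confidence $\beta/k$) into Observation~\ref{obs:directSum}. Your bookkeeping, including the observation that the $\log(k/\beta)$ term is what produces the $k\log k$ (resp.\ $\sqrt{k}\log k$) summand, matches the intended derivation.
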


\begin{example}\label{eg:PureParityUpper}
There exists a proper agnostic $(\alpha,\beta,\epsilon)$-PAC $k$-learner for $\parity_d$ with sample complexity $O_{\alpha,\beta,\epsilon}(kd+k\log k)$.
\end{example}

As we will see in Section~\ref{sec:pureLowerBounds}, the bounds of Corollary~\ref{cor:pureImproperPoint} and Example~\ref{eg:PureParityUpper} on the sample complexity of $k$-learning $\point_X$ and $\parity_d$ are tight (up to logarithmic factors). That is, with pure-differential privacy, the direct sum gives (roughly) optimal bounds for improperly learning $\point_X$, and for (properly or improperly) learning $\parity_d$. This is not the case for learning $\thresh_X$ or for {\em properly} learning learning $\point_X$.

In order to avoid the factor $k\log|C|$ (or $\sqrt{k}\log|C|$) in Corollary~\ref{cor:straightforward}, we now show how an idea used in~\cite{BNS15} (in the context of semi-supervised learning) can be used to construct sample efficient private $k$-learners. In particular, this construction will achieve tight bounds for learning $\thresh_X$ and for properly learning learning $\point_X$ under pure-differential privacy.

Fix a concept class $C$, target concepts $c_1,\dots,c_k\in C$, and a $k$-labeled database $S$ (we use $D$ to denote the unlabeled portion of $S$).
For every $1\leq j\leq k$, the goal is to identify a hypothesis $h_j\in C$ with low $\error_D(c_j,h_j)$ (such a hypothesis also has good generalization). 
Beimel et al.~\cite{BNS15} observed that given a sanitization $\hat{D}$ of $D$ w.r.t.\ $C^{\oplus}=\{ f{\oplus}g : f,g\in C \}$, for every $f,g\in C$ it holds that 
$$\error_D(f,g)=\frac{1}{|D|}|\{ x\in D : (f\oplus g)(x)=1 \}|\approx\frac{1}{|\hat{D}|}|\{ x\in \hat{D} : (f\oplus g)(x)=1 \}|=\error_{\hat{D}}(f,g).$$

Hence, a hypothesis $h$ with low $\error_{\hat{D}}(h,c_j)$ also has low $\error_D(h,c_j)$ and vice versa. Let $H$ be the set of all dichotomies over $\hat{D}$ realized by $C$. Note that $\exists f_j^*\in H$ that agrees with $c_j$ on $\hat{D}$, i.e., $\exists f_j^*\in H$ s.t.\ $\error_{\hat{D}}(f_j^*,c_j)=0$, and hence $\error_D(f_j^*,c_j)$ is also low. The thing that works in our favor here is that $H$ is small -- at most $2^{|\hat{D}|}\leq 2^{\VC(C)}$ -- and hence choosing a hypothesis out of $H$ is easy.
Therefore, for every $j$ we can use the exponential mechanism to identify a hypothesis $h_j\in H$ with low $\error_D(h_j,c_j)$.

\remove{
\begin{definition}
Given two concepts $h,f\in C$, we denote $(h {\oplus} f): X_d \rightarrow \{0,1\} $, where $(h {\oplus} f)(x)=1$ if and only if $h(x)\neq f(x)$. Let $C^{\oplus}=\{ (h {\oplus} f) \; : \; h,f\in C \}.$
\end{definition}

\begin{observation}\label{obs:vcdim}
For any concept class $C$ it holds that $\VC(C^{\oplus})=O(\VC(C))$.
\end{observation}

\begin{lemma}\label{lem:GenericLearner}
Let $\epsilon'>0$ and let $\AAA$ be an ($\frac{\alpha}{5},\frac{\beta}{5}$)-accurate $(\epsilon,\delta)$-private sanitizer for $C^{\oplus}$ with sample complexity $m$. 
Then there is an $(\alpha,\beta)$-PAC agnostic $k$-learner for $C$ with sample complexity
$$O\left(m+\frac{\VC(C)}{\alpha^3\epsilon'}\log(\frac{1}{\alpha}) +\frac{1}{\alpha\epsilon'}\log(\frac{k}{\beta}) + \frac{1}{\alpha^2}\VC(C)\log(\frac{k}{\alpha\beta})\right).$$
Moreover, it is both $(\epsilon+k\epsilon',\delta)$ and $(\epsilon+\sqrt{2k\ln(1/\delta)}\epsilon'+2k\epsilon'^2,2\delta)$-differentially private.
\end{lemma}

Using Lemma~\ref{lem:GenericLearner} with the generic sanitizer of Theorem~\ref{thm:BLR} results in the following lemma.
}

\begin{lemma}\label{lem:usingSan}
Let $C$ be a concept class, and $\alpha,\beta,\epsilon,\delta > 0$. 
There exists an $(\alpha,\beta,\epsilon)$-private agnostic $k$-learner for $C$ with sample complexity 
%$$O\left(  \frac{\VC(C)}{\alpha^3 \epsilon}\log(\frac{1}{\alpha})(\log|X|+k) + \frac{k}{\alpha \epsilon}\log(\frac{k}{\beta})+\frac{\VC(C)}{\alpha^2}\log(\frac{k}{\alpha\beta})\right).$$
$O_{\alpha,\beta,\epsilon}(\VC(C)\cdot\log|X| + k\cdot\VC(C) + k\cdot\log k)$.
Moreover, there exists an $(\alpha,\beta,\epsilon,\delta)$-private agnostic $k$-learner for $C$ with sample complexity 
$O_{\alpha,\beta,\epsilon,\delta}(\min\{\VC(C)\cdot\log|X|, \log|C| \cdot \sqrt{\log |X|}\} + \sqrt{k}\cdot\VC(C) + \sqrt{k}\cdot\log k)$.
%$$O\left(  \frac{\VC(C)}{\alpha^3 \epsilon}\log(\frac{1}{\alpha})\left(\log|X|+\sqrt{k\log(\frac{1}{\delta})}\right) + \frac{\sqrt{k\log(\frac{1}{\delta})}}{\alpha \epsilon}\log(\frac{k}{\beta})+\frac{\VC(C)}{\alpha^2}\log(\frac{k}{\alpha\beta})\right).$$
%
%$$O\left(  \frac{1}{\alpha^2 \epsilon}\log(\frac{|C|}{\beta})\log(\frac{1}{\delta})\sqrt{\log|X|} +\sqrt{k\log(\frac{1}{\delta})}(\frac{1}{\alpha^3\epsilon}\VC(C)\log(\frac{1}{\alpha}) + \frac{1}{\alpha\epsilon}\log(\frac{k}{\beta})  )+\frac{1}{\alpha^2}\VC(C)\log(\frac{k}{\alpha\beta}) \right).$$
\end{lemma}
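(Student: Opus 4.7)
The plan is to instantiate the reduction sketched in the paragraph preceding the lemma (a variant of Theorem~\ref{thm:BNS15}) in the agnostic, multi-learning setting, taking care to combine the sanitizer once with $k$ adaptive calls to the exponential mechanism rather than running the whole reduction $k$ times. Concretely, write $S = ((x_i, y_{1,i}, \dots, y_{k,i}))_{i=1}^n$ and let $D = (x_i)_{i=1}^n$ be the unlabeled projection. First I would run, on $D$, an $(\alpha',\beta')$-accurate sanitizer for the class $C^\oplus = \{f \oplus g : f, g \in C\}$; since $\VC(C^\oplus) = O(\VC(C))$ and $|C^\oplus| \le |C|^2$, Theorem~\ref{thm:BLR} provides one with sample complexity $O_{\alpha',\beta',\epsilon}(\VC(C)\log|X|)$ under pure DP, and Theorem~\ref{thm:HR} provides one with sample complexity $O_{\alpha',\beta',\epsilon,\delta}(\log|C|\sqrt{\log|X|})$ under approximate DP; taking the better of the two explains the $\min$ in the approximate-DP bound.

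Next, let $\hat D$ be the sanitized database and let $H' \subseteq C$ be a set of concepts that realizes every distinct dichotomy of $\hat D$ induced by $C$; by Sauer's lemma $|H'| \le |\hat D|^{O(\VC(C))}$, so $\log|H'| = O_\alpha(\VC(C))$. For each $j = 1, \dots, k$, I would run the exponential mechanism on the labeled subsample $S|_j$ with quality score $-\error_{S|_j}(\cdot)$ over $H'$ and privacy parameter $\epsilon_{\mathrm{EM}}$; this is $(\epsilon_{\mathrm{EM}}, 0)$-differentially private and, using $n \ge O\!\left(\log(|H'|/\beta)/(\alpha \epsilon_{\mathrm{EM}})\right)$ samples, returns an $h_j \in H'$ whose empirical error is within $\alpha$ of $\min_{h \in H'} \error_{S|_j}(h)$ with probability at least $1 - \beta/k$. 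Setting $\epsilon_{\mathrm{EM}} = \epsilon/(2k)$ in the pure-DP case and invoking basic composition over the $k$ mechanism calls together with the sanitizer (itself run at privacy $\epsilon/2$) yields overall $(\epsilon,0)$-privacy. For the approximate-DP case, set $\epsilon_{\mathrm{EM}} = \Theta(\epsilon/\sqrt{k \log(1/\delta)})$ and invoke the advanced composition bound of Theorem~\ref{thm:composition}; the exponential-mechanism contribution to the sample complexity is thereby reduced from $k \cdot \VC(C)$ to $\sqrt{k} \cdot \VC(C)$. The $\sqrt k \log k$ (respectively $k \log k$) slack comes from requiring each exponential-mechanism invocation to succeed with probability $1 - \beta/k$, i.e., the $\log(k/\beta)$ term in its utility guarantee.

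For utility I would argue as follows. By $\alpha'$-accuracy of the sanitizer, $|\error_D(f,g) - \error_{\hat D}(f,g)| \le \alpha'$ for every $f, g \in C$. Thus for each $c \in C$ and its representative $h'_c \in H'$ (which agrees with $c$ on $\hat D$), $\error_D(c, h'_c) \le \alpha'$, and therefore $\error_{S|_j}(h'_c) \le \error_{S|_j}(c) + \alpha'$. This gives $\min_{h \in H'} \error_{S|_j}(h) \le \min_{c \in C} \error_{S|_j}(c) + \alpha'$, so the exponential mechanism returns $h_j \in H' \subseteq C$ with $\error_{S|_j}(h_j) \le \min_{c \in C} \error_{S|_j}(c) + 2\alpha'$ with probability $1 - \beta/k$. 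A union bound over $j$ and a final application of the agnostic generalization bound (Theorem~\ref{thm:AgnosticGeneralization} with the extra $\log(k/\beta)$ factor, which is already absorbed in the stated $k \log k$ / $\sqrt k \log k$ terms) converts this empirical guarantee into the desired uniform guarantee over all $k$ marginals $\cP_j$, completing the proof.

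The main obstacle, and where I would spend the most care, is the bookkeeping of parameters between sanitizer accuracy, exponential-mechanism accuracy, generalization, and the privacy budget split. In particular, one has to verify that choosing $\alpha' = \Theta(\alpha)$ and $\beta' = \Theta(\beta)$ does not blow up the sanitizer's contribution beyond the first summand, and that the advanced-composition calibration $\epsilon_{\mathrm{EM}} = \Theta(\epsilon/\sqrt{k\log(1/\delta)})$ genuinely yields the $\sqrt k \cdot \VC(C)$ dependence (rather than $\sqrt{k \log(1/\delta)} \cdot \VC(C)$, which is hidden by $O_{\alpha,\beta,\epsilon,\delta}$). Everything else is an application of results already stated in the preliminaries.
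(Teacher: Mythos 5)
Your proposal matches the paper's proof: the paper derives this lemma from Lemma~\ref{lem:GenericLearner} (Algorithm \emph{GenericLearner}), which sanitizes the unlabeled data once with respect to $C^{\oplus}$, collects one representative concept per dichotomy of $C$ on the sanitized database, and runs $k$ exponential-mechanism selections composed via basic/advanced composition, instantiated with the sanitizers of Theorems~\ref{thm:BLR} and~\ref{thm:HR} exactly as you describe. The only micro-difference is that the paper bounds $\log|H|$ by the \emph{size} of the sanitized database (giving $O_{\alpha}(\VC(C))$ directly), whereas your Sauer's-lemma bound $|H'|\le|\hat D|^{O(\VC(C))}$ carries an extra $\log|\hat D|$ factor as stated; everything else, including the parameter bookkeeping you flag, is as in the paper.
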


Lemma~\ref{lem:usingSan} follows from the following lemma.

\begin{lemma}\label{lem:GenericLearner}
Let $\epsilon'>0$ and let $\AAA$ be an ($\frac{\alpha}{5},\frac{\beta}{5}$)-accurate $(\epsilon,\delta)$-private sanitizer for $C^{\oplus}$ with sample complexity $m$. 
Then there is an $(\alpha,\beta)$-PAC agnostic $k$-learner for $C$ with sample complexity
$$O\left(m+\frac{\VC(C)}{\alpha^3\epsilon'}\log(\frac{1}{\alpha}) +\frac{1}{\alpha\epsilon'}\log(\frac{k}{\beta}) + \frac{1}{\alpha^2}\VC(C)\log(\frac{k}{\alpha\beta})\right).$$
Moreover, it is both $(\epsilon+k\epsilon',\delta)$ and $(\epsilon+\sqrt{2k\ln(1/\delta)}\epsilon'+2k\epsilon'^2,2\delta)$-differentially private.
\end{lemma}

Using Lemma~\ref{lem:GenericLearner} with the generic sanitizer of Theorem~\ref{thm:BLR} or Theorem~\ref{thm:HR} results in Lemma~\ref{lem:usingSan}.

\medskip

An important building block of our generic learner is the exponential mechanism of McSherry and Talwar~\cite{MT07}. A quality function  $q:X^*\times \cF \rightarrow \N$ defines an {\em optimization problem} over the domain $X$ and a finite solution set $\cF$:  Given a database $S \in X^*$, choose $f\in\cF$ that (approximately) maximizes $q(S,f)$. The exponential mechanism solves such an optimization problem sampling a random $f \in \cF$ with probability $\propto \exp\left(\epsilon \cdot q(S,f) /2 \Delta q\right)$. Here, the sensitivity of a quality function, $\Delta q$, is the maximum over all $f\in\cF$ of the sensitivity of the function $q(\cdot, f)$.

\begin{proposition}[Properties of the Exponential Mechanism \cite{MT07}] \label{prop:exp_mech}
\ \begin{enumerate}
\item The exponential mechanism is $(\eps, 0)$-differentially private.
\item Let $q$ be a quality function with sensitivity at most $1$. Fix a database $S \in X^n$ and let $\OPT = \max_{f\in \cF}\{q(S,f)\}$. Let $t >0$. Then exponential mechanism outputs a solution $f$ with $q(S,f)\leq \OPT - tn$ with probability at most $|\cF| \cdot \exp(-\eps tn /2)$.
\end{enumerate}
\end{proposition}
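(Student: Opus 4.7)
The plan is to prove the two properties separately by direct calculation from the definition of the exponential mechanism, namely that it samples $f \in \cF$ with probability proportional to $\exp(\eps \cdot q(S,f)/2\Delta q)$. Let $Z(S) = \sum_{f \in \cF} \exp(\eps \cdot q(S,f)/2\Delta q)$ denote the normalizing constant. For both properties, the key tool is the simple observation that shifting $q(\cdot,f)$ by at most $\Delta q$ (by moving to a neighboring database) shifts each summand in $Z$ by a multiplicative factor in $[e^{-\eps/2}, e^{\eps/2}]$.

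For part (1), I would fix neighboring databases $S, S'$ and a target output $f \in \cF$. Writing the ratio
\[
\frac{\Pr[\cA(S) = f]}{\Pr[\cA(S') = f]} = \frac{\exp(\eps q(S,f)/2\Delta q)}{\exp(\eps q(S',f)/2\Delta q)} \cdot \frac{Z(S')}{Z(S)},
\]
I would bound the first factor by $\exp(\eps/2)$ using $|q(S,f) - q(S',f)| \le \Delta q$. For the second factor, I would apply the same pointwise bound inside the sum: $\exp(\eps q(S',f')/2\Delta q) \le e^{\eps/2}\exp(\eps q(S,f')/2\Delta q)$ for every $f' \in \cF$, which implies $Z(S')/Z(S) \le e^{\eps/2}$. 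Multiplying the two bounds yields the $(\eps,0)$-differential privacy guarantee. The argument goes through for continuous $\cF$ by replacing sums with integrals over an appropriate base measure.

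For part (2), assume $\Delta q = 1$ (as specified) and fix $S$ with $\OPT = \max_f q(S,f)$. The probability of outputting any particular $f$ with $q(S,f) \le \OPT - tn$ is
\[
\Pr[\cA(S) = f] = \frac{\exp(\eps q(S,f)/2)}{Z(S)} \le \frac{\exp(\eps(\OPT - tn)/2)}{\exp(\eps \OPT/2)} = \exp(-\eps tn/2),
\]
where the denominator lower bound comes from keeping just the term $f^* \in \arg\max q(S,\cdot)$ in $Z(S)$. A union bound over the at most $|\cF|$ ``bad'' solutions gives the claimed bound $|\cF| \cdot \exp(-\eps tn/2)$.

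There is really no substantive obstacle here: both proofs are short algebraic manipulations. The only subtlety worth flagging is to make sure that in part (1) the bound is applied pointwise inside the normalizer (not just to $Z$ as an aggregate quantity), and in part (2) to lower-bound $Z(S)$ by a single optimal term rather than attempting to evaluate it exactly. Both are standard and appear in \cite{MT07}; in our writeup it suffices to recall these two-line computations for completeness.
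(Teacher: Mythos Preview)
Your proof is correct and is exactly the standard argument from \cite{MT07}. Note, however, that the paper does not actually give a proof of this proposition --- it is merely stated with a citation --- so there is nothing to compare against beyond confirming that your two short calculations match the original source.
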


\begin{algorithm}
\caption{$GenericLearner$}\label{alg:genericPrivate}
{\bf Input:} Concept class $C$, privacy parameters $\epsilon',\epsilon,\delta$, and a $k$-labeled database $S=(x_i,y_{i,1},\dots,y_{i,k})_{i=1}^n$. We use $D=(x_i)_{i=1}^n$ to denote the unlabeled portion of $S$.\\
{\bf Used Algorithm:} An ($\frac{\alpha}{5},\frac{\beta}{5}$)-accurate $(\epsilon,\delta)$-private sanitizer for $C^{\oplus}$ with sample complexity $m$.
\begin{enumerate}[rightmargin=10pt,itemsep=1pt]

\item Initialize $H=\emptyset$.

\item Construct an $(\epsilon,\delta)$-private sanitization $\widetilde{D}$ of $D$ w.r.t.\ $C^{\oplus}$, where $|\widetilde{D}|=O\left( \frac{\VC(C^{\oplus})}{\alpha^2}\log(\frac{1}{\alpha}) \right) = O\left( \frac{\VC(C)}{\alpha^2}\log(\frac{1}{\alpha}) \right)$.

\item Let $B=\{b_1,\ldots,b_{|B|}\}$ be the set of all points appearing at least once in $\widetilde{D}$.

\item For every $(z_1,\ldots,z_{|B|})\in \Pi_C(B)=\{\left( c(b_1),\ldots,c(b_{|B|}) \right) :c\in C\}$, add to $H$ an arbitrary concept $c\in C$ s.t.\ $c(b_\ell)=z_\ell$ for every $1\leq \ell\leq |B|$.

\item For every $1\leq j\leq k$, use the exponential mechanism with privacy parameter $\epsilon'$ to choose and return a hypothesis $h_j\in H$ with (approximately) minimal error on the examples in $S$ w.r.t.\ their $j^{\text{th}}$ label.
\end{enumerate}
\end{algorithm}

\begin{proof}[Proof of Lemma~\ref{lem:GenericLearner}]
The proof is via the construction of $GenericLearner$ (algorithm~\ref{alg:genericPrivate}).
Note that $GenericLearner$ only accesses $S$ via a sanitizer (on Step~2) and using the exponential mechanism (on Step~5). 
Composition theorem~\ref{thm:composition} state that $GenericLearner$ is both $(\epsilon+k\epsilon',\delta)$-differentially private and $(\epsilon+\sqrt{2k\ln(1/\delta)}\epsilon'+2k\epsilon'^2,2\delta)$-differentially private.
We, thus, only need to prove that with high probability the learner returns $\alpha$-good hypotheses.

Fix a distribution $\PPP$ over $X\times\{0,1\}^k$, and let $\PPP_j$ denote the marginal distribution of $\PPP$ on the examples and the $j^{\text{th}}$ label. 
Let $S$ consist of examples $(x_i,y_{i,1},\dots,y_{i,k})\sim\PPP$. We use $D=(x_i)_{i=1}^n$ to denote the unlabeled portion of $S$, and use $S|_j=((x_i, y_{j, i}))_{i = 1}^n$ to denote a database containing the examples in $S$ together with their $j^{\text{th}}$ label.
Define the following three events:
\begin{enumerate}[label=$E_{\arabic*}:$]

\item For every $f,h\in C$ it holds that $|\error_D(f,h)-\error_{\widetilde{D}}(f,h)|\leq\frac{2\alpha}{5}$.

\item For every $f\in C$ and for every $1\leq j \leq k$ it holds that $|\error_{S|_j}(f) - \error_{\PPP_j}(f)|\leq\frac{\alpha}{5}$.

\item For every $1\leq j\leq k$, the hypothesis $h_j$ chosen by the exponential mechanism is such that $\error_{S|_j}(h_j) \leq \frac{\alpha}{5} + \min_{f\in H}\left\{\error_{S|_j}(f)\right\}$.

\end{enumerate}

We first argue that when these three events happen algorithm $GenericLearner$ returns good hypotheses.
Fix $1\leq j\leq k$, and let $c_j^* = {\rm argmin}_{f\in C}\{ \error_{\PPP_j}(f) \}$. We denote $\Delta=\error_{\PPP_j}(c_j^*)$.
We need to show that if $E_1 \cap E_2 \cap E_3$ occurs, then the hypothesis $h_j$ returned by $GenericLearner$ is s.t.\ $\error_{\PPP_j}(h_j)\leq\alpha+\Delta$.

For every $(y_1,\ldots,y_{|B|})\in \Pi_C(B)$, algorithm $GenericLearner$ adds to $H$ a hypothesis $f$ s.t.\ $\forall 1\leq \ell \leq |B|,\;f(b_\ell)=y_\ell$.
In particular, $H$ contains a hypothesis $h_j^*$ s.t.\ $h_j^*(x)=c_j^*(x)$ for every $x\in B$, that is, a hypothesis $h_j^*$ s.t.\ $\error_{\widetilde{D}}(h_j^*,c_j^*)=0$.
As event $E_1$ has occurred we have that this $h_j^*$ satisfies $\error_D(h_j^*,c_j^*)\leq \frac{2\alpha}{5}$.
Using the triangle inequality (and event $E_2$) we get that this $h_j^*$ satisfies $\error_{S|_j}(h_j^*)\leq \error_D(h_j^*,c_j^*) + \error_{S|_j}(c_j^*) \leq \frac{3\alpha}{5}+\Delta$.
Thus, event $E_3$ ensures that algorithm $GenericLearner$ chooses (using the exponential mechanism) a hypothesis $h_j\in H$ s.t.\ $\error_{S|_j}(h_j)\leq\frac{4\alpha}{5}+\Delta$. Event $E_2$ ensures, therefore, that this $h_j$ satisfies $\error_{\PPP_j}(h_j)\leq\alpha+\Delta$.
We will now show $E_1 \cap E_2 \cap E_3$ happens with high probability.

Standard arguments in learning theory state that (w.h.p.) the empirical error on a (large enough) random sample is close to the generalization error.
Specifically, by setting $n\geq O(\frac{1}{\alpha^2}\VC(C)\log(\frac{k}{\alpha\beta}))$, Theorem~\ref{thm:AgnosticGeneralization} ensures that Event $E_2$ occurs 
with probability at least $(1-\frac{2}{5}\beta)$.

Assuming that $n\geq m$ (the sample complexity of the sanitizer used in Step~5), with probability at least $(1-\frac{\beta}{5})$ for every $(h\oplus f)\in C^{\oplus}$ (i.e., for every $h,f\in C$) it holds that
\begin{eqnarray*}
\frac{\alpha}{5}&\geq& | Q_{(h{\oplus} f)}(D) - Q_{(h{\oplus} f)}(\widetilde{D}) |  \\
&=& \left| \frac{|\{x\in D : (h{\oplus} f)(x){=}1 \}|}{|D|} - \frac{|\{x\in \widetilde{D} : (h{\oplus} f)(x){=}1 \}|}{|\widetilde{D}|}  \right|  \\
&=& \left| \frac{|\{x\in D : h(x){\neq} f(x) \}|}{|D|} - \frac{|\{x\in \widetilde{D} : h(x){\neq} f(x) \}|}{|\widetilde{D}|}  \right|  \\
&=& \left| \error_D(h,f) - \error_{\widetilde{D}}(h,f)  \right|. 
\end{eqnarray*}
Event $E_1$ occurs therefore with probability at least $(1-\frac{\beta}{5})$.

The exponential mechanism ensures that the probability of event $E_3$ is at least $1-k|H| \cdot \exp(-\epsilon' \alpha m /10)$ (see Proposition \ref{prop:exp_mech}).
Note that $\log|H|\leq|B|\leq|\widetilde{D}| = O\left( \frac{\VC(C)}{\alpha^2}\log(\frac{1}{\alpha}) \right)$. Therefore, for
$n \geq O\left(\frac{\VC(C)}{\alpha^3\epsilon'}\log(\frac{1}{\alpha}) +\frac{1}{\alpha\epsilon'}\log(\frac{k}{\beta})\right)$,
Event $E_3$ occurs with probability at least $(1-\frac{\beta}{5})$. 

All in all, setting
$n \geq O\left(m+\frac{\VC(C)}{\alpha^3\epsilon'}\log(\frac{1}{\alpha}) +\frac{1}{\alpha\epsilon'}\log(\frac{k}{\beta}) + \frac{1}{\alpha^2}\VC(C)\log(\frac{k}{\alpha\beta})\right)$,
ensures that the probability of $GenericLearner$ failing is at most $\beta$.
\end{proof}

Theorem~\ref{thm:GeneralUpper} now follows by combining Lemma~\ref{lem:usingSan} and Corollary~\ref{cor:straightforward}.

\medskip

For certain concept classes, there are sanitizers with substantially lower sample complexity than the generic sanitizers. Combining Lemma \ref{lem:usingSan} with Proposition \ref{prop:sanPoint}, we obtain:

\begin{corollary}\label{cor:pointsAgnosticUpper}
There is an $(\alpha,\beta)$-PAC agnostic $k$-learner for $\point_X$ with sample complexity
$$O\left(\frac{\log(1/\alpha\beta\delta)}{\alpha\epsilon}+\frac{\log(1/\alpha)}{\alpha^3\epsilon'} +\frac{\log(k/\beta)}{\alpha\epsilon'} + \frac{\log(k/\alpha\beta)}{\alpha^2}\right).$$
Moreover, it is both $(\epsilon+k\epsilon',\delta)$ and $(\epsilon+\sqrt{2k\ln(1/\delta)}\epsilon'+2k\epsilon'^2,2\delta)$-differentially private.
\end{corollary}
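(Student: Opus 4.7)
The plan is to derive this corollary by instantiating Lemma~\ref{lem:GenericLearner} for $C=\point_X$ using the sanitizer given in Proposition~\ref{prop:sanPoint}. Since $\VC(\point_X)=1$, once we plug in a sanitizer for $\point_X^{\oplus}$ with sample complexity $m$, Lemma~\ref{lem:GenericLearner} immediately yields an $(\alpha,\beta)$-PAC agnostic $k$-learner with sample complexity $O\bigl(m+\tfrac{1}{\alpha^3\epsilon'}\log(\tfrac{1}{\alpha})+\tfrac{1}{\alpha\epsilon'}\log(\tfrac{k}{\beta})+\tfrac{1}{\alpha^2}\log(\tfrac{k}{\alpha\beta})\bigr)$ with the privacy guarantees stated in the corollary.

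The only technical step is to produce an $(\alpha/5,\beta/5)$-accurate $(\epsilon,\delta)$-differentially private sanitizer for $\point_X^{\oplus}=\{c_x\oplus c_y : x,y\in X\}$ with sample complexity matching that of Proposition~\ref{prop:sanPoint}. Observe that for $x\neq y$ we have $(c_x\oplus c_y)(z)=c_x(z)+c_y(z)$ (since $c_x$ and $c_y$ have disjoint supports), while $(c_x\oplus c_x)\equiv 0$. Hence for every database $D$,
\[
(c_x\oplus c_y)(D)=c_x(D)+c_y(D)\quad\text{if }x\neq y,\qquad (c_x\oplus c_x)(D)=0.
\]
Consequently, any $(\alpha',\beta')$-accurate sanitizer $\hat D$ for $\point_X$ automatically yields a $(2\alpha',\beta')$-accurate sanitizer for $\point_X^{\oplus}$: the XOR queries are answered as sums of the point function estimates, each of which is accurate to within $\alpha'$. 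Therefore, invoking Proposition~\ref{prop:sanPoint} with accuracy parameter $\alpha/10$ and failure probability $\beta/5$ gives an $(\alpha/5,\beta/5)$-accurate $(\epsilon,\delta)$-private sanitizer for $\point_X^{\oplus}$ with sample complexity
\[
m=O\!\left(\frac{\log(1/\alpha\beta\delta)}{\alpha\epsilon}\right).
\]

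Plugging this value of $m$ together with $\VC(\point_X)=1$ into Lemma~\ref{lem:GenericLearner} gives exactly the claimed sample complexity bound, and the privacy guarantee transfers verbatim from the lemma. The only nontrivial step is the reduction from sanitizing $\point_X^{\oplus}$ to sanitizing $\point_X$, which I expect to be routine since the XORs of two point functions decompose linearly in terms of point function queries; the rest of the proof is pure bookkeeping.
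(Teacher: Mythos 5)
Your proposal is correct and matches the paper's derivation: the corollary is obtained by plugging the point-function sanitizer of Proposition~\ref{prop:sanPoint} into the generic construction (Lemma~\ref{lem:GenericLearner}), with $\VC(\point_X)=1$. The one step the paper leaves implicit --- converting a sanitizer for $\point_X$ into one for $\point_X^{\oplus}$ via the disjoint-support identity $(c_x\oplus c_y)(D)=c_x(D)+c_y(D)$ for $x\neq y$ --- is handled correctly in your argument.
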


Similarly, combining Lemma \ref{lem:usingSan} with Proposition \ref{prop:sanThreshold}, we obtain:

\begin{corollary}\label{cor:thresholdsAgnosticUpper}
There is an $(\alpha,\beta)$-PAC agnostic $k$-learner for $\thresh_X$ with sample complexity
$$O\left(\frac{2^{\log^*|X|} \cdot \log^*|X| \cdot \log \left(\frac{\log^*|X|}{\eps\delta}\right) \cdot \log(1/\beta) \cdot \log^{2.5}(1/\alpha)}{\alpha \eps} +\frac{\log(1/\alpha)}{\alpha^3\epsilon'} +\frac{\log(k/\beta)}{\alpha\epsilon'} + \frac{\log(k/\alpha\beta)}{\alpha^2}\right).$$
Moreover, it is both $(\epsilon+k\epsilon',\delta)$ and $(\epsilon+\sqrt{2k\ln(1/\delta)}\epsilon'+2k\epsilon'^2,2\delta)$-differentially private.
\end{corollary}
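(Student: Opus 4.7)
The plan is to instantiate Lemma~\ref{lem:GenericLearner} with concept class $C = \thresh_X$ and the sanitizer supplied by Proposition~\ref{prop:sanThreshold}. Since $\VC(\thresh_X) = 1$, the $\VC(C)$ factors in the sample-complexity bound of Lemma~\ref{lem:GenericLearner} collapse, and the remaining task is to exhibit a sanitizer whose sample complexity $m$ matches the first term of the claimed bound.

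Before plugging in, I would observe that Lemma~\ref{lem:GenericLearner} requires an $(\alpha/5,\beta/5)$-accurate, $(\epsilon,\delta)$-differentially private sanitizer for $\thresh_X^{\oplus}$, not for $\thresh_X$ itself. For thresholds $c_a, c_b$ with $a \le b$, we have $(c_a \oplus c_b)(y) = \1[a < y \le b]$, so $\thresh_X^{\oplus}$ coincides (up to the all-zero concept) with the class of interval indicators on $X$. A sanitizer for this class with the same asymptotic sample complexity as Proposition~\ref{prop:sanThreshold} follows in one of two ways: either by running the \cite{BNSV15} construction directly on intervals (the hierarchical structure used there extends naturally from prefixes to intervals), or by post-processing a threshold sanitizer, since each interval count is the difference of two threshold counts, so an $(\alpha/10,\beta/10)$-accurate sanitizer for $\thresh_X$ yields an $(\alpha/5,\beta/5)$-accurate sanitizer for $\thresh_X^{\oplus}$ at the cost of only absolute constants in $\alpha$ and $\beta$.

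Finally, I would substitute $\VC(\thresh_X) = 1$ and the value of $m$ from Proposition~\ref{prop:sanThreshold} into the bound of Lemma~\ref{lem:GenericLearner}, obtaining
\[
O\!\left(m + \frac{1}{\alpha^3 \epsilon'}\log\tfrac{1}{\alpha} + \frac{1}{\alpha \epsilon'}\log\tfrac{k}{\beta} + \frac{1}{\alpha^2}\log\tfrac{k}{\alpha\beta}\right),
\]
which matches the sample complexity stated in the corollary once $m$ is written out. The two privacy guarantees $(\epsilon+k\epsilon',\delta)$ and $(\epsilon+\sqrt{2k\ln(1/\delta)}\epsilon'+2k\epsilon'^2,2\delta)$ are then inherited verbatim from the conclusion of Lemma~\ref{lem:GenericLearner}, which uses basic and advanced composition over the sanitization step (at cost $(\epsilon,\delta)$) and the $k$ exponential-mechanism calls (each at cost $\epsilon'$). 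I expect the only non-routine step to be the reduction from threshold sanitization to interval sanitization; everything else is arithmetic.
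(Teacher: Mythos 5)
Your proposal is correct and follows the paper's own route: the corollary is obtained by instantiating Lemma~\ref{lem:GenericLearner} with the sanitizer of Proposition~\ref{prop:sanThreshold}, using $\VC(\thresh_X)=1$, and inheriting the two composition-based privacy guarantees verbatim. You are in fact slightly more careful than the paper, which does not spell out that the required sanitizer is for $\thresh_X^{\oplus}$ (the class of intervals) rather than for $\thresh_X$ itself; your observation that every interval count is a difference of two threshold counts, so a threshold sanitizer transfers to intervals with only constant-factor loss in $\alpha$ and $\beta$, correctly fills that gap.
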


\subsection{Upper Bounds for Approximate Private Multi-Learners}

In this section we give two examples of cases where the sample complexity of private $k$-learning is of the same order as that of non-private $k$-learning (the sample complexity does not depend on $k$). Our algorithms are $(\eps, \delta)$-differentially private, and rely on stability arguments: the identity of the best $k$ concepts, as an entire vector, is unlikely to change on nearby $k$-labeled databases. Hence, it can be released privately.

The main technical tool we use is the $\adist$ algorithm of Smith and Thakurta \cite{ST13}. Our discussion follows the treatment of \cite{BNS13b}.

Recall that a quality function  $q:X^*\times \cF \rightarrow \N$ defines an {\em optimization problem} over the domain $X$ and a finite solution set $\cF$:  Given a database $S \in X^*$, find $f\in\cF$ that (approximately) maximizes $q(S,f)$. The sensitivity of a quality function, $\Delta q$, is the maximum over all $f\in\cF$ of the sensitivity of the function $q(\cdot, f)$. The algorithm $\adist$ privately identifies the \emph{exact} maximizer as long as it is sufficiently stable.

\begin{algorithm}
\caption{$\adist$}\label{alg:adist}
{\bf Input:} Privacy parameters $\epsilon,\delta$, database $S \in X^*$, sensitivity-1 quality function $q$
\begin{enumerate}[rightmargin=10pt,itemsep=1pt]

\item Let $f_1, f_2 \in \cF$ be the highest scoring and second-highest scoring solutions to $q(S, \cdot)$, respectively.

\item Let $\gap = q(S, f_1) - q(S, f_2)$, and $\widehat{\gap} = \gap + \Lap(1/\eps)$.

\item If $\widehat{\gap} < \frac{1}{\eps} \log \frac{1}{\delta}$, output $\bot$. Otherwise, output $f_1$.
\end{enumerate}
\end{algorithm}

\begin{proposition}[Properties of $\adist$ \cite{ST13}] \label{prop:adist}
\ \begin{enumerate}
\item Algorithm $\adist$ is $(\eps, \delta)$-differentially private.
\item When run on a database $S$ with $\gap > \frac{1}{\eps} \log \frac{1}{\delta\beta}$, Algorithm $\adist$ outputs the highest scoring solution $f_1$ with probability at least $1 - \beta$.
\end{enumerate}
\end{proposition}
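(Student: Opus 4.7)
The two claims will be handled separately: privacy via a propose-test-release style argument, utility via a Laplace tail bound.

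For claim 1 (privacy), the plan is to apply the distance-to-instability framework. The key structural observation is that on neighboring databases $S \sim S'$, the function $\gap(S) = q(S, f_1(S)) - q(S, f_2(S))$ changes by at most $2$: both the maximum $\max_f q(S, f)$ and the second-maximum are $1$-Lipschitz (as the top two order statistics of the $1$-Lipschitz family $\{q(\cdot, f)\}_{f \in \cF}$). Moreover, the top scorer $f_1$ is \emph{stable} once $\gap(S) > 2$: if the identity of $f_1$ were to change on some neighbor $S'$, the new winner would have had to gain and the incumbent lose, forcing $\gap(S) \le 2$.

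The privacy analysis then proceeds by case analysis on an output event $T$. If $T = \{\bot\}$, the event is determined entirely by the Laplace threshold test applied to $\gap$, and the stated Laplace scale together with the threshold $(1/\eps)\log(1/\delta)$ yields the $(\eps, \delta)$ guarantee (possibly after rescaling constants to absorb the sensitivity factor of $2$). If $T$ contains some $f_1 \neq \bot$, split on the stability regime. In the stable regime $\gap(S) > 2$, the candidate $f_1(S) = f_1(S')$ agrees on both neighbors, so the only randomness distinguishing the two executions is the Laplace noise applied to a bounded-sensitivity quantity, giving the $e^\eps$ multiplicative bound. In the unstable regime $\gap(S) \le 2$, the probability that $\widehat{\gap}$ exceeds the threshold is at most $\frac{1}{2} \exp\bigl(-(\log(1/\delta) - O(1))\bigr) \le \delta$, so the unstable case is absorbed into the additive $\delta$ slack on both sides.

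For claim 2 (utility), the hypothesis $\gap > \frac{1}{\eps} \log \frac{1}{\delta\beta}$ implies that the mechanism outputs $\bot$ only when
\[\Lap(1/\eps) \;<\; \tfrac{1}{\eps}\log(1/\delta) - \gap \;<\; -\tfrac{1}{\eps}\log(1/\beta).\]
The standard Laplace tail bound gives $\Pr[\Lap(1/\eps) < -\tfrac{1}{\eps}\log(1/\beta)] = \tfrac{1}{2}\exp(-\log(1/\beta)) = \beta/2 \le \beta$, so with probability at least $1 - \beta$ the output is the true top scorer $f_1$.

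The main obstacle will be tracking the constant factor of $2$ arising from the sensitivity of $\gap$ against the single-$\eps$ scaling of the noise: the naive case analysis yields $(2\eps, \delta)$-DP, and recovering the stated $(\eps, \delta)$-DP requires either a tighter privacy accounting in the stable regime (exploiting the fact that the threshold test and the release of $f_1$ are not independent events) or a cosmetic rescaling of $\eps$ absorbed into constants. All other steps are routine.
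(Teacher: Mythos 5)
The paper offers no proof of this proposition --- it is imported wholesale from \cite{ST13} (following the treatment in \cite{BNS13b}) --- so there is nothing internal to compare against; your reconstruction is the standard stability-based argument and it is correct in substance. Your utility argument is exactly right, and your privacy case analysis (the $\bot$ event via the Laplace mechanism on $\gap$; the non-$\bot$ event split into the stable regime $\gap > 2$, where $f_1(S) = f_1(S')$, and the unstable regime, where the probability of releasing anything is $O(\delta)$ and is absorbed into the additive slack) is the proof one finds in the cited sources. You are also right to flag the constants: as transcribed, $\gap$ has sensitivity $2$ while the noise is $\Lap(1/\eps)$, so the literal algorithm is $(2\eps, O(\delta))$-DP, and the stated $(\eps,\delta)$ guarantee holds only after rescaling the noise to $\Lap(2/\eps)$ (with a corresponding threshold adjustment) or replacing $\gap$ by the sensitivity-$1$ distance-to-instability $\lfloor \gap/2 \rfloor$; the paper, using the proposition as a black box with unspecified constants elsewhere, silently absorbs this. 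The only loose end in your write-up is the claim that the unstable-regime release probability is $\le \delta$ --- it is $\tfrac{1}{2}e^{2\eps}\delta$, i.e.\ $O(\delta)$ --- but this is the same cosmetic issue and disappears under the same rescaling.
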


\subsubsection{Learning Parities under the Uniform Distribution}

\begin{theorem}\label{thm:paritiesUniformUpper}
For every $k,d$ there exists an $(\alpha{=}0,\beta,\epsilon,\delta)$-PAC (non-agnostic) $k$-learner for $\parity_d$ under the uniform distribution with sample complexity $O(d\log(\frac{1}{\beta})+ \frac{1}{\epsilon}\log(\frac{1}{\beta\delta}))$.
\end{theorem}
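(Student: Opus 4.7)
The plan is to apply the stability-based $\adist$ algorithm (Algorithm~\ref{alg:adist}) with the solution set $\parity_d^k$ (i.e., $k$-tuples of parities) and the sensitivity-$1$ quality function
\[q(S, (f_1, \ldots, f_k)) = |\{i \in [n] : f_j(x_i) = y_{j,i} \text{ for all } j \in [k]\}|,\]
which counts the examples on which all $k$ candidate hypotheses simultaneously match their corresponding labels. The true tuple $F^* = (c_1, \ldots, c_k)$ achieves the maximum possible score $q(S, F^*) = n$, so it suffices to show that, with high probability over the sample, the gap between $F^*$ and every other tuple is large enough for $\adist$ to identify $F^*$ with high probability.

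The key observation, and the reason the sample complexity can be independent of $k$, is the following. Suppose $F = (f_1, \ldots, f_k)$ differs from $F^*$ in some coordinate $j^*$. Then
\[q(S, F) \leq |\{i : f_{j^*}(x_i) = c_{j^*}(x_i)\}| = n - |\{i : g(x_i) = 1\}|,\]
where $g := f_{j^*} \oplus c_{j^*}$ is itself a \emph{nonzero} parity in $\parity_d$. Under the uniform distribution on $\bits^d$, each $g(x_i)$ is an unbiased coin flip, so $|\{i : g(x_i) = 1\}|$ is $\mathrm{Binomial}(n, 1/2)$. A Chernoff bound combined with a union bound over the $2^d-1$ nonzero parities $g$---and this union bound is taken over parities, \emph{not} over $(j^*, f_{j^*})$ pairs, and is therefore independent of $k$---shows that for $n = \Omega(d + \log(1/\beta))$, every nonzero parity $g$ satisfies $|\{i : g(x_i) = 1\}| \geq n/4$ with probability at least $1 - \beta/2$. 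Whenever this event holds, every $F \neq F^*$ has $q(S, F) \leq 3n/4$, so the gap is at least $n/4$.

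Given a gap of at least $n/4$, Proposition~\ref{prop:adist} guarantees that $\adist$ outputs $F^*$ exactly with probability at least $1 - \beta/2$, provided $n/4 \geq \tfrac{1}{\eps}\log(2/\beta\delta)$, i.e.\ $n = \Omega\bigl(\tfrac{1}{\eps}\log(1/\beta\delta)\bigr)$. Since the output is precisely the target tuple, the generalization error is $0$, matching the required $\alpha=0$ guarantee. Privacy is inherited directly from $\adist$, which is $(\eps, \delta)$-differentially private regardless of the quality function or the (potentially exponentially large) solution set. The only mildly delicate step is the accuracy analysis---specifically, recognizing that the relevant union bound is over the $2^d$ parities on $\bits^d$ rather than over the $k \cdot 2^d$ candidate coordinate-level substitutions---while everything else is a routine application of the $\adist$ machinery combined with standard Chernoff estimates.
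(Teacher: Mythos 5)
Your proof is correct, but it formalizes a different (and in fact simpler) argument than the one the paper actually carries out. You run $\adist$ \emph{directly} over the doubly-exponential solution set $(\parity_d)^k$ with the all-coordinates agreement count as the quality function, and you establish the required gap of $\Omega(n)$ via a Chernoff bound plus a union bound over the $2^d-1$ nonzero parities $g = f_{j^*}\oplus c_{j^*}$ (correctly noting that $g$ is itself a nonzero parity, hence balanced under $U_d$, and that the union bound need not range over tuples or over coordinates). This is precisely the ``intuition'' the paper states just before Algorithm~\ref{alg:parityLearner} but does not implement; the paper instead uses subsample-and-aggregate: it splits $S$ into $m=O(\frac{1}{\eps}\log\frac{1}{\beta\delta})$ blocks of size $O(d)$, runs Gaussian elimination on each block to produce a candidate tuple, and applies $\adist$ to the multiset of candidates, arguing via the VC generalization bound that the true tuple wins a $3m/4$ vs.\ $m/4$ plurality. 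What the paper's route buys is computational efficiency (Gaussian elimination plus a stability test over the $m$ observed candidates, rather than a maximization over $|\parity_d|^k$ tuples); what your route buys is a cleaner analysis and a sample complexity $O(d+\log(1/\beta)+\frac{1}{\eps}\log\frac{1}{\beta\delta})$ that is at least as good as the stated bound (the paper's algorithm as written actually consumes $m\cdot O(d)=O(\frac{d}{\eps}\log\frac{1}{\beta\delta})$ samples). Since the theorem does not claim computational efficiency, your argument fully establishes it; just be aware that your learner is not polynomial-time, which is the reason the paper takes the longer road.
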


Recall that (even without privacy constraints) the sample complexity of PAC learning $\parity_d$ under the uniform distribution is $\Omega(d)$. Hence the sample complexity of privately $k$-learning $\parity_d$ (non-agnostically) under the uniform distribution is of the same order as that of non-private $k$-learning.

For the intuition behind Theorem~\ref{thm:paritiesUniformUpper}, let $c_1,\dots,c_k$ denote the $k$ target concepts, and
consider the quality function $q(D,(h_1,\dots,h_k))=\max_{1\leq j\leq k}\{ \error_D(h_j,c_j) \}$. On a large enough sample $D$ we expect that $q(D,(h_1,\dots,h_k))\approx\frac{1}{2}$ for every $(h_1,\dots,h_k)\neq(c_1,\dots,c_k)$, while $q(D,(c_1,\dots,c_k))=0$. The $k$ target concepts can hence be privately identified (exactly) using stability techniques.

In order to make our algorithm computationally efficient, we apply the ``subsample and aggregate'' idea of Nissim et al.~\cite{NRS07}. We divide the input sample into a small number of subsamples, use Gaussian elimination to (non-privately) identify a candidate hypothesis vector on each subsample, and then select from these candidates privately.

\begin{algorithm}[h!]
\caption{$ParityLearner$}\label{alg:parityLearner}
{\bf Input:} Parameters $\epsilon,\delta$, and a $k$-labeled database $S$ of size $n=O(\frac{d}{\epsilon}\log(\frac{1}{\beta\delta}))$.\\
{\bf Output:} Hypotheses $h_1,\dots,h_k$.
\begin{enumerate}[rightmargin=10pt,itemsep=1pt]

\item Split $S$ into $m=O(\frac{1}{\epsilon}\log(\frac{1}{\beta\delta}))$ disjoint samples $S_1,\ldots,S_m$ of size $O(d)$ each. Initiate $Y$ as the empty multiset.

\item For every $1\leq t\leq m$:
\vspace{-5px}
\begin{enumerate}
	\item For every $1\leq j\leq k$ try to use Gaussian elimination to identify a parity function $y_j$ that agrees with the labels of the $j^{\text{th}}$ column of $S_t$.
  \item If a parity is identified for every $j$, then set $Y=Y\cup\{(y_1,...,y_k)\}$. Otherwise set $Y=Y\cup\{\bot\}$.
\end{enumerate}

\item Use algorithm $\AAA_{\text{dist}}$ with privacy parameters $\epsilon,\delta$ to choose and return a vector of $k$ parity functions $(h_1,\ldots,h_k)\in(\parity_d)^k$ with a large number of appearances in $Y$.

\end{enumerate}
\end{algorithm}

\begin{proof}[Proof of Theorem~\ref{thm:paritiesUniformUpper}]
The proof is via the construction of $ParityLearner$ (algorithm~\ref{alg:parityLearner}). First note that changing a single input element in $S$ can change (at most) one element of $Y$. Hence, applying (the $(\epsilon,\delta)$-private) algorithm $\AAA_\text{dist}$ on $Y$ preserves privacy (applying $ParityLearner$ on neighboring inputs amounts to executing $\AAA_\text{dist}$ on neighboring inputs).

Now fix $k$ target concepts $c_1,\dots,c_k\in\parity_d$ and let $S$ be a random $k$-labeled database containing $n$ i.i.d.\ elements from the uniform distribution $U_d$ over $X=\{0,1\}^d$, each labeled by $c_1,\dots,c_k$. 
Observe that (for every $1\leq t\leq m$) we have that $S_t$ contains i.i.d.\ elements from $U_d$ labeled by $c_1,\dots,c_k$.
We use $D_t$ to denote the unlabeled portion of $S_t$.
Standard arguments in learning theory (cf. Theorem~\ref{thm:Generalization}) state that for $|S_t|\geq O(d)$,
\begin{align*}
\Pr\left[\exists h,f\in\parity_d \text{ s.t. } 
\error_{U_d}(h,f)\geq\frac{1}{4}
\quad \wedge \quad \error_{D_t}(h,f)\leq\frac{1}{40}
\right]\leq\frac{1}{8}.
\end{align*}
The above inequality holds, in particular, for every hypothesis $h\in\parity_d$ and every target concept $c_j$, and hence,
\begin{align*}
\Pr\left[\exists h\in\parity_d \text{ and } j \text{ s.t. } 
\error_{U_d}(h,c_j)\geq\frac{1}{4}
\quad \wedge \quad \error_{D_t}(h,c_j)\leq\frac{1}{40}
\right]\leq\frac{1}{8}.
\end{align*}
Recall that under the uniform distribution, the only $h\in\parity_d$ s.t.\ $\error_{U_d}(h,c_j)\neq\frac{1}{2}$ is $c_j$ itself, and hence
$$
\Pr\left[\exists h\in\parity_d \text{ and } j \text{ s.t. } h\neq c_j \quad \wedge \quad \error_{D_t}(h,c_j)\leq\frac{1}{40}\right]\leq\frac{1}{8}.
$$
So, for every $1\leq t\leq m$, with probability $7/8$ we have that for every label column $j$ the only hypothesis with empirical error less than $\frac{1}{40}$ on $S_t$ is the $j^\text{th}$ target concept itself (with empirical error 0). 
In such a case, step~2a (Gaussian elimination) identifies exactly the vector of $k$ target concepts $(c_1,\dots,c_k)$.
Since $m\geq O(\log(\frac{1}{\beta}))$, the Chernoff bound ensures that except with probability $\beta/2$, the vector $(c_1,\dots,c_k)$ is identified in at least $3/4$ of the iterations of step~2.
Assuming that this is the case, the vector $(c_1,\dots,c_k)$ appears in $Y$ at least $3m/4$ times, while every other vector can appear at most $m/4$ times.
Provided that $m\geq O(\frac{1}{\epsilon}\log(\frac{1}{\beta\delta}))$, algorithm $\AAA_\text{dist}$ ensures that the $k$ target concepts are chosen with probability $1-\beta/2$.

All in all, algorithm $ParityLearner$ identifies the $k$ target concepts (exactly) with probability $1-\beta$, provided that $n\geq O(\frac{d}{\epsilon}\log(\frac{1}{\beta\delta}))$.
\end{proof}

\subsubsection{Learning Points}

We next show that the class of $\point_X$ can be (non-agnostically) $k$-learned using constant sample complexity, matching the non-private sample complexity.
\begin{theorem} \label{thm:pointsUpper}
For every domain $X$ and every $k\in\N$ there exists an $(\alpha,\beta,\epsilon,\delta)$-PAC (non-agnostic) $k$-learner for $\point_X$ with sample complexity $O(\frac{1}{\alpha\epsilon}\log(\frac{1}{\alpha\beta\delta}))$.
\end{theorem}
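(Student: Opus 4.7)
The plan is to adapt the sanitizer for $\point_X$ (Proposition~\ref{prop:sanPoint}) to the $k$-labeled setting by operating on the space of \emph{labeled types} $X \times \{0,1\}^k$. Specifically, I would treat each sample row $(x_i, y_{i,1}, \ldots, y_{i,k})$ as a single element in this type space, and apply the same Laplace-plus-threshold mechanism to its histogram: for each type $(z,v)$ whose empirical frequency $c_{z,v}(D)$ exceeds $\alpha/4$, compute $\hat{c}_{z,v} = c_{z,v}(D) + \Lap(2/\epsilon n)$ and release $(z,v)$ if $\hat{c}_{z,v} > \alpha/2$. For each released type $(z,v)$ and each coordinate $j \in [k]$ with $v_j = 1$, set $h_j = c_z$; for every coordinate that is not assigned a hypothesis, output a fixed default (e.g.\ the all-zero function, or an arbitrary point function to keep the learner proper).

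Privacy would follow by the very same argument as in Proposition~\ref{prop:sanPoint}: changing a single row of the database affects the multiplicities of at most two types by $\pm 1/n$ each, and the Laplace-plus-threshold release per type is $(\epsilon/2,\delta/2)$-differentially private by the standard calculation (the $\delta$ absorbs the event that a type with true count $0$ in a neighboring database has its noisy count cross the threshold). The argument is completely independent of the size of the type space $|X| \cdot 2^k$, so no overhead in $k$ is incurred from privacy.

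For utility, I would work under the realizable PAC assumption so that the label vector attached to any example $x_i$ is deterministically $v^{x_i} = (\mathbf{1}[x_i = x_1], \ldots, \mathbf{1}[x_i = x_k])$, meaning the only types that can actually appear in $S$ are of the form $(z, v^z)$. Call concept $c_{x_j}$ \emph{heavy} if $\Pr_\cD[x_j] \geq \alpha$. The main structural observation is that the set of \emph{distinct} heavy target points $\{x_j : j \text{ heavy}\}$ has size at most $1/\alpha$, independent of $k$. For each distinct heavy point $z$, the type $(z, v^z)$ has empirical frequency at least $\alpha/2$ w.h.p.\ by Chernoff, with a union bound over only $\leq 1/\alpha$ distinct heavy types (not over all $k$ concepts), giving the requirement $n = \Omega(\log(1/\alpha\beta)/\alpha)$. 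A second union bound over the $\leq 4/\alpha$ types with empirical frequency $> \alpha/4$ controls the Laplace noise, giving the requirement $n = \Omega(\frac{1}{\alpha\epsilon}\log(\frac{1}{\alpha\beta\delta}))$, which dominates. Hence every heavy $c_{x_j}$ is identified with zero error, and every non-heavy $c_{x_j}$ incurs error at most $\Pr_\cD[x_j] < \alpha$ from the default, so the max error is below $\alpha$.

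The main obstacle is suppressing any $k$-dependence in the sample complexity, which would naively arise from either union-bounding over the $k$ concepts or from composing $k$ independent invocations of the single-concept point learner. Both are avoided by the key observation that in the realizable setting, the relevant combinatorial object is not the $k$ concepts themselves but the at-most-$1/\alpha$ distinct heavy target points, over which all Chernoff and noise union bounds are taken. A minor technical point to verify is that bundling the full $k$-bit label vector into a type does not inflate the stable-histogram threshold, but this follows because the threshold in Proposition~\ref{prop:sanPoint} is agnostic to the size of the type universe -- only the per-type Laplace tail matters.
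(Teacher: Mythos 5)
Your proposal is correct and achieves the stated bound, but it is organized differently from the paper's proof. The paper's algorithm ($PointLearner$) works in two stages: it first runs the point-function sanitizer of Proposition~\ref{prop:sanPoint} on the \emph{unlabeled} data to privately identify the set $G$ of $O(1/\alpha)$ heavy domain elements, and then invokes the stability-based selection algorithm $\adist$ once, with the quality function $Q(S,V)=\min_{x\in G} q(S,x,\vec{v}_x)$, to privately recover the entire vector of label assignments for the elements of $G$ in one shot. You instead collapse both stages into a single Laplace-plus-threshold stable histogram over the joint type space $X\times\{0,1\}^k$ and read the hypotheses directly off the released heavy types. Both arguments rest on exactly the same two structural facts --- realizability makes the label vector of each example deterministic, so each surviving type is genuine, and only the $O(1/\alpha)$ heavy points matter, so all union bounds are over $O(1/\alpha)$ events rather than $k$ --- and both are insensitive to the size of the bin universe, which is why no $k$-dependence appears. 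Your one-stage version is arguably simpler and avoids introducing $\adist$ at all; the paper's two-stage version cleanly separates what can be learned from unlabeled data from the private selection of labels, which mirrors the structure of its other upper bounds. Two small points to tighten in a full write-up: your constants do not quite close (a type with empirical frequency exactly $\alpha/2$ can be pushed below the release threshold $\alpha/2$ by noise of magnitude $\alpha/4$, so either strengthen the Chernoff guarantee to frequency $\ge 3\alpha/4$ or lower the threshold), and your ``arbitrary point function'' default does not actually yield a proper learner with error $<\alpha$ (the paper's own algorithm likewise defaults to the all-zero hypothesis and is improper on that branch).
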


The proof is via the construction of Algorithm~\ref{alg:pointLearner}. The algorithm begins by privately identifying (using sanitization) a set of $O(1/\alpha)$ ``heavy'' elements in the input database, appearing $\Omega(\alpha)$ times. The $k$ labels of such a heavy element can be privately identified using stability arguments (since their duplicity in the database is large). The labels of a ``non-heavy'' element can be set to 0 since a target concept can evaluate to 1 on at most one such non-heavy element, in which case the error is small.

\paragraph{Notation.} We use $\#_S(x)$ to denote the duplicity of a domain element $x$ in a database $S$. For a distribution $\mu$ we denote $\mu(x)=\Pr_{\hat{x}\sim\mu}[\hat{x}=x]$.

\begin{algorithm}[h]
\caption{$PointLearner$}\label{alg:pointLearner}
{\bf Input:} Privacy parameters $\epsilon,\delta$, and a $k$-labeled database $S=(x_i,y_{i,1},\dots,y_{i,k})_{i=1}^n$. We use $D=(x_i)_{i=1}^n$ to denote the unlabeled portion of $S$.\\
{\bf Output:} Hypotheses $h_1,\dots,h_k$.
\begin{enumerate}[rightmargin=10pt,itemsep=1pt]

\item Let $\hat{D}\in X^m$ be an $(\frac{\epsilon}{2},\frac{\delta}{2})$-private $(\frac{\alpha}{30},\frac{\beta}{4})$-accurate sanitization of $D$ w.r.t.\ $\point_X$ (e.g., using Proposition~\ref{prop:sanPoint}).

\item Let $G=\{ x\in X : \frac{1}{m}\#_{\hat{D}}(x)\geq\alpha/15 \}$ be the set of all ``$\frac{\alpha}{15}$-heavy'' domain elements w.r.t.\ the sanitization $\hat{D}$. Note that $|G|\leq15/\alpha$.

\item Let $q$ be the quality function that on input a $k$-labeled database $S$, a domain element $x$, and a binary vector $\vec{v}\in\{0,1\}^k$, returns the number of appearances of $(x,\vec{v})$ in $S$. That is, $q(S,x,(v_1,\dots,v_k))=|\{ i : x_i=x \wedge y_{i,1}=v_1 \wedge \dots \wedge y_{i,k}=v_k \}|$.

\item Use algorithm $\AAA_{\text{dist}}$ with privacy parameters $\frac{\epsilon}{2},\frac{\delta}{2}$ to choose a set of vectors $V=\{\vec{v}_x\in\{0,1\}^k : x\in G\}$ maximizing $Q(S,V)=\min_{\vec{v}_x\in V}\{q(S,x,\vec{v}_x)\}$. That is, we use algorithm $\AAA_\text{dist}$ to choose a set of $|G|$ vectors -- a vector $\vec{v}_x$ for every $x\in G$ -- such that the minimal number of appearances of an entry $(x,\vec{v}_x)$ in the database $S$ is maximized.

\item For $1\leq j \leq k$: If the $j^\text{th}$ entry of every $\vec{v}_x\in V$ is 0, then set $h_j\equiv0$. Otherwise, let $x$ be s.t. $\vec{v}_x\in V$ has 1 as its $j^\text{th}$ entry, and define $h_j:X\rightarrow\{0,1\}$ as $h_j(y)=1$ iff $y=x$. 

\item Return $h_1,\dots,h_k$.

\end{enumerate}
\end{algorithm}

\begin{proof}
The proof is via the construction of $PointLearner$ (algorithm~\ref{alg:pointLearner}).
First note the algorithm only access the input database using sanitization on step~1, and using algorithm $\AAA_\text{dist}$ on step~4. By composition theorem~\ref{thm:composition}, algorithm $PointLearner$ is $(\epsilon,\delta)$-differentially private.

Let $\mu$ be a distribution over $X$, and let $c_1,\dots,c_k\in\point_X$ be the fixed target concepts.
Consider the execution of $PointLearner$ on a database $S=(x_i,y_{i,1},\dots,y_{i,k})_{i=1}^n$ sampled from $\mu$ and labeled by $c_1,\dots,c_k$. We use $D$ to denote the unlabeled portion of $S$, $\hat{D}$ for the sanitization of $D$ constructed on step~1, and write $m=|\hat{D}|$. Define the following good events.

\begin{enumerate}[label=$E_{\arabic*}:$]

\item For every $x\in X$ s.t.\ $\mu(x)\geq\alpha$ it holds that $\frac{1}{n}\#_S(x)\geq\alpha/10$.

\item For every $x\in X$ we have that $|\frac{1}{m}\#_{\hat{D}}(x) - \frac{1}{n}\#_S(x)| \leq \alpha/30$.

\item Algorithm $\AAA_\text{dist}$ returns a vector set $V$ s.t.\ $q(S,x,\vec{v}_x)\geq1$ for every $x\in G$.

\end{enumerate}

We now argue that when these three events happen algorithm $PointLearner$ returns good hypotheses.
%First, observe that the algorithm does not fail on step~2: 
%The algorithm fails if $|G|>30/\alpha$, that is if there are more than $30/\alpha$ elements $x$ with $san(x)\geq\alpha/15$. As event $E_2$ has occurred, this means that there are more than $30/\alpha$ elements $x$ that appear at least $\alpha n/30$ times in $S$, contradicting the fact that the size of $S$ is $n$.
First, observe that the set $G$ contains every element $x$ s.t. $\mu(x)\geq\alpha$: Let $x$ be s.t.\ $\mu(x)\geq\alpha$. As event $E_1$ has occurred, we have that $\frac{1}{n}\#_S(x)\geq\alpha/10$. As event $E_2$ has occurred, we have that $\frac{1}{m}\#_{\hat{D}}(x)\geq\alpha/15$, and therefore $x\in G$.

Note that if $q(S,x,\vec{v})\geq1$ then the example $x$ is labeled as $\vec{v}$ by the target concepts. Thus, as event $E_3$ has occurred, for every $\vec{v}_x\in V$ it holds that $\vec{v}_x=(c_1(x),\dots,c_k(x))$.
Now let $h_j$ be the $j^{\text{th}}$ returned hypothesis. We next show that $h_j$ is $\alpha$-good.
If $h\not\equiv0$, then let $x$ be the unique element s.t.\ $h_j(x)=1$, and note that (according to step~5) the $j^\text{th}$ entry of $\vec{v}_x$ is 1, and hence, $c_j(x)=1$. So $h_j=c_j$ (since $c_j$ is a concept in $\point_X$).

If $h_j\equiv0$ then the $j^\text{th}$ entry of every $\vec{v}_x\in V$ is 0. Note that in such a case $h_j$ only errs on the unique element $x$ s.t.\ $c_j(x)=1$, and it suffices to show that $\mu(x)<\alpha$. Assume towards contradiction that $\mu(x)\geq\alpha$. As before, event $E_1\cap E_2$ implies that $x\in G$. As event $E_3$ has occurred, we also have that $\vec{v}_x\in V$ is s.t.\ $q(S,x,\vec{v}_x)\geq1$, and the example $x$ is labeled as $\vec{v}_x$ by the target concepts. This contradicts the assumption that the $j^\text{th}$ entry of $\vec{v}_x\in V$ is 0.

Thus, whenever $E_1 \cap E_2 \cap E_3$ happens, algorithm $PointLearner$ returns $\alpha$-good hypotheses. We will now show $E_1 \cap E_2 \cap E_3$ happens with high probability.
Provided $n\geq O(\frac{1}{\alpha\epsilon}\log(\frac{1}{\alpha\delta}))$, event $E_2$ is guaranteed to hold with all but probability $\beta/4$ by the utility properties of the sanitizer used on step~1. See Proposition~\ref{prop:sanPoint}.

Theorem~\ref{thm:Generalization} (VC bound) ensures that event $E_1$ holds with probability $1-\beta/4$, provided that $n\geq O(\frac{1}{\alpha}\log(\frac{1}{\alpha\beta}))$.
To see this, let $z\equiv0$ denote the constant 0 hypothesis, and consider the class $C=\point_X\cup\{z\}$. Note that $\VC(C)=1$. Hence, Theorem~\ref{thm:generalization} states that, with all but probability $1-\beta/4$, for every $c\in\point_x$ s.t.\ $\error_{\mu}(c,z)\geq\alpha$ it holds that $\error_D(c,z)\geq\alpha/10$. That is, with all but probability $1-\beta/4$, for every $x\in X$ s.t.\ $\mu(x)\geq\alpha$ it holds that $\frac{1}{n}\#_D(x)=\frac{1}{n}\#_S(x)\geq\alpha/10$.

Before analyzing event $E_3$, we show that if $E_2$ occurs, then every $x\in G$ is s.t. $\#_S(x)\geq\alpha/30$. Let $x\in G$, that is, $x$ s.t.\ $\frac{1}{m}\#_{\hat{D}}(x)\geq\alpha/15$. Assuming event $E_2$ has occurred, we therefore have that $\frac{1}{n}\#_S(x)\geq\alpha/30$. So every $x\in G$ appears in $S$ at least $\alpha n/30$ times with the labels $(c_1(x),\dots,c_k(x)) \triangleq \vec{c}(x)$. Thus, $q(S,x,\vec{c}(x))\geq\alpha n/30$. In addition, for every $\vec{v}\neq\vec{c}(x)$ it holds that $q(S,x,\vec{v})=0$, since {\em every} appearance of the example $x$ is labeled by the target concepts. Hence, provided that $n\geq O(\frac{1}{\alpha\epsilon}\log(\frac{1}{\beta\delta}))$, algorithm $\AAA_{\text{dist}}$ ensures that event $E_3$ happens with probability at least $1-\beta/2$.

Overall, $E_1 \cap E_2 \cap E_3$ happens with probability at least $1-\beta$.
\end{proof}

\section{Approximate Privacy Lower Bounds from Fingerprinting Codes}\label{sec:FPC}

In this section, we show how fingerprinting codes can be used to obtain $\poly(k)$ lower bounds against privately learning $k$ concepts, even for very simple concept classes. Fingerprinting codes were introduced by Boneh and Shaw \cite{BS98} to address the problem of watermarking digital content. The connection between fingerprinting codes and differential privacy lower bounds was established by Bun, Ullman, and Vadhan \cite{BUV14} in the context of private query release, and has since been extended to a number of other differentially private analyses \cite{BST14, DTTZ14, SU15, BNSV15}.

A (fully-collusion-resistant) fingerprinting code is a scheme for distributing codewords $\codeword_1, \dots, \codeword_n$ to $n$ users that can be uniquely traced back to each user. Moreover, if any group of users combines its codewords into a pirate codeword $\pirateword$, then the pirate codeword can still be traced back to one of the users who contributed to it. Of course, without any assumption on how the pirates can produce their combined codeword, no secure tracing is possible. To this end, the pirates are constrained according to a \emph{marking assumption}, which asserts that the combined codeword must agree with at least one of the pirates' codeword in each position. Namely, at an index $j$ where $\codeword_{ij} = b$ for every $i \in b$, the pirates are constrained to output $\pirateword$ with $\pirateword_j = b$ as well. 

To illustrate our technique, we start with an informal discussion of how the original Boneh-Shaw fingerprinting code yields an $\tilde{\Omega}(k^{1/3})$ sample complexity lower bound for multi-learning threshold functions. For parameters $n$ and $k$, the $(n, k)$-Boneh-Shaw codebook is a matrix $\codebook \in \bits^{n \times k}$, whose rows $\codeword_i$ are the codewords given to users $i = 1, \dots, n$. The codebook is built from a number of highly structured columns, where a ``column of type $i$'' consists of $n$ bits where the first $i$ bits are set to $1$ and the last $n-i$ bits are set to $0$. For $i = 1, \dots, n-1$, each column of type $i$ is repeated a total of $k/(n-1)$ times, and the codebook $\codebook$ is obtained as a random permutation of these $k$ columns. The security of the Boneh-Shaw code is a consequence of the secrecy of this random permutation. If a coalition of pirates is missing the codeword of user $i$, then it is unable to distinguish columns of type $i-1$ from columns of type $i$. Hence, if a pirate codeword is too consistent with a user $i$'s codeword in both the columns of type $i-1$ and the columns of type $i$, a tracing algorithm can reasonably conclude that user $i$ contributed to it. Boneh and Shaw showed that such a code is indeed secure for $k = \tilde{O}(n^3)$.

To see how this fingerprinting code gives a lower bound for multi-learning thresholds, consider thresholds over the data universe $X = \{1, \dots, |X|\}$ for $|X| \ge n$. The key observation is that each column of the Boneh-Shaw codebook can be obtained as a labeling of the examples $1, \dots, n$ by a threshold concept. Namely, a column of type $i$ is the labeling of $1, \dots, n$ by the concept $c_i$. Now suppose a coalition of users $T \subseteq [n]$ constructs a database $S$ where each row is an example $i \in T$ together with the labels $\codeword_{i1}, \dots, \codeword_{ik}$ coming from the codeword given to user $i$. Let $(h_1, \dots, h_k)$ be the hypotheses produced by running a threshold multi-learner on the database. If every user has a bit $b$ at index $j$ of her codeword, then the hypothesis produced by the learner must also evaluate to $b$ on most of the examples. Thus, the empirical averages of the hypotheses $(h_1, \dots, h_k)$ on the examples can be used to obtain a pirate codeword satisfying the marking assumption. The security of the fingerprinting code, i.e. the fact that this codeword can be traced back to a user $i \in T$, implies that the learner cannot be differentially private. Hence, $n$ samples is insufficient for privately learning $k = \tilde{O}(n^3)$ threshold concepts, giving a sample complexity lower bound of $\tilde{\Omega}(k^{1/3})$.

The lower bounds in this section are stated for empirical learning, but extend to PAC learning by Theorem \ref{thm:pac-to-empirical}. We also remark that they hold against the relaxed privacy notion of \emph{label privacy}, where differential privacy only needs to hold with respect to changing the labels of one example.

\subsection{Fingerprinting Codes}

An $(n, k)$\emph{-fingerprinting code} consists of a pair of randomized algorithms $(\gen, \trace)$. The parameter $n$ is the number of users supported by the fingerprinting code, and $k$ is the length of the code. The codebook generator $\gen$ produces a \emph{codebook} $\codebook \in \bits^{n \times k}$. Each row $\codeword_i \in \bits^k$ of $\codebook$ is the \emph{codeword} of user $i$. For a subset $T \subseteq [n]$, we let $\codebook_T$ denote the set $\{\codeword_i : i \in T\}$ of codewords belonging to users in $T$. The accusation algorithm $\trace$ takes as input a pirate codeword $\pirateword$ and accuses some $i \in [n]$ (or $\bot$ if it fails to accuse any user). 

We define the feasible set of pirate codewords for a coalition $T$ and codebook $\codebook$ by
\[F(\codebook_T) = \{\pirateword \in \{0, 1\}^k : \forall j = 1, \dots, k \  \exists i \in S \text{ s.t. } \codeword_{ij} = \pirateword_j\}.\]
The basic marking assumption is that the pirate codeword $\pirateword \in F(\codebook_T)$. We say column $j$ is \emph{$b$-marked} if $\codeword_{ij} = b$ for every $i \in [n]$.

\begin{definition}[Fingerprinting Codes] \label{def:fpc}
For $n, k \in \N$ and $\sec \in (0,1]$, a pair of algorithms $(\gen, \trace)$ is an \emph{$(n, k)$-fingerprinting code with security $\sec$} if $\gen$ outputs a codebook $\codebook \in \bits^{n \times k}$ and for every (possibly randomized) adversary $\fpadv$, and every coalition $T \subseteq [n]$, if we take $\pirateword \getsr \fpadv(\codebook_T)$, then the following properties hold.
\begin{description}
\item[Completeness:]
$
\prob{\pirateword \in F(\codebook_T) \land \trace(\pirateword) = \bot} \leq \sec,
$
\item[Soundness:]
$
\prob{\trace(\pirateword) \in [n] \setminus T} \leq \sec,
$
\end{description}
Each probability is taken over the coins of $\gen, \trace$, and $\fpadv$.  The algorithms $\gen$ and $\trace$ may share a common state, which is hidden to ease notation.
\end{definition}

\subsection{Lower Bound for Improper PAC Learning}

Our lower bounds for multi-learning follow from constructions of fingerprinting codes with additional structural properties.

\begin{definition} \mnote{This is the strongest definition of compatibility -- can probably be relaxed.}
Let $C$ be a concept class over a domain $X$. An $(n, k)$-fingerprinting code $(\gen, \trace)$ is \emph{compatible with concept class $C$} if there exist $x_1, \dots, x_n \in X$ such that for every codebook $\codebook$ in the support of $\gen$, there exist concepts $c_1, \dots, c_k$ such that $\codeword_{ij} = c_j(x_i)$ for every $i = 1, \dots, n$ and $j = 1, \dots, k$.
\end{definition}

\begin{theorem} \label{thm:fpclb-pac}
Suppose there exists an $(n, k)$-fingerprinting code compatible with a concept class $C$ with security $\sec$. Let $\alpha \le 1/3$, $\beta, \eps > 0$, and $\delta < \frac{1-\sec-\beta}{n}-e^\eps\sec$. Then every (improper) $(\alpha, \beta)$-accurate and $(\eps, \delta)$-differentially private empirical $k$-learner for $C$ requires sample complexity greater than $n$.
\end{theorem}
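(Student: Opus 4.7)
My plan is to derive a contradiction by using the hypothesized learner to build an adversary that breaks the fingerprinting code. Suppose $\mathcal{A}$ is an $(\alpha, \beta)$-accurate and $(\epsilon, \delta)$-differentially private empirical $k$-learner with sample complexity $n$, and let $(\gen, \trace)$ be an $(n,k)$-FPC compatible with $C$. By compatibility, fix examples $x_1, \dots, x_n \in X$ such that every codebook $\codebook$ in the support of $\gen$ is realized by concepts $c_1, \dots, c_k \in C$ with $\codebook_{ij} = c_j(x_i)$. For any $T \subseteq [n]$, define the database $S_T(\codebook) \in (X \times \zo^k)^n$ whose $i$-th row is $(x_i, \codebook_{i1}, \dots, \codebook_{ik})$ for $i \in T$ and a fixed dummy row (e.g., $(x_1, 0, \dots, 0)$) for $i \notin T$. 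I define a family of adversaries $\fpadv^T$ that, on input $\codebook_T$, constructs $S_T(\codebook)$, runs $\mathcal{A}$ to obtain hypotheses $(h_1, \dots, h_k)$, and outputs the pirate codeword $\pirateword$ with $\pirateword_j = \mathrm{maj}_{i \in [n]} h_j(x_i)$.

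The first step is to analyze the full coalition $T = [n]$. Here $S_{[n]}(\codebook)$ is realized by $c_1, \dots, c_k$, so by empirical accuracy, with probability at least $1 - \beta$ we have $\error_{S|_j}(h_j) \le \alpha$ for every $j$. Since $\alpha \le 1/3 < 1/2$, on any $b$-marked column $j$ (where $\codeword_{ij} = b$ for all $i \in [n]$) the majority vote forces $\pirateword_j = b$. Hence $\pirateword \in F(\codebook_{[n]})$ with probability at least $1 - \beta$, and combining with the completeness of the FPC, $\trace(\pirateword) \in [n]$ with probability at least $1 - \beta - \sec$. By pigeonhole over $i \in [n]$ (with all randomness from $\gen$, $\mathcal{A}$, and $\trace$ averaged), there exists a fixed index $i^* \in [n]$ such that $\Pr[\trace(\pirateword) = i^* \mid \fpadv^{[n]}] \ge (1 - \beta - \sec)/n$.

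Now I switch to the adversary $\fpadv^T$ with $T = [n] \setminus \{i^*\}$. For every fixed $\codebook$, the databases $S_{[n]}(\codebook)$ and $S_T(\codebook)$ differ only in row $i^*$; since the map from $\mathcal{A}$'s output to $\trace(\pirateword)$ is post-processing, $(\epsilon, \delta)$-differential privacy holds pointwise and, averaged over $\codebook$, yields
\[
\Pr[\trace(\pirateword) = i^* \mid \fpadv^{[n]}] \;\le\; e^{\epsilon} \Pr[\trace(\pirateword) = i^* \mid \fpadv^{T}] + \delta.
\]
On the other hand, since $i^* \notin T$, the soundness property of the FPC gives $\Pr[\trace(\pirateword) = i^* \mid \fpadv^{T}] \le \sec$. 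Chaining the three inequalities yields $(1 - \beta - \sec)/n \le e^\epsilon \sec + \delta$, i.e., $\delta \ge (1 - \sec - \beta)/n - e^\epsilon \sec$, contradicting the hypothesis on $\delta$.

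The argument is essentially a straightforward reduction, and there is no serious obstacle; the subtle points I need to handle carefully are (i) that majority decoding together with $\alpha < 1/2$ really does guarantee the marking assumption on the fully-cooperating coalition, so that completeness can be invoked; (ii) that the pigeonhole is performed on the joint distribution over $\gen, \mathcal{A}, \trace$ so that $i^*$ is a single element of $[n]$ rather than a $\codebook$-dependent choice; and (iii) that the adversary $\fpadv^{T}$ reads only $\codebook_T$ (the dummy row $i^*$ is independent of $\codeword_{i^*}$), which is what legitimizes invoking FPC soundness.
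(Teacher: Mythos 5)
Your proposal is correct and follows essentially the same reduction as the paper: build the labeled database from the coalition's codewords, run the learner, round the per-column empirical averages (your majority vote is the same operation) to get a feasible pirate codeword on marked columns via $\alpha<1/2$, then combine completeness, pigeonhole over $i^*$, the one-row differential privacy guarantee, and soundness to contradict the bound on $\delta$. No differences of substance.
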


The proof of Theorem~\ref{thm:fpclb-pac} follows the ideas sketched above.

\begin{proof}
Let $(\gen, \trace)$ be an $(n, k)$-fingerprinting code compatible with the concept class $C$, and let $x_1, \dots, x_n \in X$ be its associated universe elements. Let $D = (x_1, \dots, x_n)$ and let $\cA$ be an $(\alpha, \beta)$-accurate empirical $k$-learner for $C$ with sample complexity $n$. We will use $\cA$ to design an adversary $\fpadv$ against the fingerprinting code.

Let $T \subseteq [n]$ be a coalition of users, and consider a codebook $\codebook \getsr \gen$. The adversary strategy $\fpadv(\codebook_T)$ begins by constructing a labeled database $S = (S_i)_{i = 1}^n$ by setting $S_i = (x_i, \codeword_{i1}, \dots, \codeword_{ik})$ for each $i \in T$ and to a nonce row for $i \notin T$. It then runs $\cA(S)$ obtaining hypotheses $(h_1, \dots, h_k)$. Finally, it computes for each $j = 1, \dots, k$ the averages
\[h_j(D) = \frac{1}{n} \sum_{i = 1}^n h_j(x_i)\]
and produces a pirate word $\pirateword$ by setting each $\pirateword_j$ to the value of $a_j$ rounded to $0$ or $1$.

Now consider the coalition $T = [n]$.  Since the fingerprinting code is compatible with $C$, each column $(\codeword_{1j}, \dots, \codeword_{nj}) = (c_j(x_1), \dots, c_j(x_n))$ for some concept $c_j \in C$. Thus, if the hypotheses $(h_1, \dots, h_k)$ are $\alpha$-accurate for $(c_1, \dots, c_k)$ on $S$, then $\pirateword \in F(\codebook_T) = F(\codebook)$. Therefore, by the completeness property of the code and the $(\alpha, \beta)$-accuracy of $\cA$, we have 
\[\prob{\trace(\fpadv(\codebook)) \ne \bot} \ge 1 - \sec - \beta.\]
In particular, there exists an $i^*$ for which
\[\prob{\trace(\fpadv(\codebook)) = i^*} \ge \frac{1 - \sec - \beta}{n}.\]
On the other hand, by the soundness property of the code,
\[\prob{\trace(\fpadv(\codebook_{-i^*})) = i^*} \le \sec.\]
Thus, $\cA$ cannot be $(\eps, \delta)$-differentially private whenever
\[\frac{1 - \sec - \beta}{n} > e^\eps \cdot \sec + \delta.\]
\end{proof}

\begin{remark} \label{rem:fpclb-pac-alpha}
If we additionally assume that there exists an element $x_0 \in X$ with $c_1(x_0) = c_2(x_0) = \dots = c_k(x_0)$, then we can use a ``padding'' argument to obtain a stronger lower bound of $n / 3\alpha$. More specifically, suppose $c_1(x_0) = \dots = c_k(x_0) = 0$. We pad the database $S$ constructed above with $(1/3\alpha - 1)n$ copies of the junk row $(x_0, 0, \dots, 0)$. Now if a hypothesis $h$ is $\alpha$-accurate for a $0$-marked column, it's empirical average will be at most $\alpha$. On the other hand, an $\alpha$-accurate hypothesis for a $1$-marked column will have empirical average at least $2\alpha$. Since there is a gap between these two quantities, a pirate algorithm can still turn an accurate vector of $k$ hypotheses into a feasible codeword.
\end{remark}

As observed earlier, the $(n, k)$-Boneh-Shaw code is compatible with the concept class $\thresh_X$ for any $|X| \ge n$. Thus, instantiating Theorem \ref{thm:fpclb-pac} (and Remark \ref{rem:fpclb-pac-alpha}) with the Boneh-Shaw code yields a lower bound for $k$-learning thresholds.

\begin{lemma}[\cite{BS98}]
Let $X$ be a totally ordered domain with $|X| \ge n$ for some $n \in \N$. Then there exists an $(n, k)$-fingerprinting code compatible with the concept class $\thresh_X$ with security $\sec$ as long as $k \ge 2n^3\log(2n/\sec)$.
\end{lemma}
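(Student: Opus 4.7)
The plan is to directly instantiate the classical Boneh--Shaw fingerprinting code from \cite{BS98} and observe that its columns are inherently realizable by threshold concepts on any $n$ distinct points.

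First, since $|X|\ge n$, fix distinct points $x_1<x_2<\cdots<x_n$ in $X$. For each $i\in\{1,\ldots,n-1\}$ define the \emph{type-$i$ column} $u^{(i)}\in\{0,1\}^n$ by $u^{(i)}_j:=\mathbf{1}[j\le i]$. The generator $\gen$ writes out the $n\times k$ matrix consisting of $d:=k/(n-1)$ copies of each $u^{(i)}$ and then applies a uniformly random permutation $\pi\in S_k$ to its columns to produce $\codebook$; the permutation $\pi$ is retained as secret state shared with $\trace$. Compatibility with $\thresh_X$ is then immediate: under $c_x(y)=\mathbf{1}[y\le x]$, one has $u^{(i)}=(c_{x_i}(x_1),\ldots,c_{x_i}(x_n))^\top$, so every column of $\codebook$ equals the label vector produced by some threshold concept applied to $x_1,\ldots,x_n$.

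The tracer $\trace$, on input $\pirateword$ and using $\pi$, computes $W_i:=\sum_{j:\,\pi^{-1}(j)\text{ has type }i}\pirateword_j$ for each $i=1,\ldots,n-1$ and sets $W_0:=0$, $W_n:=d$ as conventions. Let $R_i:=W_i-W_{i-1}$ and $\tau:=d/(2n)$; the tracer outputs the first $i$ with $R_i\ge\tau$, or $\bot$ if none exists. Completeness follows deterministically from the marking assumption: if $\pirateword\in F(\codebook_T)$, then on type-$i$ columns with $i<\min T$ every user in $T$ has bit $0$ so marking forces $W_i=0$, while for $i\ge\max T$ marking forces $W_i=d$. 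Telescoping yields $\sum_{i=1}^n R_i=W_n-W_0=d$, so by pigeonhole some $R_i\ge d/n>\tau$ and $\trace(\pirateword)\ne\bot$.

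For soundness, fix any $i\notin T$. On type-$(i-1)$ and type-$i$ columns the bits of users in $T$ are identical (they differ only in the hidden row $i$), so $\fpadv$'s view reveals nothing about which such position is type $i-1$ versus type $i$. In particular, conditioned on $\fpadv$'s view and on the set $P$ of $2d$ positions occupied by columns of types $i-1$ or $i$, the partition of $P$ into the two types is a uniformly random balanced split, independent of $\pirateword$. Hence $R_i$ has conditional mean $0$, and Hoeffding's inequality for sampling without replacement gives $\Pr[R_i\ge\tau]\le\exp(-\Omega(\tau^2/d))=\exp(-\Omega(d/n^2))$. Taking $d\ge 2n^2\log(2n/\sec)$---equivalently $k\ge 2n^3\log(2n/\sec)$---drives this below $\sec/n$, and a union bound over the at most $n$ users $i\notin T$ delivers overall soundness $\sec$. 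The main obstacle is quantitative: the threshold $\tau$ must sit just below the pigeonhole floor $d/n$ while keeping the per-user Hoeffding tail at $\sec/n$, and the constants must be tracked to match the exact statement $k\ge 2n^3\log(2n/\sec)$. This amounts to standard bookkeeping in the Boneh--Shaw analysis and presents no conceptual difficulty beyond what is sketched above.
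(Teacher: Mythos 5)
The paper does not prove this lemma itself --- it is quoted from Boneh--Shaw \cite{BS98}, with the codebook structure (columns of type $i$, each repeated $k/(n-1)$ times and randomly permuted) and the compatibility observation described informally in the surrounding prose. Your proposal reconstructs exactly that construction and its standard tracing analysis, so it is essentially the intended argument, and the compatibility and completeness parts are fully correct. Two small caveats on soundness: first, the indistinguishability argument between type-$(i-1)$ and type-$i$ columns does not apply verbatim to $i=1$ or $i=n$ (there are no type-$0$ or type-$n$ columns); those users are instead protected deterministically, since an innocent user $1$ forces all type-$1$ columns to be $0$-marked for the coalition (so $W_1=0$) and an innocent user $n$ forces all type-$(n-1)$ columns to be $1$-marked (so $R_n=0$). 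Second, with your fixed threshold $\tau=d/(2n)$, Hoeffding for sampling without replacement gives a per-user false-accusation probability of roughly $\exp(-d/8n^2)$, which requires $d\gtrsim 8n^2\ln(n/\sec)$ rather than the $2n^2\log(2n/\sec)$ implicit in the stated bound; the original Boneh--Shaw tracer uses an adaptive, block-weight-dependent threshold to obtain the sharper constant. Both points are constant-factor bookkeeping rather than conceptual gaps, and neither affects how the lemma is used in the paper.
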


\begin{corollary} \label{cor:thresholds}
Every improper $(\alpha, \beta)$-accurate and $(\eps = O(1), \delta = o(1/n))$-differentially private empirical $k$-learner for $\thresh_X$ requires sample complexity $\min\{|X|, \tilde{\Omega}(k^{1/3} / \alpha)\}$.
\end{corollary}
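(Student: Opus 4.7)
The plan is to instantiate Theorem~\ref{thm:fpclb-pac}, strengthened by the padding argument of Remark~\ref{rem:fpclb-pac-alpha}, with the Boneh--Shaw $(n,k)$-fingerprinting code. The key compatibility observation is already in hand: if we identify domain elements with $x_1 < x_2 < \cdots$ in the totally ordered universe $X$, then each Boneh--Shaw ``column of type $i$'' (with ones in the first $i$ positions and zeros afterwards) is exactly the labeling $(c_{x_i}(x_1),\dots,c_{x_i}(x_n))$ by a threshold concept. Since the Boneh--Shaw code only uses column types $1 \le i \le n-1$, the element $x_n$ is labeled $0$ by every concept appearing in the code, which will play the role of the junk element $x_0$ in Remark~\ref{rem:fpclb-pac-alpha}.

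I would then fix a sufficiently small absolute constant $\sec$ (for concreteness, $\sec = 1/100$) and choose
\[
n \;=\; \min\!\big\{\, |X| - 1,\; n^* \,\big\}, \qquad \text{where $n^*$ is the largest integer satisfying } 2(n^*)^3 \log(2n^*/\sec) \le k.
\]
Solving the cubic inequality gives $n^* = \Theta\!\left((k/\log k)^{1/3}\right) = \tilde\Omega(k^{1/3})$, so $n = \min\{\Omega(|X|),\tilde\Omega(k^{1/3})\}$. The Boneh--Shaw lemma then supplies an $(n,k)$-fingerprinting code with security $\sec$ compatible with $\thresh_X$, and the constraint $|X| \ge n+1$ ensures that the padding element $x_0 = x_n$ is available.

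Finally, I would apply Theorem~\ref{thm:fpclb-pac} together with Remark~\ref{rem:fpclb-pac-alpha}: any improper $(\alpha,\beta)$-accurate and $(\eps,\delta)$-differentially private empirical $k$-learner for $\thresh_X$ requires more than $n/(3\alpha)$ samples, as long as
\[
\delta \;<\; \frac{1-\sec-\beta}{n} \;-\; e^\eps\sec.
\]
Taking $\beta$ and $\sec$ to be small enough constants (depending on the $O(1)$ bound on $\eps$), the right-hand side is $\Theta(1/n)$, so the condition reduces to $\delta = o(1/n)$ as assumed. Substituting the value of $n$ yields the sample-complexity lower bound $\min\{|X|,\tilde\Omega(k^{1/3}/\alpha)\}$.

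The only real obstacle is choice-of-constants bookkeeping: one must confirm that $\sec$ can be chosen small enough (as a function of the $O(1)$ upper bound on $\eps$) so that $e^\eps \sec$ does not swallow the $(1-\sec-\beta)/n$ term, while simultaneously keeping the dependence of $n^*$ on $\sec$ hidden inside $\tilde\Omega$. Since every parameter here is polynomial or logarithmic, this amounts to routine algebra, and no further technical machinery beyond Theorem~\ref{thm:fpclb-pac}, Remark~\ref{rem:fpclb-pac-alpha}, and the Boneh--Shaw code is needed.
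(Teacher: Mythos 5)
Your approach is exactly the paper's: the corollary is obtained by instantiating Theorem~\ref{thm:fpclb-pac} together with the padding argument of Remark~\ref{rem:fpclb-pac-alpha} using the Boneh--Shaw code, whose columns of type $i$ are precisely the labelings of $x_1,\dots,x_n$ by threshold concepts, and whose restriction to column types $1\le i\le n-1$ guarantees a common zero point for the padding. The inversion $n^* = \tilde\Omega(k^{1/3})$ and the resulting bound $n/(3\alpha) = \tilde\Omega(k^{1/3}/\alpha)$ are all as intended.

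There is, however, one concrete error in your parameter setting: you cannot take the security parameter $\sec$ to be an absolute constant such as $1/100$. The hypothesis of Theorem~\ref{thm:fpclb-pac} requires $\delta < \frac{1-\sec-\beta}{n} - e^\eps\sec$, and with $\sec$ and $\eps$ both constant the subtracted term $e^\eps\sec$ is a fixed positive constant while $\frac{1-\sec-\beta}{n}$ tends to $0$; so for all $n$ beyond a constant the right-hand side is negative and the condition is unsatisfiable for any $\delta\ge 0$. You gesture at this issue in your final paragraph but misdiagnose it: choosing $\sec$ ``small enough as a function of the $O(1)$ bound on $\eps$'' does not suffice, because $\sec$ must shrink with $n$. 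The correct choice is $\sec = \Theta(1/n)$ (e.g.\ $\sec \le \frac{1}{4 e^\eps n}$ with $\beta \le 1/2$), which makes the right-hand side $\Theta(1/n)$ and hence compatible with the assumption $\delta = o(1/n)$. This is harmless for the rest of your argument because $\sec$ enters the Boneh--Shaw length requirement $k \ge 2n^3\log(2n/\sec)$ only logarithmically, so one still gets $n = \tilde\Omega(k^{1/3})$; but as written, with $\sec=1/100$, the reduction proves nothing once $n$ exceeds a small constant.
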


\paragraph{Discussion.} Compatibility with a concept class is an interesting measure of the complexity of a fingerprinting code which warrants further attention. Peikert, shelat, and Smith \cite{PSS03} showed that structural constraints (related to compatibility) on a fingerprinting code give a lower bound on its length beyond the general lower bound of $k = \tilde{\Omega}(n^2)$ for arbitrary fingerprinting codes. In particular, they showed that the length $k = \tilde{O}(n^3)$ of the Boneh-Shaw code is essentially tight for the ``multiplicity paradigm'', where a codebook is a random permutation of a fixed set of columns, each repeated the same number of times. We take this as evidence that our $\tilde{\Omega}(k^{1/3})$ lower bound for $\thresh_X$ cannot be improved via compatible fingerprinting codes. However, closing the gap between our lower bound and the upper bound of roughly $\sqrt{k}$ remains an intriguing open question.

A natural avenue for obtaining stronger $\poly(k)$ lower bounds for private $k$-learning is to identify compatible fingerprinting codes with shorter length. Tardos \cite{Tardos08} showed the existence of an $(n, k)$-fingerprinting code of optimal length $k = \tilde{O}(n^2)$ (see Proposition \ref{prop:tardos}). The construction of his code differs significantly from multiplicity paradigm: for each column $j$ of the Tardos code, a bias $p_j \in (0, 1)$ is sampled from a fixed distribution, and then each bit of the column is sampled i.i.d. with bias $p_j$. Hence, the columns of the Tardos code are supported on all bit vectors in $\bits^n$. This means that for a concept class $C$ to be compatible with the $(n, k)$-Tardos code, it must be the case that $\VC(C) \ge n$. Thus, the lower bound one obtains against $k$-learning $C$ only matches the lower bound for PAC learning $C$ (without privacy). It would be very interesting to construct a fingerprinting code of optimal length $k = \tilde{O}(n^2)$ with substantially fewer than $2^n$ column types (and hence compatible with a concept class of VC-dimension smaller than $n$).

\subsection{Lower Bound for Agnostic Learning}

In the agnostic learning model, a learner has to perform well even when the columns of a multi-labeled database do not correspond to any concept. This allows us to apply the argument of Theorem \ref{thm:fpclb-pac} without the constraint of compatibility. The result is that \emph{any} fingerprinting code, in particular one with optimal length, gives an agnostic learning lower bound for any non-trivial concept class.

\begin{theorem} \label{thm:fpclb-agnostic}
Suppose there exists an $(n, k)$-fingerprinting code with security $\sec$. Let $C$ be a concept class with at least two distinct concepts. Let $\alpha \le 1/3$, $\beta, \eps > 0$, and $\delta < \frac{1-\sec-\beta}{n}-e^\eps\sec$. Then every (improper) agnostic $(\alpha, \beta)$-accurate and $(\eps, \delta)$-differentially private empirical $k$-learner for $C$ requires sample complexity greater than $n$.
\end{theorem}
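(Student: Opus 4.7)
The plan is to adapt the proof of Theorem~\ref{thm:fpclb-pac}, eliminating the compatibility requirement by exploiting the flexibility of the agnostic setting: since the learner must perform well on \emph{any} labeling of the data, we no longer need the columns of the fingerprinting code to be realizable by concepts in $C$. The trick is to reduce everything to a single carefully chosen domain element, on which two distinct concepts disagree.

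Since $C$ contains at least two distinct concepts, fix $c_0, c_1 \in C$ and a point $x^* \in X$ with $c_0(x^*) = 0$ and $c_1(x^*) = 1$. Let $\cA$ be a putative $(\alpha, \beta)$-accurate and $(\eps, \delta)$-differentially private agnostic empirical $k$-learner for $C$ with sample complexity $n$. I would construct a fingerprinting adversary $\fpadv$ as follows: on input $\codebook_T$, build a labeled database $S$ of size $n$ by setting $S_i = (x^*, \codeword_{i1}, \dots, \codeword_{ik})$ for $i \in T$ and $S_i = (x^*, 0, \dots, 0)$ as a nonce row for $i \notin T$. Then run $\cA(S)$ to obtain $(h_1, \dots, h_k)$ and output the pirate codeword $\pirateword = (h_1(x^*), \dots, h_k(x^*))$.

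The heart of the argument is verifying that for $T = [n]$ the pirate codeword $\pirateword$ satisfies the marking assumption. Consider any $b$-marked column $j$, i.e., $\codeword_{ij} = b$ for all $i$. Since every example in $S$ equals $x^*$ and every label in column $j$ equals $b$, the concept $c_b \in C$ achieves $\error_{S|_j}(c_b) = 0$. The agnostic accuracy guarantee then forces $\error_{S|_j}(h_j) \leq \alpha < 1/2$, and because all examples are $x^*$, this is possible only if $h_j(x^*) = b$. Hence $\pirateword_j = b$ and $\pirateword \in F(\codebook)$. The remainder of the argument is identical to the non-agnostic case: completeness combined with accuracy yields $\Pr[\trace(\fpadv(\codebook)) \neq \bot] \geq 1 - \sec - \beta$, so by averaging some $i^*$ is accused with probability at least $(1-\sec-\beta)/n$; soundness says $i^*$ is accused under $\fpadv(\codebook_{[n] \setminus \{i^*\}})$ with probability at most $\sec$; and since these two inputs yield neighboring databases (differing only in row $i^*$), the stated parameter condition contradicts $(\eps, \delta)$-differential privacy.

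The one place where I expect a subtle point --- and the reason the theorem requires at least two distinct concepts rather than just one --- is precisely the marking-assumption verification above. If $C$ contained, say, only the constant-$0$ function, then for a $1$-marked column the agnostic benchmark $\min_{c \in C} \error_{S|_j}(c)$ would equal $1$ rather than $0$, and the accuracy guarantee would give no information about $h_j(x^*)$. Having two concepts that disagree at $x^*$ ensures that $0$-marked and $1$-marked columns are handled symmetrically, which is exactly what the marking assumption demands. Beyond this, everything reduces to plugging the new reduction into the framework already developed for the PAC setting.
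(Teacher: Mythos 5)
Your proposal is correct and follows essentially the same route as the paper's proof: both fix a point $x^*$ where two concepts $c_0,c_1\in C$ disagree, build the database entirely from copies of $x^*$ labeled by the codewords, and use the fact that the agnostic benchmark $\min_{c\in C}\error_{S|_j}(c)$ is zero on every marked column to force the hypotheses to reproduce the marked bits, after which the completeness/soundness argument of Theorem~\ref{thm:fpclb-pac} applies verbatim. Your observation about why two distinct concepts are needed matches the paper's reasoning exactly.
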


\begin{proof}
The proof follows in much the same way as that of Theorem \ref{thm:fpclb-pac}. Let $(\gen, \trace)$ be an $(n, k)$-fingerprinting code, and let $x \in X$ be such that there exist $c_0, c_1 \in C$ with $c_0(x) = 0$ and $c_1(x) = 1$.  Let $\cA$ be an agnostic $(\alpha, \beta)$-accurate empirical $k$-learner for $C$ with sample complexity $n$. Define a the fingerprinting code adversary $\fpadv$ just as in Theorem \ref{thm:fpclb-pac}. Namely, $\fpadv$ constructs examples of the form $(x, w_{i1}, \dots, w_{ij})$ with the available rows of the fingerprinting code, runs $\cA$ on the result, and returns the rounded empirical averages of the $k$ resulting hypotheses.

To show that $\cA$ cannot be $(\eps, \delta)$-differentially private, it suffices to show that if $\cA$ produces accurate hypotheses $h_1, \dots, h_k$, then the pirate codeword produced by $\fpadv$ is feasible. To see this, suppose $h_1, \dots, h_k$ are accurate, i.e.
\[ \max_{1\leq j\leq k}\left(\error_{S|_j}(h_j)  -  \min_{c \in C}\left(\error_{S|_j}(c)\right) \right) \le \alpha.\]
Let column $j$ of the codebook $\codebook$ be $0$-marked, i.e. $\codeword_{ij} = 0$ for all $i \in [n]$. Recall that $c_0(x) = 0$, and hence $\error_{S|_j}(c_0)  = 0$. Therefore, since hypothesis $h_j$ is $\alpha$-accurate, we have $\error_{S|_j}(h_j) \le \alpha$. This implies that bit $\pirateword_j$ of the pirate codeword is $0$. An identical argument shows that the bits of the pirate codeword in the $1$-marked columns are also $1$. Thus, if $\cA$ produces accurate hypotheses, the pirate codeword produced by $\fpadv$ is feasible. The rest of the argument in the proof of Theorem \ref{thm:fpclb-pac} completes the proof.
\end{proof}

\begin{remark}
Just as in Remark \ref{rem:fpclb-pac-alpha}, a padding argument shows how to obtain a lower bound of $n / 3\alpha$ under some additional assumptions on $C$, e.g. if the distinct concepts also share a common point $x'$ with $c_0(x') = c_1(x')$.
\end{remark}

\begin{proposition}[\cite{Tardos08}] \label{prop:tardos}
For $n \in \N$ and $\sec \in (0, 1)$, there exists an $(n, k)$-fingerprinting code with security $\sec$ as long as $k = O(n^2 \log (n/\sec))$.
\end{proposition}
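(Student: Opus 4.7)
The plan is to establish Tardos's probabilistic construction, which combines a random codebook (with secret biases) together with a linear-scoring tracing algorithm. Fix a small truncation parameter $t = \Theta(1/n)$, and let $f$ be the arcsine-type density $f(p) \propto 1/\sqrt{p(1-p)}$ restricted to $[t,1-t]$. Algorithm $\gen$ first samples biases $p_1, \ldots, p_k \sim f$ independently, then outputs the codebook $\codebook \in \bits^{n\times k}$ in which the entries $w_{ij}$ are independent $\mathrm{Bernoulli}(p_j)$ random variables. The biases $p_j$ are stored as shared secret state for $\trace$.

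The tracing algorithm, on receiving a pirate word $\pirateword$, computes for each user $i \in [n]$ the score
\[
S_i \;=\; \sum_{j=1}^k \left(\mathbf{1}[w_{ij}=1]\,\sqrt{\tfrac{1-p_j}{p_j}} - \mathbf{1}[w_{ij}=0]\,\sqrt{\tfrac{p_j}{1-p_j}}\right)\cdot(2\pirateword_j - 1),
\]
and accuses any user with $S_i$ above a threshold $Z = \Theta(n\log(n/\sec))$, outputting $\bot$ if no such user exists.

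For \textbf{soundness}, I would observe that if $i \notin T$ then $w_{ij}$ is independent of $\pirateword_j$ (which depends only on $\{w_{i'j} : i' \in T\}$ and $p_j$). Conditioned on the biases and on $\pirateword$, the per-column summand has mean $0$ and variance $O(1)$, while its absolute value is bounded by $\sqrt{(1-t)/t} = O(\sqrt{n})$. A Bernstein-type concentration bound then gives $\Pr[S_i > Z] \le \sec/n$, and a union bound over all honest users yields the soundness guarantee. For \textbf{completeness}, we use the marking assumption to lower-bound the total coalition score $\sum_{i \in T} S_i$. A per-column calculation shows that, because $\pirateword_j$ must agree with some coalition member, the expected contribution to $\sum_{i \in T} S_i$ in column $j$ is $\Omega(1/\sqrt{n})$ in magnitude, \emph{independently of the adversary's strategy}. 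Over $k$ columns this yields $\E[\sum_{i\in T} S_i] = \Omega(k/\sqrt{n})$ with high probability over the biases; by pigeonhole at least one coalition member then has expected score $\Omega(k/(n\sqrt{n}))$. Choosing $k = \Theta(n^2 \log(n/\sec))$ makes this well in excess of $Z$, and a concentration argument turns this expectation bound into a high-probability statement.

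The main obstacle, and the technical heart of the argument, is the simultaneous balancing of the arcsine bias distribution and the square-root scoring function. The arcsine density is singular at $0$ and $1$ precisely so that an adversary cannot hurt the coalition's total score regardless of how it weights rare biases, while simultaneously keeping the per-column score of honest users subgaussian after appropriate truncation. Verifying that the right combinatorial identity $\E_{p \sim f}[\sqrt{p(1-p)}] = \Theta(1)$ produces a lower bound on the coalition's aggregate gain for \emph{every} adversary strategy (not just the ``obvious'' majority-vote one) is the subtle step; once this is in hand, the concentration bounds for both soundness and completeness combine to give the claimed length $k = O(n^2 \log(n/\sec))$.
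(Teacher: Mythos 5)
This proposition is imported from \cite{Tardos08} and not proved in the paper, so the comparison here is against Tardos's own argument. Your construction is the right one (arcsine biases truncated to $[t,1-t]$ with $t=\Theta(1/n)$, independent Bernoulli codebook with the biases kept as shared state, the square-root linear score with threshold $Z=\Theta(n\log(n/\sec))$), and your soundness argument is correct: for $i\notin T$ the per-column summand has conditional mean $0$, variance exactly $1$, and magnitude $O(\sqrt{n})$, so a Bernstein bound plus a union bound over innocent users gives $\Pr[\trace \text{ accuses } i]\le\sec/n$.

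The gap is in completeness, and it is quantitative but real. You claim the expected per-column contribution to $\sum_{i\in T}S_i$ is $\Omega(1/\sqrt{n})$; it needs to be, and in fact is, $\Omega(1)$. Trace your own arithmetic: with an $\Omega(1/\sqrt n)$ per-column gain, pigeonhole gives one coalition member expected score $\Omega(k/(n\sqrt n))=\Omega(\sqrt{n}\log(n/\sec))$ at $k=\Theta(n^2\log(n/\sec))$, which is \emph{smaller} than $Z=\Theta(n\log(n/\sec))$ by a factor of $\sqrt n$, so the parameters do not close; you would only obtain $k=O(n^{2.5}\log(n/\sec))$. The key lemma --- the technical heart of Tardos's proof --- is that for \emph{every} strategy $\sigma(x)=\Pr[\pirateword_j=1\mid x]$, where $x$ is the number of coalition members holding a $1$ in the column, one has $\E_{p,x}\left[(2\sigma(x)-1)\,\tfrac{x-np}{\sqrt{p(1-p)}}\right]\ge\Omega(1)$; note that $\sum_{i\in T}U_{ij}$ collapses exactly to $(x-np)/\sqrt{p(1-p)}$. (Sanity check with the interleaving strategy $\sigma(x)=x/n$: the inner expectation equals $2\sqrt{p(1-p)}$, which averages to about $2/\pi$ under the arcsine density --- this is precisely the identity $\E_{p}[\sqrt{p(1-p)}]=\Theta(1)$ you mention, and it already contradicts your $\Omega(1/\sqrt n)$ figure.) With the $\Omega(1)$ bound, $\E[\sum_{i\in T}S_i]=\Omega(k)$, some member has expected score $\Omega(k/n)=\Omega(n\log(n/\sec))\ge 2Z$, and the claimed length follows. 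Relatedly, your justification ``because $\pirateword_j$ must agree with some coalition member'' is not the source of this gain: in a typical column the coalition of size $n$ is split, and the marking assumption imposes no constraint whatsoever. The gain comes from the pirates' ignorance of the secret bias $p_j$ --- any $x$-measurable choice of $\pirateword_j$ is positively correlated with $(x-np)$ once one averages over $p$ --- while the marking assumption is needed only to control the columns whose bias is near the truncation points, where $x\in\{0,n\}$ with high probability and a sign error there would otherwise be catastrophic.
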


\begin{corollary}\label{cor:OmegaSqrtK}
Every improper agnostic $(\alpha, \beta)$-accurate and $(\eps = O(1), \delta = o(1/n))$-differentially private empirical $k$-learner for $\point_X, \thresh_X, \parity_d$ requires sample complexity $\min\{|X|, \tilde{\Omega}(k^{1/2})\}$.
\end{corollary}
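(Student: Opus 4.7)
The plan is to combine Theorem~\ref{thm:fpclb-agnostic} with Tardos's optimal-length fingerprinting code (Proposition~\ref{prop:tardos}), exactly along the lines used in the threshold lower bound.

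First, I would check that the hypothesis of Theorem~\ref{thm:fpclb-agnostic} is met for all three concept classes, i.e., that each contains at least two distinct concepts. Assuming $|X| \ge 2$, this is immediate: for $\point_X$, take any two point indicators $c_x \neq c_{x'}$; for $\thresh_X$, take the thresholds at the smallest and largest elements of $X$; for $\parity_d$, take the all-zeros parity (constant $0$) together with any nonzero parity character, which disagree on some $x \in \{0,1\}^d$.

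Second, I would select the fingerprinting-code parameters. Suppose toward contradiction that an $(\alpha,\beta)$-accurate, $(\epsilon,\delta)$-differentially private improper agnostic empirical $k$-learner exists with sample complexity $n$. I choose the Tardos security parameter $\sec = \Theta(1/n)$, small enough that
\[
e^{\epsilon}\sec + \delta \;<\; \frac{1-\sec-\beta}{n}.
\]
This is feasible because $\epsilon = O(1)$, $\beta$ is a constant bounded away from $1$, and $\delta = o(1/n)$, so the right-hand side is $\Theta(1/n)$ while $e^\epsilon\sec$ can be made an arbitrarily small $\Theta(1/n)$ quantity. Proposition~\ref{prop:tardos} then supplies an $(n,k')$-fingerprinting code with security $\sec$ for some $k' = O(n^2 \log(n/\sec)) = O(n^2 \log n) = \tilde{O}(n^2)$.

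Third, I would match this to the target length $k$. If $k \ge k'$, I pad the code to length $k$ by appending dummy $0$-marked columns (all users receive bit $0$); the feasibility set shrinks appropriately and the tracing algorithm can simply ignore the padded coordinates, so completeness and soundness are preserved with the same parameter $\sec$. Theorem~\ref{thm:fpclb-agnostic}, applied to the padded $(n,k)$-code and the fixed domain element $x$ where $c_0$ and $c_1$ disagree, contradicts the assumed privacy of the learner. Inverting the relation $k = \tilde{O}(n^2)$ yields $n = \tilde{\Omega}(\sqrt{k})$. The $\min$ with $|X|$ in the statement simply accounts for the trivial upper bound of memorizing the whole domain, which renders the fingerprinting-code construction vacuous once $n \gtrsim |X|$.

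The main ``obstacle'' is really just bookkeeping the parameter regime: ensuring that $\sec$ can be taken small enough (namely $\Theta(1/n)$) to satisfy the constraint of Theorem~\ref{thm:fpclb-agnostic} without blowing up the log factor in the code length $k' = O(n^2\log(n/\sec)) = \tilde{O}(n^2)$. Since only a logarithmic dependence on $1/\sec$ enters, this is harmless and the $\tilde{\Omega}(\sqrt{k})$ bound goes through for all three classes uniformly.
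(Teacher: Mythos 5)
Your proposal is correct and matches the paper's (implicit) argument exactly: Corollary~\ref{cor:OmegaSqrtK} is stated as an immediate consequence of instantiating Theorem~\ref{thm:fpclb-agnostic} with the optimal-length Tardos code of Proposition~\ref{prop:tardos}, with the same $\sec = \Theta(1/n)$ parameter choice and inversion of $k = \tilde{O}(n^2)$. Your extra bookkeeping (padding the code length, verifying each class has two distinct concepts) is sound and fills in details the paper leaves unstated.
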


The same proof yields a lower bound for agnostically learning parities under the uniform distribution.

\begin{proposition} \label{prop:fpclb-parities}
Suppose there exists an $(n, k)$-fingerprinting code with security $\sec$. Let $\alpha \le 1/6, \beta > 0$ and $d = \log n$. Then every (improper) agnostic $(\alpha, \beta, \eps = O(1), \delta = o(1/n))$-PAC $k$-learner for $\parity_d$ requires sample complexity $\Omega(n)$.
\end{proposition}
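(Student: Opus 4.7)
The plan is to adapt the proof of Theorem~\ref{thm:fpclb-agnostic} to the restricted setting of learning parities under the uniform distribution. Since $d = \log n$, we have $|X| = |\bits^d| = n$, which lets me enumerate $X = \{x_1, \ldots, x_n\}$ and identify each fingerprinting code user $i \in [n]$ with the data universe element $x_i$. By Theorem~\ref{thm:pac-to-empirical}, it suffices to rule out a private agnostic empirical $k$-learner for $\parity_d$ with sample complexity $O(n)$; the PAC-to-empirical reduction remains valid in the uniform-marginal regime because I will construct the adversary's database to have one row per element of $X$, so its row-empirical distribution has uniform marginal on $\bits^d$ and the subsampled rows constitute i.i.d.\ examples from a distribution with uniform marginal.

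The fingerprinting code adversary $\fpadv$ mimics the one from Theorem~\ref{thm:fpclb-agnostic}: given codewords $\codebook_T$ for a coalition $T \subseteq [n]$, it builds a database $S$ of $n$ rows in which row $i$ is $(x_i, \codeword_{i,1}, \ldots, \codeword_{i,k})$ for $i \in T$ (and an arbitrary junk row for $i \notin T$), invokes the empirical learner to obtain hypotheses $(h_1, \dots, h_k)$, and produces a pirate codeword by thresholding the empirical averages $h_j(D) = \frac{1}{n} \sum_{i} h_j(x_i)$ at $1/4$. As in the proof of Theorem~\ref{thm:fpclb-agnostic}, when $T = [n]$ and the learner's output is $\alpha$-accurate I will argue that the resulting pirate codeword lies in $F(\codebook)$, after which the completeness and soundness of the fingerprinting code force a violation of $(\eps, \delta)$-differential privacy under the stated parameter regime $\eps = O(1), \delta = o(1/n)$, analogously to the final inequality in the proof of Theorem~\ref{thm:fpclb-pac}.

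The main obstacle is the feasibility argument for the pirate codeword. In Theorem~\ref{thm:fpclb-agnostic} it relied on $C$ containing both a constant-$0$ and a constant-$1$ concept; while $\parity_d$ contains $c_{\vec{0}} \equiv 0$, it contains no constant-$1$ function. I resolve this via the structural fact specific to parities on the uniform distribution that for any $x \neq \vec{0}$, the parity $c_x$ evaluates to $1$ on exactly $|X|/2 = n/2$ points of $\bits^d$. Hence on a $0$-marked column, $c_{\vec{0}}$ attains empirical error $0$, so an agnostic $\alpha$-accurate $h_j$ satisfies $\error_{S|_j}(h_j) \le \alpha$ and thus $h_j(D) \le \alpha \le 1/6 < 1/4$; on a $1$-marked column, every non-trivial parity attains empirical error exactly $1/2$ (and $c_{\vec 0}$ attains error $1$), so $\min_{c \in \parity_d} \error_{S|_j}(c) = 1/2$ and an $\alpha$-accurate $h_j$ satisfies $\error_{S|_j}(h_j) \le 1/2 + \alpha$, giving $h_j(D) \ge 1/2 - \alpha \ge 1/3 > 1/4$. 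The threshold $1/4$ therefore correctly recovers the marked bit in both cases, so $\pirateword \in F(\codebook)$ with probability at least $1 - \beta$, and the remainder of the argument proceeds exactly as in Theorem~\ref{thm:fpclb-agnostic} to conclude the $\Omega(n)$ lower bound.
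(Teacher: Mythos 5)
Your proposal is correct and follows essentially the same route as the paper: reduce to empirical learning on the database consisting of all $2^d = n$ binary strings, feed the codewords in as labels, and use the facts that the all-zeroes parity has empirical error $0$ on $0$-marked columns while every parity has error at least $1/2$ on $1$-marked columns to round the empirical averages into a feasible pirate codeword. The only cosmetic difference is that you threshold at $1/4$ where the paper uses the two cutoffs $\le \alpha$ and $\ge \tfrac12 - \alpha$; these coincide for $\alpha \le 1/6$.
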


\begin{proof}[Proof sketch]
By Lemma \ref{thm:pac-to-empirical}, it is enough to rule out a private empirical learner for a database whose $n$ examples are the distinct binary strings in $\{0, 1\}^d$. To do so, we follow the proof of Theorem \ref{thm:fpclb-agnostic}, highlighting the changes that need to be made. First, we let $c_0$ be the all-zeroes concept, and let $c_1$ be an arbitrary other parity function. Second, $\fpadv$ instead constructs examples of the form $(x_i, \codeword_{i1}, \dots, \codeword_{ik})$ where $x_i$ is the $i$th binary string. Finally, when converting the hypotheses $(h_1, \dots, h_k)$ into a feasible codeword, we instead set $\pirateword_j$ to $0$ if $h_j(D) \le \alpha$, and set $\pirateword_j$ to $1$ if $h_j(D) \ge \frac{1}{2} - \alpha$. This works because, while $\error_{S|_j}(c_0) = 0$ with respect to $0$-marked columns, any concept (and in particular, $c_1$) has error $\frac{1}{2}$ with respect to $1$-marked columns.
\end{proof}

\section{Examples where Direct Sum is Optimal }\label{sec:pureLowerBounds}
In this section we show several examples for cases where the direct sum is (roughly) optimal.
As we saw in Section~\ref{sec:FPC}, with $(\epsilon,\delta)$-differential privacy, every non-trivial {\em agnostic} $k$-learner requires sample complexity $\Omega(\sqrt{k})$.
We can prove a similar result for $\epsilon$-private learners, that holds even for non-agnostic learners:
\begin{theorem}\label{thm:OmegaK}
Let $C$ be any non-trivial concept class over a domain $X$ (i.e.,\ $|C|\geq2$).
Every proper or improper $(\alpha,\beta{=}\frac{1}{2},\epsilon)$-private PAC $k$-learner for $C$ requires sample complexity $\Omega(k/\epsilon)$.
\end{theorem}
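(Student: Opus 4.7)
The plan is to prove this via the standard packing lower bound, adapted to the multi-learning setting so that the packing size becomes $2^k$.

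Since $|C|\geq 2$, fix two concepts $c_0,c_1\in C$ and a domain element $x^*\in X$ where $c_0(x^*)\neq c_1(x^*)$. I will take $\cD$ to be the point distribution concentrated on $x^*$. For each binary string $b=(b_1,\dots,b_k)\in\{0,1\}^k$, consider the target concept vector $(c_{b_1},\dots,c_{b_k})$. Under $\cD$, a sample of size $n$ is deterministically the $k$-labeled database $S_b$ consisting of $n$ copies of the row $(x^*,c_{b_1}(x^*),\dots,c_{b_k}(x^*))$. Define
\[T_b=\{(h_1,\dots,h_k)\in(2^X)^k \;:\; h_j(x^*)=c_{b_j}(x^*)\text{ for all }j=1,\dots,k\}.\]
Since $c_0(x^*)\neq c_1(x^*)$, the sets $\{T_b\}_{b\in\{0,1\}^k}$ are pairwise disjoint. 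Moreover, because $\cD$ is a point mass at $x^*$, any $(\alpha,\tfrac12)$-accurate PAC $k$-learner with $\alpha<1$ must satisfy $\Pr[\cA(S_b)\in T_b]\geq \tfrac12$ for every $b$, since $h_j(x^*)\neq c_{b_j}(x^*)$ would force $\error_\cD(c_{b_j},h_j)=1>\alpha$.

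Next I will apply the group-privacy property. Any two databases $S_b$ and $S_{b'}$ differ in at most $n$ entries, so by $\epsilon$-differential privacy and a hybrid argument,
\[\Pr[\cA(S_b)\in T]\leq e^{n\epsilon}\cdot\Pr[\cA(S_{b'})\in T]\]
for every measurable set $T$. Fix an arbitrary reference string $b^*\in\{0,1\}^k$. Then for each $b$,
\[\Pr[\cA(S_{b^*})\in T_b]\geq e^{-n\epsilon}\cdot\Pr[\cA(S_b)\in T_b]\geq \tfrac12 e^{-n\epsilon}.\]
Summing over the $2^k$ disjoint sets $T_b$ and using that probabilities are at most $1$:
\[1\;\geq\;\sum_{b\in\{0,1\}^k}\Pr[\cA(S_{b^*})\in T_b]\;\geq\;2^k\cdot\tfrac12 e^{-n\epsilon}.\]
Rearranging yields $n\geq (k-1)/\epsilon=\Omega(k/\epsilon)$, as desired.

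The argument is essentially a direct generalization of the standard packing lower bound (\cite{FFKN09,HT10,BKN10}) from one-dimensional packings of size $2$ to $k$-dimensional packings of size $2^k$, and no step is genuinely difficult. The only thing to be careful about is that the construction works for both proper and improper learners: the sets $T_b$ are defined purely in terms of evaluations at the single point $x^*$, so the hypothesis class from which the learner draws is irrelevant. Note also that since the argument only uses the evaluations of the hypotheses at $x^*$, the same proof shows that the lower bound applies even under the relaxed notion of ``label privacy'' (where neighboring databases differ in the labels of one example).
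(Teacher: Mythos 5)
Your proposal is correct and follows essentially the same packing argument as the paper: a point-mass distribution on an element where two concepts disagree, $2^k$ databases indexed by label vectors, accuracy forcing the learner into one of $2^k$ disjoint output events with probability at least $1/2$, and group privacy over $n$ rows forcing $e^{\epsilon n}\gtrsim 2^k$. The only cosmetic differences are that you partition the output space by the hypotheses' values at $x^*$ rather than invoking the paper's ``WLOG the learner outputs hypotheses in $\{f,g\}$'' reduction, and you sum all $2^k$ disjoint events against total probability $1$ rather than bounding the failure probability by $\beta=1/2$; both yield the same $\Omega(k/\epsilon)$ bound.
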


In~\cite{BKN10,BNS13a,BNS13b}, Beimel et al.\ presented an agnostic proper learner for $\point_X$ with sample complexity $O_{\alpha,\beta,\epsilon,\delta}(1)$ under $(\epsilon,\delta)$-privacy, and an agnostic improper learner for $\point_X$ with sample complexity $O_{\alpha,\beta,\epsilon,\delta}(1)$ under $\epsilon$-privacy.
Hence, using Observation~\ref{obs:directSum} (direct sum) with their results yields an $(\alpha,\beta,\epsilon,\delta)$-PAC agnostic proper $k$-learner for $\point_X$ with sample complexity $\tilde{O}_{\alpha,\beta,\epsilon,\delta}(\sqrt{k})$, and an $(\alpha,\beta,\epsilon)$-PAC agnostic improper $k$-learner for $\point_X$ with sample complexity $\tilde{O}_{\alpha,\beta,\epsilon}(k)$. As supported by our lower bounds (Corollary~\ref{cor:OmegaSqrtK} and Theorem~\ref{thm:OmegaK}), those learners have roughly optimal sample complexity (ignoring the dependency in $\alpha,\beta,\epsilon,\delta$ and logarithmic factors in $k$).

\begin{proof}[Proof of Theorem~\ref{thm:OmegaK}]
The proof is based on a packing argument~\cite{HT10,BKN10}.
Let $x\in X$ and $f,g\in C$ be s.t.\ $f(x)\neq g(x)$.
Let $\mu$ denote the constant distribution over $X$ giving probability 1 to the point $x$.
Note that $\error_{\mu}(f,g)=1$. Moreover, observe that for every concept $h$, if $\error_{\mu}(h,f)<1$ then $h(x)=f(x)$, and similarly with $h,g$.

Let $\cal A$ be an $(\alpha,\beta,\epsilon)$-private PAC $k$-learner for $C$ with sample complexity $n$.
For every choice of $k$ target functions $(c_1,\dots,c_k)=\vec{c}\in\{f,g\}^k$, let $S_{\vec{c}}$ denote the $k$-labeled database containing $n$ copies of the point $x$, each of which is labeled by $c_1,\dots,c_k$. 
Without loss of generality, we can assume that on such databases $\cal A$ returns hypotheses in $\{f,g\}$ (since under $\mu$ we can replace an arbitrarily chosen hypothesis $h$ with $f$ if $f(x)=h(x)$ or with $g$ if $g(x)=h(x)$).
Therefore, by the utility properties of $\cal A$, for every $\vec{c}=(c_1,\dots,c_k)\in\{f,g\}^k$ we have that $\Pr_{\cal A}[{\cal A}(S_{\vec{c}})=(c_1,\dots,c_k)]\geq\frac{1}{2}$. By changing the database $S_{\vec{c}}$ to $S_{\vec{c'}}$ one row at a time while applying the differential privacy constraint, we see that
$$\Pr_{\cal A}[{\cal A}(S_{\vec{c}})=(c'_1,\dots,c'_k)]\geq\frac{1}{2}e^{-\epsilon n}.$$
Since the above inequality holds for every two databases $S_{\vec{c}}$ and $S_{\vec{c'}}$, we get
$$\frac{1}{2}\geq\Pr_{\cal A}[{\cal A}(S_{\vec{c}})\neq(c_1,\dots,c_k)]\geq(2^k-1)\frac{1}{2}e^{-\epsilon n}.$$
Solving for $n$, this yields $n=\Omega(k/\epsilon)$.
\end{proof}

\begin{remark}
The above proof could easily be strengthened to show that $n=\Omega(\frac{k}{\alpha\epsilon})$, provided that $C$ contains two concepts $f,g$ s.t.\ $\exists x,y\in X$ for which $f(x)\neq g(x)$ and $f(y)=g(y)$. 
\end{remark}

The following lemma shows that the sample complexities of properly and improperly learning parities under the uniform distribution are the same. Thus, for showing lower bounds, it is without loss of generality to consider proper learners.
 
\begin{lemma} \label{lem:proper-parities}
Let $\alpha < 1/4$. Let $\cA$ be a (possibly improper) $(\alpha, \beta, \eps, \delta)$-PAC $k$-learner for $\parity_d$ under the uniform distribution with sample complexity $n$. Then there exists a proper $(\alpha' = 0, \beta, \eps, \delta)$-PAC $k$-learner $\cA'$ for $\parity_d$ (under the uniform distribution) with sample complexity $n$.
\end{lemma}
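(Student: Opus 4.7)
The plan is to obtain $\cA'$ from $\cA$ by post-processing: run $\cA$ on the input to produce hypotheses $(h_1, \dots, h_k)$, then for each $j$, replace $h_j$ by a parity function $\tilde{c}_j \in \parity_d$ that minimizes $\Pr_{x \sim U_d}[h_j(x) \neq \tilde{c}_j(x)]$, breaking ties arbitrarily, and output $(\tilde{c}_1, \dots, \tilde{c}_k)$. Since this transformation is a post-processing of the output of $\cA$ that does not look at the input database, the $(\eps, \delta)$-differential privacy of $\cA$ is preserved. (Computability is not at issue since the statement concerns only sample complexity.)

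The key structural fact is that distinct parities are maximally far apart under the uniform distribution: for any $c, c' \in \parity_d$ with $c \neq c'$, $\error_{U_d}(c, c') = \Pr_{x \sim U_d}[\langle c \oplus c', x \rangle \neq 0] = 1/2$. Combined with the triangle inequality for the pseudo-metric $d(f, g) := \Pr_{x \sim U_d}[f(x) \neq g(x)]$ on functions $X \to \{0, 1\}$, this yields the following: if $\error_{U_d}(h_j, c_j) \leq \alpha < 1/4$, then for every parity $c \neq c_j$,
\[\error_{U_d}(h_j, c) \;\geq\; \error_{U_d}(c_j, c) - \error_{U_d}(c_j, h_j) \;\geq\; \tfrac{1}{2} - \alpha \;>\; \tfrac{1}{4} \;>\; \alpha \;\geq\; \error_{U_d}(h_j, c_j).\]
Hence $c_j$ is the unique minimizer of $\error_{U_d}(h_j, \cdot)$ over $\parity_d$, so the post-processing step outputs $\tilde{c}_j = c_j$ exactly, giving zero generalization error on the $j$-th task.

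It remains only to track the failure probability. Condition on the event (of probability at least $1-\beta$) that $\cA$ succeeds, meaning $\max_j \error_{U_d}(h_j, c_j) \leq \alpha$; on this event the argument above shows $\tilde{c}_j = c_j$ for every $j$ simultaneously, so $\max_j \error_{U_d}(\tilde{c}_j, c_j) = 0 \leq \alpha' = 0$. Therefore $\cA'$ is a proper $(0, \beta, \eps, \delta)$-PAC $k$-learner for $\parity_d$ under the uniform distribution with the same sample complexity $n$.

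The argument has no real obstacle; the only point to be careful about is that the reduction is a pure post-processing (hence automatically preserves differential privacy with the same parameters) and that the slack $\alpha < 1/4$ is strictly less than the $1/2$ separation between distinct parities, so the nearest-parity decoding is unambiguous and correct for every task at once.
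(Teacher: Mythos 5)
Your proposal is correct and follows essentially the same route as the paper: round each $h_j$ to the nearest parity by post-processing (which preserves $(\eps,\delta)$-differential privacy), and use the fact that distinct parities are at distance exactly $1/2$ under $U_d$ to conclude the rounding recovers $c_j$ exactly whenever $\cA$ succeeds. The only cosmetic difference is that you argue via the reverse triangle inequality that $c_j$ is the unique minimizer, while the paper bounds $\error_{U_d}(h_j',c_j)\le 2\alpha<1/2$ and then invokes the same $1/2$-separation; these are the same idea.
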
 

\begin{proof}
The algorithm $\cA'$ runs $\cA$ and ``rounds'' each hypothesis $h_j$ produced to the nearest parity function. That is, it outputs $(h_1', \dots, h_k')$ where $h_j'$ is a parity function that minimizes $\Pr_{x \sim U_d} [h_j'(x) \ne h_j(x)]$. Since this is just post-processing of the differentially private algorithm $\cA$, the proper learner $\cA$ remains $(\eps, \delta)$-differentially private.

Now suppose $(h_1, \dots, h_k)$ is $\alpha$-accurate for parity functions $(c_1, \dots, c_k)$ on the uniform distribution. Then for each $j$,
\begin{align*}
\Pr_{x \sim U_d} [h_j'(x) \ne c_j(x)] &\le \Pr_{x \sim U_d} [h_j'(x) \ne h_j(x)] + \Pr_{x \sim U_d} [h_j(x) \ne c_j(x)] \\
&\le 2\Pr_{x \sim U_d} [h_j(x) \ne c_j(x)] \\
&\le 2\alpha.
\end{align*}
Hence, $\error_{U_d}(h_j', c_j) < 1/2$. Since the error of any parity function from $c_j$ (other than $c_j$ itself) is exactly $1/2$ under the uniform distribution, we conclude that $(h_1', \dots, h_k')$ is in fact $0$-accurate for $(c_1, \dots, c_k)$.
\end{proof}
 
\begin{theorem}\label{thm:paritiesLowerBound}
Let $\alpha < \frac{1}{4}$. Every $(\alpha,\beta{=}\frac{1}{2},\epsilon)$-PAC $k$-learner for $\parity_d$ (under the uniform distribution) requires sample complexity $\Omega(kd/\epsilon)$.
\end{theorem}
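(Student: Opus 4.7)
The plan is to combine Lemma~\ref{lem:proper-parities} with an averaging argument and a packing argument analogous to Theorem~\ref{thm:OmegaK}.

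By Lemma~\ref{lem:proper-parities}, it suffices to consider a proper, $(0, 1/2)$-accurate learner $\cA$ with sample complexity $n$: on an input of $n$ i.i.d.\ samples from $U_d$ labeled by $\vec{c} \in (\parity_d)^k$, $\cA$ outputs exactly $\vec{c}$ with probability at least $1/2$. I will write $S_{D, \vec{c}}$ for a database $D \in (\{0,1\}^d)^n$ labeled by $\vec{c}$, and set $f_{\vec{c}}(D) = \Pr_{\cA}[\cA(S_{D, \vec{c}}) = \vec{c}]$. The accuracy guarantee reads $\E_{D \sim U_d^n}[f_{\vec{c}}(D)] \geq 1/2$ for every $\vec{c}$. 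Averaging also over a uniformly random $\vec{c}$ and swapping expectations, there must exist a fixed sample $D^* \in (\{0,1\}^d)^n$ with $\E_{\vec{c}}[f_{\vec{c}}(D^*)] \geq 1/2$. Since $f_{\vec{c}}(D^*) \in [0, 1]$, the reverse Markov inequality then implies that at least a $1/3$-fraction of vectors $\vec{c}$---call this set $\cC^* \subseteq (\parity_d)^k$---satisfy $f_{\vec{c}}(D^*) \geq 1/4$.

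Having identified such a $D^*$, I then execute a pure-DP packing as in Theorem~\ref{thm:OmegaK}. Fix any $\vec{c}_0 \in \cC^*$. For every $\vec{c}' \in \cC^*$, the databases $S_{D^*, \vec{c}_0}$ and $S_{D^*, \vec{c}'}$ share the unlabeled portion $D^*$ and hence differ in at most $n$ rows. Group privacy of $\cA$ yields
\[
\Pr[\cA(S_{D^*, \vec{c}_0}) = \vec{c}'] \;\geq\; e^{-\epsilon n} \cdot \Pr[\cA(S_{D^*, \vec{c}'}) = \vec{c}'] \;\geq\; \tfrac{1}{4} e^{-\epsilon n}.
\]
Summing these disjoint events over $\vec{c}' \in \cC^*$ gives
\[
1 \;\geq\; \sum_{\vec{c}' \in \cC^*} \Pr[\cA(S_{D^*, \vec{c}_0}) = \vec{c}'] \;\geq\; \tfrac{|\cC^*|}{4} e^{-\epsilon n} \;\geq\; \tfrac{2^{kd}}{12} \, e^{-\epsilon n},
\]
which rearranges to $n = \Omega(kd / \epsilon)$, as desired.

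The one delicate step is the averaging: without it, the packing argument would only give $n = \Omega(k/\epsilon)$, as in Theorem~\ref{thm:OmegaK}, because one could only force the learner to be accurate on a single target at a time. The reverse Markov step is what allows selecting a single database $D^*$ that is simultaneously good for an exponentially large family of targets, which is precisely what extracts the extra factor of $d$ in the bound.
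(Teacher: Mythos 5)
Your proof is correct and follows essentially the same route as the paper's: reduce to a proper, exact learner via Lemma~\ref{lem:proper-parities}, then run a packing argument over all $2^{kd}$ target vectors in $(\parity_d)^k$. The only variation is your averaging/reverse-Markov step, which fixes a single unlabeled sample $D^*$ that is simultaneously $1/4$-good for a constant fraction of targets; the paper instead picks, for each $\vec{c}$, some database $D_{\vec{c}}$ on which the learner succeeds with probability $1/2$, and since any two $k$-labeled databases of size $n$ differ in at most $n$ rows regardless of their unlabeled portions, group privacy applies just as well there. Consequently your closing remark is inaccurate: the common $D^*$ is not what extracts the extra factor of $d$ --- that comes solely from the packing family having size $|\parity_d|^k = 2^{kd}$ rather than $2^k$ as in Theorem~\ref{thm:OmegaK} --- and the naive per-target choice of databases already yields $\Omega(kd/\epsilon)$.
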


As we saw in Example~\ref{eg:PureParityUpper}, applying direct sum for $k$-learning parities results in a proper agnostic $(\alpha,\beta,\epsilon)$-PAC $k$-learner for $\parity_d$ with sample complexity $O_{\alpha,\beta,\epsilon}(kd+k\log k)$. As stated by Theorem~\ref{thm:paritiesLowerBound}, this is the best possible (ignoring logarithmic factors and the dependency in $\alpha,\beta,\epsilon$).

\begin{proof}[Proof of Theorem~\ref{thm:paritiesLowerBound}]
The proof is based on a packing argument~\cite{HT10,BKN10}.
Let $\cal A$ be an $(\alpha,\beta,\epsilon)$-PAC $k$-learner for $\parity_d$ with sample complexity $n$. By Lemma \ref{lem:proper-parities}, we may assume $\cA$ is proper and learns the hidden concepts exactly.
%Observe that for every $f\neq g\in\parity_d$ it holds that $\error_{U_d}(f,g)=\frac{1}{2}$.
%Hence, under $U_d$, a proper $k$-learner for $\parity_d$ must (w.h.p.) identify the $k$ target concepts exactly.
%As we only consider the uniform distribution $U_d$, we can assume w.l.o.g.\ that $\cal A$ is a proper learner since an arbitrarily chosen hypothesis $h$
 %can be replaced with ${\rm argmin}_{f\in\parity_d}\{\error_{\mu}(f,h)\}$ (only one such $f\in\parity_d$ can have error less than $1/2$).

For every choice of $k$ parity functions $(c_1,\dots,c_k)=\vec{c}\in(\parity_d)^k$, let $S_{\vec{c}}$ denote a random $k$-labeled database containing $n$ i.i.d.\ elements from $U_d$, each labeled by $(c_1,\dots,c_k)$. By the utility properties of $\cal A$ we have that $\Pr_{U_d,\AAA}[\AAA(S_{\vec{c}})=\vec{c}]\geq\frac{1}{2}$.
In particular, for every $\vec{c}\in(\parity_d)^k$ there exists a database $D_{\vec{c}}$ labeled by $\vec{c}$ s.t.\ $\Pr_{\AAA}[\AAA(S_{\vec{c}})=\vec{c}]\geq\frac{1}{2}$.
By changing the database $D_{\vec{c}}$ to $D_{\vec{c'}}$ one row at a time while applying the differential privacy constraint, we see that
$$\Pr_{\AAA}[\AAA(D_{\vec{c}})=\vec{c'}]\geq\frac{1}{2}e^{-\epsilon n}.$$
Since the above inequality holds for every two databases $D_{\vec{c}}$ and $D_{\vec{c'}}$, we get
$$\frac{1}{2}\geq\Pr_{\AAA}[\AAA(D_{\vec{c}})\neq\vec{c}]\geq(|\parity_d|^k-1)\frac{1}{2}e^{-\epsilon n}.$$
Solving for $n$, this yields $n=\Omega(kd/\epsilon)$.
\end{proof}

\remove{

\section{discussion}

\begin{itemize}
\item results emphasize importance of semi-supervised learning for differential privacy (general constructions can use a large unlabeled sample and a VC-sized labeled sample).
\item consequences for label privacy? for DP when distribution known?
\item open problems.
\end{itemize}
}

\bibliographystyle{abbrv}
\bibliography{references}

\end{document}